\author{
    Markus Brill,\textsuperscript{\rm 1}
    Th\'eo Delemazure,\textsuperscript{\rm 2} 
    Anne-Marie George,\textsuperscript{\rm 3}\\ 
    Martin Lackner,\textsuperscript{\rm 4}
    Ulrike Schmidt-Kraepelin\textsuperscript{\rm 1}
}
\title{Liquid Democracy with Ranked Delegations}
\newcommand{\appSymb}{$\bigstar$}
\pgfplotsset{compat=1.15,
legend image code/.code={
\draw[mark repeat=2,mark phase=2]
plot coordinates {
(0cm,0cm)
(0.15cm,0cm)        %% default is (0.3cm,0cm)
(0.3cm,0cm)         %% default is (0.6cm,0cm)
};%
}}
\tikzstyle{delnode}=[draw= black,circle, thick,inner sep = 5]
\tikzstyle{votnode}=[draw= black, rectangle,inner sep = 7,thick]
\tikzstyle{isonode}=[draw= black, shape=regular polygon, regular polygon sides=3, rotate=0, minimum size=5pt,inner sep=4pt,thick]
\tikzstyle{rank1}=[]
\tikzstyle{rank2}=[]
\tikzstyle{rank3}=[dash pattern=on 0.5pt off 5pt, line cap=round]
\newcommand{\del}{\ensuremath{D}}
\newcommand{\cast}{\ensuremath{C}}
\newcommand{\pathv}{\ensuremath{\mathcal{P}_v}}
\newcommand{\seqv}{\ensuremath{\mathcal{S}_v}}
\newcommand{\fpre}{\ensuremath{\triangleright}\xspace}
\newcommand{\dfd}{DFD\xspace}
\newcommand{\bfd}{BFD\xspace}
\newcommand{\minsumseq}{MinSum\xspace}
\newcommand{\diffusion}{Diffusion\xspace}
\newcommand{\difforder}{\ensuremath{\triangleright_{\text{diff}}}}
\newcommand{\lex}{\ensuremath{\triangleright_{\text{lex}}}\xspace}
\newcommand{\sequences}{\ensuremath{\mathscr{S}}}
\newcommand{\lexrank}{Leximax\xspace}
\newcommand{\lrorder}{\ensuremath{\triangleright_{\sigma}}}
\newcommand{\minsumarb}{Borda\-Branching\xspace}
\newcommand{\minsumarbshort}{BordaBr.\xspace}
\newcommand{\guru}{representative\xspace}
\DeclareMathOperator*{\argmin}{arg\,min}
\DeclareMathOperator*{\argmax}{arg\,max}
\newtheorem{theorem}{Theorem}
\newtheorem{defn}[theorem]{Definition}
\newtheorem*{defn*}{Definition}
\newtheorem{lemma}[theorem]{Lemma}
\newtheorem{proposition}[theorem]{Proposition}
\newtheorem{corollary}[theorem]{Corollary}
\newtheorem*{claim}{Claim}
\newenvironment{claimproof}{%
	\proof}{\endproof}
\newcommand{\mytodo}[2]{\todo[size=\tiny, color=#1!50!white]{#2}\xspace}
\newcommand{\myrevtodo}[2]{{%
		\let\marginpar\marginnote
		\reversemarginpar
		\renewcommand{\baselinestretch}{0.8}%
		\todo[size=\tiny, color=#1!50!white]{#2}\xspace}}
\newcommand{\myinlinetodo}[2]{\todo[size=\small, color=#1!50!white, inline, caption={}]{#2}\xspace}
\newcommand{\registerAuthor}[3]{%
	\expandafter\newcommand\csname #2com\endcsname[1]{\mytodo{#3}{\textsc{#2}: 
	##1}}%
	\expandafter\newcommand\csname 
	#2revcom\endcsname[1]{\myrevtodo{#3}{\textsc{#2}: ##1}}%
	\expandafter\newcommand\csname 
	#2inline\endcsname[1]{\myinlinetodo{#3}{\textsc{#2}: ##1}}%
	\expandafter\newcommand\csname 
	#2inlineLater\endcsname[1]{\lv{\myinlinetodo{#3}{\textsc{#2}: ##1}}}%
}
\definecolor{color2}{RGB}{1,113,0}
\definecolor{maincolor}{RGB}{149,29,19}
\newcommand{\cmark}{\ding{51}}
\newcommand{\xmark}{\ding{55}}
\newcommand{\yes}{\textcolor{color2}{\cmark}}
\newcommand{\no}{\textcolor{maincolor}{\xmark}}
\begin{document}
\maketitle

\begin{abstract}
Liquid democracy is a novel paradigm for collective decision-making that gives agents the choice between casting a direct vote or delegating their vote to another agent. We consider a generalization of the standard liquid democracy setting by allowing agents to specify multiple potential delegates, together with a preference ranking among them. This generalization increases the number of possible delegation paths and enables higher participation rates because fewer votes are lost due to delegation cycles or abstaining agents. In order to implement this generalization of liquid democracy, we need to find a principled way of choosing between multiple delegation paths. In this paper, we provide a thorough axiomatic analysis of the space of \textit{delegation rules}, i.e., functions assigning a feasible delegation path to each delegating agent. In particular, we prove axiomatic characterizations as well as an impossibility result for delegation rules. We also analyze requirements on delegation rules that have been suggested by practitioners, and introduce novel rules with attractive properties. By performing an extensive experimental analysis on synthetic as well as real-world data, we compare delegation rules with respect to several quantitative criteria relating to the chosen paths and the resulting distribution of voting power. Our experiments reveal that delegation rules can be aligned on a spectrum reflecting an inherent trade-off between competing objectives. 
\end{abstract}

\section{Introduction}

Liquid democracy is a novel decision-making paradigm that aims to reconcile the idealistic appeal of {direct democracy} with the practicality of {representative democracy} by giving agents a choice regarding their mode of participation: For every given issue, agents can choose whether they want to vote directly or whether they want to \textit{delegate} their vote to another agent.
The mode of participation can differ for different issues and the choice of mode (including the choice of whom to delegate to) can be altered at any time. This enables a flexible and dynamic scheme of representation on a per-issue basis \citep{blum16liquid}.

An important principle of liquid democracy is the transitivity of delegation (sometimes called \textit{metadelegation}): if $A$ delegates to $B$ and $B$ delegates to $C$, then $C$ has a total voting weight of $3$ (her own vote plus those of $A$ and~$B$). More generally, delegated votes travel along \textit{delegation paths} until they reach an agent who votes directly. This agent at the endpoint of the path (sometimes referred to as ``guru'') then has a voting weight corresponding to all the agents whose delegation paths end up with him or her.   

Introducing the option of delegating one's vote arguably lowers the bar for participation, as it does not require agents to get informed about an issue before making their vote count \citep{Ford02a,BBCV11a,Vals21a}. This is particularly true if delegations can be declared globally and/or for whole subject areas.\footnote{For instance, this is possible in the open-source software \textit{LiquidFeedback} \citep{BKNS14a}, where more fine-grained delegations override global delegations, and direct voting on any particular issue overrides all delegations.}
As a result, liquid democracy has the potential to achieve substantially higher rates of (direct and indirect) participation than direct voting. 

However, vote delegations do not always work out as intended. 
For instance, if a delegation path ends in an agent who \textit{abstains} (i.e., neither delegates nor votes directly), the vote is inevitably lost. 
Moreover, if a group of agents  
delegate to each other in a cyclic fashion and no group member votes directly, they form a ``delegation cycle'' and all their votes, together with the votes of agents whose delegation paths lead to the cycle, are lost.
Finally, delegation paths can potentially get very long, calling into question the degree of trust between the agents at the beginning and end of the path.

To address these issues, we consider an extension of liquid democracy that lets agents declare \textit{multiple} potential delegates, together with a ranking among them specifying the agent's delegation preferences: By ranking potential delegate $X$ higher than potential delegate $Y$, an agent indicates that she would prefer delegating her vote to $X$, but would also be happy with a delegation to $Y$ in the case that the delegation to $X$ leads to one of the problems mentioned in the previous paragraph. 
This functionality was implemented, e.g., in a liquid democracy experiment at \textit{Google} \citep{HaLo15a}.
Allowing agents to declare \textit{ranked delegations} enlarges the space of possible delegation paths and thereby increases the likelihood of ``successful'' delegation paths. In the case that multiple such paths exist, however, we need a principled way to choose among them. This is accomplished by so-called \emph{delegation rules}, which select delegation paths based on the delegation preferences of agents. 

\paragraph{Our Contribution}

We propose a graph-theoretic model capturing the ranked delegation setting and study the space of delegation rules. 
We define \textit{sequence rules} as the subclass of delegation rules that can be rationalized by an ordering over delegation paths that only takes the ranks of edges
into account.
For most delegation rules considered in the literature, we establish their membership in this class by uncovering their respective order. Our systematic approach leads us to introduce novel sequence rules with attractive properties. 

In our axiomatic analysis, we generalize a result by \citet{KoRi20a}, showing that all but one rule studied in our paper satisfy \emph{guru-participation}. We also study the axiom \emph{copy-robustness}, which is motivated by practical considerations. Notwithstanding a strong impossibility result for the subclass of sequence rules, we construct a copy-robust delegation rule by taking a ``global'' approach. Lastly, we give axiomatic characterizations for two sequence rules.

We complement our theoretical results with an extensive experimental analysis using synthetic and real-world data. We empirically compare delegation rules with respect to several quantitative criteria such as the length of chosen paths or the distribution of voting power that results from those paths. Interestingly, we find that the delegation rules form a spectrum, reflecting an inherent trade-off between short paths and paths containing top-ranked delegations.

\paragraph{Related Work}\label{sec:related-work}

As illustrated by \citet{Behr17a}, some of the ideas underlying liquid democracy date back to \citet{Dodg84a}, \citet{Tull67a}, and \citet{miller1969program}.  
Liquid democracy has been studied theoretically, and applied practically, in various ways in recent years \citep{Paulin}. 

Liquid democracy settings with ranked delegations have been considered by several authors. \citet{BeSw15a} 
illustrate the incompatibility of a set of axioms,
several of which we discuss in \Cref{sec:axiomatic}. 
\citet{KoRi20a} focus on two simple delegation rules and on participation axioms. Their observation that \textit{breadth-first delegation} satisfies 
guru-participation also follows from our more general result in \Cref{sec:axiomatic}.
\citet{KKM+21a} 
establish a connection between ranked delegations and branchings in directed graphs, which we build upon in \Cref{sec:branching}, and focus on the computational complexity of finding ``popular'' branchings. 
\citet{CGN20a} propose a generalisation of the ranked delegations setting where agents can express complex delegations involving \textit{sets} of delegates on each level of the ranking. When restricted to our setting, their ``unravelling procedures'' reduce to variants of the rule we call \textit{Diffusion}.
Both Kavitha et al. and Colley et al. assume that agents specify \textit{backup votes}, i.e., votes that are used when no delegation path exists. In contrast, our model does not necessitate this rather demanding assumption. 

Other settings with multiple delegations have been considered as well. 
\citet{GKMP18b} let agents specify multiple delegation options, but without specifying preferences among them. Their objective is to assign delegations in such a way as to minimize the maximal voting weight of agents.  
\citet{KMP21a} consider an epistemic setting in which each agent has a competence level (i.e., probability of making the ``correct'' voting decision) and an approved subset of other agents. They are interested in (possibly randomized) delegation mechanisms that increase the likelihood of a correct decision compared to direct voting. Since delegation is only allowed to more competent agents, there can be no delegation cycles in their model. 
Finally, \citet{ChGr17a} and \citet{BrTa18a} let agents delegate different decisions to different delegates and explore ways to ensure individual rationality.

\section{The Model} \label{sec:model}

We start our exploration of the ranked delegation setting by presenting a graph-theoretic model. 
We focus on a single issue that is to be decided upon and assume that we are given 
(1) the set of agents that are casting a direct vote (\textit{casting} voters) as well as 
(2) for each non-casting agent, a (possibly empty) set of other agents 
together with a ranking over them representing delegation preferences.\footnote{An empty set of potential delegates corresponds to abstaining.} 
Based on this information, we want to assign non-casting agents to casting voters by choosing delegation paths.  
Our model focuses on determining the voting weights of casting voters and is, therefore, independent of any particular method for aggregating the votes of the casting voters. Separating the delegation mechanism from the voting method allows us to analyze the former in isolation, and also simplifies the model.\footnote{As we will see in \Cref{sec:axiomatic}, axioms formulated in more complex models \citep{BeSw15a,KoRi20a,CGN20a} can be translated into ours.}

We represent a ranked delegation instance as a pair \linebreak $(G,r)$, where $G=(C \cup D \cup I 
,E)$ is a directed graph. The set of vertices of $G$ corresponds to the set of all voters (agents) and is partitioned into three sets $C$, $D$, and $I$ with the following properties:
\begin{itemize} 
    \item nodes in $C$ have no outgoing edges in $G$ (we refer to voters in $C$ as \emph{casting voters});
    \item for each node in $D$, there exists at least one path in $G$ to a voter in $C$ (we refer to voters in $D$ as \emph{delegating voters}); % and 
    \item for each node in $I$, there exists no path in $G$ leading to a voter in $C$ (we refer to voters in $I$ as \emph{isolated voters}). 
\end{itemize}
For the set of all voters we write $V = C \cup D \cup I$. Note that for any graph $G=(V,E)$ with a distinguished set $C \subseteq V$ of casting voters, the sets $D$ and $I$ are uniquely defined. 
The second element of the instance, $r$, is a \emph{rank function} on the set of edges that encodes, for every node, a linear order over its set of outgoing edges. Formally, $r: E \rightarrow \mathbb{N}_{\geq 1}$ is a function such that 
$\{r(e) \mid e \in \delta_G^+(v)\} = \{1, \dots, |\delta^+_G(v)|\}$ for all $v \in V \setminus C$, 
where $\delta^+_G(v)$ is the set of outgoing edges of $v$ in~$G$. If $r((v,x))<r((v,y))$ this is interpreted as voter $v$ preferring to delegate to voter $x$ over delegating to voter $y$. 
For an example of a ranked delegation instance, see \Cref{fig:running-example}.

For a non-casting voter $v \in V \setminus \cast$, we denote the set of all simple paths from $v$ to some casting voter by $$\pathv:=\{P \mid P \text{ is a simple } v\text{-}w\text{-path and } w \in \cast\}.$$ Observe that $\mathcal{P}_v$ is empty if and only if $v \in I$. This is in particular the case when $\delta^+_G(v) = \emptyset$ (in which case we call $v$ an \emph{abstaining voter}), but it also happens if $v$ has only outgoing edges towards other isolated voters. 
In \Cref{fig:running-example}, $\mathcal{P}_g = \emptyset$ and 
$\mathcal{P}_d=\{(d,j),(d,e,b,c,i),(d,e,f,k)\}$. \footnote{We interpret paths as \emph{edge} sequences. However, for brevity, we state \emph{node} sequences whenever referring to paths in \Cref{fig:running-example}. {For convenience, we interpret paths as sets of edges in  \Cref{def:confluence}.}}

Our task is now to define sensible rules which assign delegating voters to casting ones (via a path in $G$). 

\begin{defn}
A \emph{delegation rule} $f$ is a function that takes as input a ranked delegation instance $(G,r)$ and a delegating voter $v \in D$ and outputs a path from $v$ to a casting voter, i.e., $f(G,r,v) \in \pathv$ for all $v \in \del$. We also write $f(v)$, whenever the instance $(G,r)$ is clear from the context.  
\end{defn}

Since isolated voters cannot be contained in any of the paths in $\pathv$, their influence on the problem is limited. Thus, it will often be convenient to omit the isolated voters from the instance. However, we argue that the information about their existence should still be reflected within the rank function $r$.%
\footnote{E.g., consider the edge from agent $f$ to agent $k$ in \cref{fig:running-example}. 
If we were to delete all isolated agents and adjust the rank function, the new rank of this edge would be $2$, even though $k$ is only $f$'s fourth choice. Such examples can be made arbitrarily ``extreme.''}
More formally, we define the graph $\bar{G} = (\bar{V},\bar{E})$ with $\bar{V} = C \cup D$ and $\bar{E} = \{(u,v) \in E\mid u,v \in \bar{V}\}$, and 
$\bar{r}(e) = r(e)$ for all $e \in \bar{E}$ 
and refer to $(\bar{G},\bar{r})$ as the \emph{reduced instance} of $(G,r)$.

\newcommand{\first}{node [circle, fill=white, inner sep=0.05cm, pos=0.4] {$1$}}
\newcommand{\second}{node [circle,fill=white, inner sep=0.05cm, pos=0.4] {$2$}}
\newcommand{\third}{node [circle,fill=white, inner sep=0.05cm, pos=0.4] {$3$}}
\newcommand{\fourth}{node [circle,fill=white, inner sep=0.05cm, pos=0.4] {$4$}}

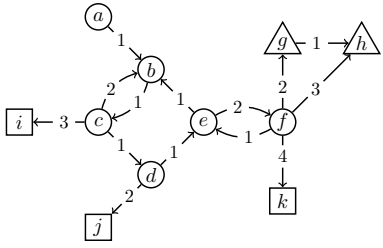
\begin{figure}[t]
    \centering
    \scalebox{.7}{
    \begin{tikzpicture}
 \node[delnode,label=center:\textcolor{black}{\large ${b}$},minimum size=5](2) at (0,1) {};
 \node[delnode,label=center:\textcolor{black}{\large ${d}$}, minimum size=5](5) at (0,-1) {};
  \node[delnode,label=center:\textcolor{black}{\large ${e}$}, minimum size=5](7) at (1,0) {};
   \node[delnode,minimum size=5,label=center:\textcolor{black}{\large ${c}$}](3) at (-1,0) {};
    \node[delnode,minimum size=5,label=center:\textcolor{black}{\large ${a}$}](1) at (-1,2) {};
 \node[votnode,minimum size=8,label=center:\textcolor{black}{\large ${j}$}](6) at (-1,-2) {};
  \node[delnode,minimum size=5,label=center:\textcolor{black}{\large ${f}$}](8) at (2.5,0) {};
   \node[votnode,minimum size=8,label=center:\textcolor{black}{\large ${i}$}](4) at (-2.5,0) {};
  \node[isonode,minimum size=5,label=center:\textcolor{black}{\large ${g}$}](9) at (2.5,1.5) {};
    \node[votnode,minimum size=8,label=center:\textcolor{black}{\large ${k}$}](10) at (2.5,-1.5) {};
        \node[isonode,minimum size=8,label=center:\textcolor{black}{\large ${h}$}](11) at (4,1.5) {};
    
    \draw[thick,->] (1) -- (2) \first;
    \draw[thick,->, bend left] (2) to node [circle, fill=white, inner sep=0.05cm, pos=0.4] {$1$} (3);
    \draw[thick,->] (3) -- (5) \first;
    \draw[thick,->] (5) -- (7) \first;
    \draw[thick,->] (7) -- (2) \first;
     \draw[thick,->,bend left] (8) to node [circle, fill=white, inner sep=0.05cm, pos=0.4] {$1$} (7);
     
     \draw[thick,->, bend left] (3) to node [circle, fill=white, inner sep=0.05cm, pos=0.4] {$2$}(2);
     \draw[thick,->] (5) -- (6)\second;
      \draw[thick,->, bend left] (7) to node [circle, fill=white, inner sep=0.05cm, pos=0.4] {$2$}(8);
       \draw[thick,->] (8) -- (9)\second;
       
        \draw[thick,->] (8) -- (10)\fourth;
         \draw[thick,->] (3) -- (4)\third;

        \draw[thick,->] (9) -- (11)\first;
        \draw[thick,->] (8) -- (11)\third;

\end{tikzpicture}}\caption{Example of a ranked delegation instance $(G,r)$. Casting voters are indicated by squares, 
    delegating voters by circles and isolated voters by triangles.}\label{fig:running-example}
\end{figure}

\section{Delegation Rules} \label{sec:rules}

Before introducing and analyzing several delegation rules, we formalize two fundamental concepts that will help us to classify delegation rules: \emph{confluence} and \emph{sequence rules}. 

To motivate the former, observe that the definition of a delegation rule does not forbid that selected paths cross. Consider for example the instance in Figure \ref{fig:running-example}: There exists a delegation rule assigning path $(c,d,e,f,\dots)$ to agent $c$ but path $(d,e,b,\dots)$ to agent $d$. 
This can be seen as counter-intuitive as it is incompatible with the view that voter $e$ takes the decision on how to delegate incoming delegations based on its delegation preferences.
Also, it can be difficult for voter $e$ to keep track and evaluate the decisions made by all casting voters who have been reached via delegation paths passing through $e$. In this context, \citet{GKMP18b} argue that a single representative per delegating voter is needed ``to preserve the high level of accountability guaranteed by classical liquid democracy.'' 
We define a natural subclass of delegation rules in which this is guaranteed. In reference to \emph{confluent flows} in networks \cite{CRS06a}, we call such delegation rules \emph{confluent}. 

\begin{defn}\label{def:confluence}
A delegation rule $f$ is \emph{confluent} if, for every instance $(G,r)$, every delegating voter $v \in D$ has outdegree exactly one in $(V,\bigcup_{v \in D} f(G, r, v))$. 
\end{defn}

Besides confluence, two natural---and often conflicting---objectives are minimizing the length of paths and minimizing the ranks of edges contained in paths. Short paths are particularly motivated by the fact that delegations are a form of trust propagation and it is debatable to what extent trust is transitive. At the same time, votes should be delegated to highly trusted agents, motivating the selection of paths with top-ranked edges.
Both of these objectives can be evaluated by only considering the sequence of ranks appearing along a path. Indeed, most delegation rules introduced in the literature can be expressed as choosing, among all available delegation paths, the one with the ``best'' rank sequence. We formalize this subclass of delegation rules as \textit{sequence rules}. 

To do so, we need some notation. 
Let $\sequences$ be the set of all finite sequences of numbers in $\mathbb{N}_{\geq 1}$.\footnote{For technical reasons, $\sequences$ also includes the empty sequence $()$ with length $0$ and (by convention) maximum rank $0$.}
We define the \emph{sequence} of a path $P=(e_1, e_2, \dots, e_{\ell})$ as $s(P):=(r(e_1), \dots, r(e_{\ell}))$ and denote the set of all sequences from $v$ to some casting voter by $\seqv := \{s(P) \mid P \in \pathv\}$. 
For a sequence $s$, we write $s_i$ to refer to the $i$-th element of $s$, $|s|$ for the length of $s$, $\max(s)$ for the maximal entry, and use the notation $(s,x)$ to denote $s$ extended by a number $x \in \mathbb{N}_{\geq 1}$ or a sequence $x$.

We call a delegation rule a sequence rule if it, explicitly or implicitly, defines a relation $\fpre$ over $\mathscr{S}$ and, when confronted with $\pathv$, guarantees a unique maximum element of $\seqv$ w.r.t.~$\fpre$ and selects the corresponding path. 
It is clearly sufficient to define $\fpre$ on sets of \emph{comparable} sequences:

\begin{defn}
Two sequences $s, s' \in \mathscr{S}$ are said to be comparable if there exists an instance $(G,r)$ and a vertex $v \in V$ such that $\mathcal{S}_v = \{s,s'\}$. A set $\mathcal{S} \subseteq \sequences$ is said to be comparable if all elements are pairwise comparable.
\end{defn} 

Not all pairs of sequences are comparable; e.g. , there is no instance with $\mathcal{S}_v = \{(1),(1,2)\}$ for some $v$, as the first elements of $(1)$ and $(1,2)$ would correspond to the same edge~$e$ (of rank $1$) that is outgoing from $v$. Thus, the head of $e$ is a casting voter (due to $(1) \in \mathcal{S}_v$) but has an outgoing edge (due to $(1,2) \in \mathcal{S}_v$), a contradiction. 
This observation can be extended to any situation in which $s$ is a prefix of $s'$ (i.e. $|s| < |s'|$ and $s_i = s'_i$ for all $i \in \{1, \dots, |s|\}$).
Proofs for results marked by (\appSymb) can be found in the appendix.

\begin{restatable}[\appSymb]{proposition}{obsCompIff}\label{pr: obsCompIff}
Two distinct non-empty sequences $s,s' \in \mathscr{S}$ are comparable iff none is a prefix of the other. 
\end{restatable}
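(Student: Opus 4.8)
The plan is to prove both directions, relying on one structural observation: once a starting vertex $v$ is fixed, a simple path is completely determined by its rank sequence. Indeed, $r$ restricts to a bijection between the outgoing edges of any non-casting vertex $u$ and $\{1,\dots,|\delta^+_G(u)|\}$, so reading off the ranks $s_1, s_2, \dots$ forces a unique edge at each step: rank $s_1$ selects a unique edge out of $v$, leading to a unique vertex $v_1$; then $s_2$ selects a unique edge out of $v_1$; and so on. Consequently, two paths from $v$ whose sequences agree on their first $t$ entries share their first $t$ edges (hence their first $t+1$ vertices). This will be the workhorse of both directions.

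For the \emph{only if} direction I would argue by contraposition. Assume (w.l.o.g.) that $s$ is a proper prefix of $s'$, with $|s| = k < |s'| = m$, and suppose towards a contradiction that there is an instance and a vertex $v$ with $\mathcal{S}_v = \{s,s'\}$. Pick $P, P' \in \mathcal{P}_v$ with $s(P)=s$ and $s(P')=s'$. Since $s$ and $s'$ agree on their first $k$ entries, the observation above shows that $P$ and $P'$ reach the same vertex $w$ after $k$ steps. But $w$ is the endpoint of $P$, hence a casting voter with no outgoing edges, whereas $P'$ continues past $w$ with a $(k{+}1)$-st edge outgoing from $w$ — a contradiction. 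This is exactly the generalization of the $\{(1),(1,2)\}$ example discussed above.

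The substantive direction is \emph{if}: given distinct non-empty $s = (s_1,\dots,s_k)$ and $s' = (s'_1,\dots,s'_m)$ with neither a prefix of the other, I would explicitly build an instance realizing $\mathcal{S}_v = \{s,s'\}$. First note there is a least index $j$ at which they differ, and $j \le \min(k,m)$ (otherwise the shorter sequence would be a prefix of the longer). The construction creates a shared directed path $v = w_0 \to w_1 \to \dots \to w_{j-1}$ carrying ranks $s_1,\dots,s_{j-1}$, then branches at $w_{j-1}$ into two internally disjoint tails on fresh vertices: one following the remaining ranks $s_j,\dots,s_k$ and ending at a casting voter, the other following $s'_j,\dots,s'_m$ and ending at a second casting voter. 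Since $s_j \ne s'_j$, the two branch edges are genuinely distinct, so both paths are simple and distinct. By construction, the only two simple $v$-paths to casting voters are these two, giving $|\mathcal{P}_v| = 2$ with sequences $s$ and $s'$.

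The main obstacle is purely a bookkeeping one: the rank function must assign to every non-casting vertex a consecutive block $\{1,\dots,d^+\}$, yet the ranks appearing on our paths (for instance $s_j$ and $s'_j$ at $w_{j-1}$) need not form such a block nor start at $1$. I would fix this by padding each vertex with extra outgoing \emph{filler} edges for every missing rank up to its maximum, directing each filler edge to a \emph{fresh} abstaining vertex (a sink with no outgoing edges, hence isolated). As these sinks have no path to a casting voter, the filler edges create no additional $v$-paths to $C$, so $\mathcal{S}_v$ stays exactly $\{s,s'\}$, and freshness of every sink guarantees simplicity and disjointness. The only points needing care are the boundary cases $j=1$ (empty shared prefix, so $v$ itself is the branch vertex) and $k=j$ or $m=j$ (a tail of length zero, i.e. the branch edge lands directly on a casting voter), both of which the construction accommodates without modification.
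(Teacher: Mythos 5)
Your proof is correct and follows essentially the same route as the paper's: the same prefix-coincidence argument (distinct ranks at each node force paths with agreeing sequences to share edges) yielding the casting-voter contradiction for the only-if direction, and the same construction for the if direction (shared prefix path, branch at the first differing index, two disjoint tails to casting voters, with filler edges to fresh abstaining sinks to make the rank function legal). Your explicit treatment of the boundary cases and of why the fillers add no new paths is a slightly more careful write-up of exactly what the paper does.
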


\newcommand{\obsCompIffProof}{
\begin{proof}
First, let $s,s' \in \mathscr{S}$ and assume wlog that $s$ is a prefix of $s'$. Assume for contradiction that there exists an instance $(G,r)$ and a voter $v \in V$ with $P,P' \in \mathcal{P}_v$ such that $s(P)=s$, $s(P')=s'$ and $\{s,s'\} = \mathcal{S}_v$. As $r((v,w))\neq r((v,w'))$ for all $(v,w),(v,w')\in E$ with $w\neq w'$, the first $|s|$ edges of $P$ and $P'$ coincide. Let $w$ be the voter at the end of $P$. As $P \in \mathcal{P}_v$, we know that $w \in C$. However, as $|P'| > |P|$, $w$ needs to have an outgoing edge, a contradiction because casting voters have no outgoing edges. 

For the other direction, consider any $s,s' \in \mathscr{S}$ such that none is a prefix of the other. Let 
$i < \min(|s|,|s'|)$ be the first position at which the sequences differ, i.e., $s_{i} \neq s'_{i}$ and $s_j = s'_j$ for all $j < i$. We will create an instance $(G,r)$, which is later extended by abstaining voters, as follows: We start by introducing four nodes $v,w,x,x'$, 
a $v$-$x$-path $P$ of length $|s|$ that goes via $w$ and a $v$-$x'$-path $P'$ of length $|s'|$ that also goes via $w$. In particular, the two paths $P$ and $P'$ coincide up until node $w$ and then split. Figure~\ref{sfig: reduced graph} shows the two paths with their edge ranks, where delegating voters are depicted as circles and casting voters as squares. 
Lastly, in order to guarantee that the ranks are fitting, i.e., $s(P)=s$ and $s(P')=s'$, for every node with outgoing edge with rank $s_i$, respectively $s'_i$, we introduce $s_i -1$, respectively $s'_i -1$, many edges to abstaining voters. One exception is node $w$ for whom we introduce $\max(s_i,s'_i) - 2$ many edges. See Figure~\ref{sfig: full graph} for an example construction, where abstaining voters are depicted as triangles. Note that our construction is so that $\mathcal{S}_v = \{s,s'\}$. 
\end{proof}
}

\newif\ifobsCompIffAppendix 
\obsCompIffAppendixtrue  

\ifobsCompIffAppendix
\else
\obsCompIffProof
\fi

We are now ready to define the class of sequence rules. 

\begin{defn} \label{def:sequenceRules}
A delegation rule $f$ is a \emph{sequence rule} if there exists a relation $\fpre$ over $\mathscr{S}$, which, if restricted to any comparable subset of $\mathscr{S}$, is a linear order and for any instance $(G,r)$ and $v \in V$ it holds that $s(f(v)) = max_{\fpre} \{\seqv\}$.
\end{defn}

If a relation $\fpre$ proves that $f$ is a sequence rule, we also say that $\fpre$ \emph{induces} $f$.

\subsection{Basic Sequence Rules}

In the following we describe rules which are induced by natural linear orders over the entire set $\mathscr{S}$. The first two rules have also been considered by 
\citet{KoRi20a}. Let $\lex$ be the \textit{lexicographic order} over~$\sequences$. That is, for two distinct sequences $s,s' \in \sequences$, let $i$ be the smallest index such that $s_i \neq s'_i$. Then, $s \;\lex s'$ iff $s_i < s'_i$. If no such index exists, then $s \lex s'$ iff $|s|<|s'|$.

\smallskip
\noindent\textbf{Depth-first delegation (DFD):} The sequence rule induced by the lexicographic order $\lex$ over $\mathscr{S}$.

\smallskip 
\noindent\textbf{Breadth-first delegation (BFD):} The sequence rule induced by the linear order over $\sequences$ that prefers shorter sequences over longer ones and uses $\lex$ for tie-breaking.

\smallskip

While \dfd focuses on the objective of selecting delegations with top ranks, \bfd primarily selects short paths. 
As a consequence, each of them have at least one obvious drawback: \dfd is mostly indifferent about the length of a path and may select very long paths, whereas \bfd mostly ignores the ranks of a path and hence can select paths containing edges that are ranked extremely low. 

We propose a third, natural delegation rule, which arguably strikes a better balance between the two objectives of minimizing the length and the ranks of a path.

\smallskip
\noindent\textbf{\minsumseq:} The sequence rule induced by the linear order over $\sequences$ that orders sequences by their sum of ranks and uses the lexicographic order for tie-breaking.
\smallskip

To illustrate these rules, consider again the example instance in \Cref{fig:running-example}. For delegating voter $a$, 
\dfd selects the path $(a,b,c,d,e,f,k)$ (with sequence $(1,1,1,1,2,4)$), 
\bfd selects the path $(a,b,c,i)$ (with sequence $(1,1,3)$), and
\minsumseq selects $(a,b,c,d,j)$ (with sequence $(1,1,1,2)$).

Next, we characterize confluent sequence rules.

\begin{restatable}[\appSymb]{theorem}{prConfluent}\label{pr: prefix-invariant}
A sequence rule induced by $\fpre$ is confluent if and only if for all $s \in \sequences$ we have (i) if $s$ is comparable to some $s' \in \sequences$ and $x \in \mathbb{N}_{\geq 1}$, then $s \fpre s' \Leftrightarrow (x, s) \fpre (x, s')$ and (ii) if $s' \in \sequences$ and $s$ is comparable to $(s', s)$, then $s \fpre (s', s)$.
\end{restatable}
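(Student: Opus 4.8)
The plan is to pass through an intermediate, purely \emph{local} reformulation of confluence and then translate it into the two conditions on $\triangleright$. Call $f$ \emph{locally consistent} if for every instance and every $v\in\del$, writing the first edge of $f(v)$ as $e_1=(v,v_1)$, we have $f(v)=(e_1,f(v_1))$, i.e.\ the chosen path at $v$ is $e_1$ followed by the path the rule itself chooses at $v_1$. I would first show that $f$ is confluent if and only if it is locally consistent. In the graph $H=(V,\bigcup_{v\in\del}f(v))$, confluence means every delegating voter has outdegree one, so from each vertex a unique $H$-path reaches a casting voter; as both the $v_1$-suffix of $f(v)$ and $f(v_1)$ are such paths, they coincide, giving local consistency. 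Conversely, iterating local consistency shows that the suffix of any chosen path from a vertex $w$ equals $f(w)$, so all chosen paths through $w$ leave $w$ via the first edge of $f(w)$, i.e.\ outdegree one. This step is routine.

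For the direction ``(i) and (ii) $\Rightarrow$ locally consistent'', fix $v$ and write $f(v)=(e_1,Q)$ with $e_1=(v,v_1)$ of rank $x$ and $Q$ a $v$-avoiding path from $v_1$. I would argue $Q=f(v_1)$ in two steps. First, $Q$ is $\triangleright$-maximal among all $v$-avoiding $v_1$-paths: any competitor $Q'$ yields $(x,s(Q'))\in\mathcal{S}_v$, and optimality of $f(v)$ together with (i) gives $s(Q)\triangleright s(Q')$. Second, the global optimum $f(v_1)$ is itself $v$-avoiding. Suppose not and decompose its sequence at its (unique) visit to $v$ as $(q,u)$, where $|q|\ge 1$ and $u=s(R)$ for some $R\in\pathv$. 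Optimality of $f(v_1)$ and of $f(v)$ give $(q,u)\triangleright s(Q)$ and $(x,s(Q))\triangleright u$; prepending $q$ to the second via (i) and chaining yields $(q,x,s(Q))\triangleright s(Q)$, which contradicts (ii) applied with $s'=(q,x)$. Combining the two steps, $Q$ is the global optimum, so $f(v)=(e_1,f(v_1))$.

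For the converse I would establish the two conditions separately, exploiting that local consistency (equivalently, confluence) is already available. For (i), given comparable $s,s'$ and $x$, I use \Cref{pr: obsCompIff} to build a vertex $w$ with $\mathcal{S}_w=\{s,s'\}$ and attach a single predecessor $v$ with $r((v,w))=x$, so that $\mathcal{S}_v=\{(x,s),(x,s')\}$; confluence forces the $w$-suffix of $f(v)$ to coincide with $f(w)$, which is exactly the equivalence $s\triangleright s'\Leftrightarrow(x,s)\triangleright(x,s')$. For (ii), I first note that a violation $(s',s)\triangleright s$ can be reduced, by stripping common leading ranks using (i), to the case $s'_1\neq s_1$ (the suffix strictly shortens at each strip, and comparability guarantees a differing rank appears). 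In that case I construct a symmetric two-cycle: two internally disjoint chains realizing $s'$ from $a$ to $b$ and from $b$ to $a$, together with $s$-exits attached at $a$ and at $b$. Then $\mathcal{S}_a=\mathcal{S}_b=\{s,(s',s)\}$, so $(s',s)\triangleright s$ forces $f(a)$ to run through $b$ and $f(b)$ to run through $a$; consequently $f(b)$, passing through $a$, leaves $a$ via $a$'s exit edge while $f(a)$ leaves $a$ via the chain edge (distinct because $s'_1\neq s_1$), producing outdegree two and contradicting confluence.

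The part I expect to be most delicate is the back-loop exclusion in the sufficiency direction: the chained inequality producing $(q,x,s(Q))\triangleright s(Q)$ manipulates sequences such as $(q,x,s(Q))$ that need not a priori be comparable to the sequences they are compared with, so applications of (i) and (ii) must be justified through the prefix characterization of \Cref{pr: obsCompIff} and the divergence structure of the involved paths. The second nontrivial point is the rank bookkeeping in the gadget for (ii): realizing the prescribed rank sequences along the chains and exits requires padding with abstaining voters exactly as in \Cref{pr: obsCompIff}, and the reduction to $s'_1\neq s_1$ is what makes the final crossing argument go through.
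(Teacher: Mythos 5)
Your proof is correct; its necessity direction essentially coincides with the paper's, while its sufficiency direction takes a genuinely different and somewhat cleaner route. For necessity (confluence $\Rightarrow$ (i),(ii)) you use the paper's own gadgets: a rank-$x$ predecessor attached to a vertex $w$ with $\mathcal{S}_w=\{s,s'\}$ for (i), and the symmetric two-cycle with $\mathcal{S}_a=\mathcal{S}_b=\{s,(s',s)\}$ for (ii). You are in fact more careful than the paper here: the two-cycle as drawn needs the cycle edge and the exit edge at each node to be \emph{distinct} outgoing edges, i.e.\ $s_1\neq s'_1$ (note that, e.g., $s'=(1)$, $s=(1,2)$ satisfies the hypothesis of (ii) but breaks the disjoint-paths picture, since equal ranks force equal edges), and your preliminary stripping of common leading ranks via the already-established property (i) is exactly what makes the gadget legitimately realizable; the paper passes over this point. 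For sufficiency ((i),(ii) $\Rightarrow$ confluence), the paper argues by contradiction from a confluence violation and performs a case analysis on two crossing paths $P$ (for a node $u$) and $P'$ (for the branch node $v$): its Case 1 (no re-crossing) uses only (i), while its Case 2 (re-crossing at a node $w$) needs a five-piece decomposition $(s^{(1)},s^{(2)},s^{(3)})$ and $(s'^{(1)},s'^{(2)})$ together with (i), (ii) and transitivity. Your explicit reduction to local consistency ($f(v)=(e_1,f(v_1))$, equivalent to confluence) restricts the whole argument to a pair of \emph{adjacent} nodes: your Step 1 plays the role of the paper's Case 1 and your back-loop exclusion plays the role of its Case 2, but with only three sequence pieces $q$, $x$, $s(Q)$ and a shorter chain of deductions. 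The algebraic core is identical in both proofs (prepend with (i), chain transitivity over a pairwise-comparable triple, contradict (ii)), and the comparability caveat you flag does resolve exactly as you predict: $s(Q)$ cannot be a prefix of $(q,x,s(Q))$, since matching ranks from a common start node force matching edges, so $Q$ would have to follow the $q$-path into $v$, contradicting that $Q$ avoids $v$. This is the same level of justification the paper itself supplies when it asserts, without proof, that its three sequences in Case 2 are pairwise comparable.
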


\newcommand{\prConfluentProof}{
\begin{proof}
First, we show that any confluent sequence rule satisfies the two properties (i) and (ii). To show (i), consider two comparable sequences $s$ and $s'$ and construct a reduced instance as shown in \cref{sfig: inst1}. For details on how to construct a feasible instance from such a graph, consider the proof of \cref{pr: obsCompIff}. In the created instance it holds that $\mathcal{S}_v = \{s, s'\}$ and $\mathcal{S}_u = \{(x, s), (x, s')\}$. In particular, any path from $u$ to a casting voter involves $v$. Assume $s \fpre s'$, i.e., $v$'s vote is assigned to casting voter $c$. Then, by confluence $u$'s vote also has to be delegated to $c$, which implies $(x, s) \fpre (x, s')$. Similarly, assume $(x,s) \fpre (x,s')$. Then, $u$'s vote is assigned to casting voter $c$. By confluence, $v$'s vote needs to be assigned to casting voter $c$ as well, hence $s \fpre s'$. 

To show (ii), consider $s,s' \in \sequences$ such that $s$ and $(s',s)$ are comparable and construct a graph as shown in \cref{sfig: inst2}. By using dummy voters, this graph can be extended to a feasible instance of our problem. In this instance, $\mathcal{S}_v = \mathcal{S}_u = \{(s', s), (s)\}$. If $(s', s) \fpre s$, then $c$ is the assigned casting voter of $u$ and $c'$ the assigned casting voter of $v$. However, this is a contradiction to confluence because $v$ is on the path from $u$ to $c$ and thus $u$ must have the same assigned casting voter as $v$. Thus $s \fpre (s', s)$, which proves property (ii).

Now, consider a sequence rule induced by a relation $\fpre$ that satisfies properties (i) and (ii), and assume that there exists an instance in which this rule is not confluent. That is, there exists an instance with three nodes $u,u'$ and $v$, such that $(v,w)$ is on the chosen delegation path $P$ for $u$, while the chosen delegation path $P'$ for $u'$ uses some edge $(v,w')$ with $w' \neq w$. Observe that we can assume without loss of generality that $u' = v$, since the first edge of the path for $v$ uses an edge that is different to at least one of $(v,w)$ and $(v,w')$. Hence, we assume $u'=v$ in the following. We distinguish two cases.

\textbf{Case 1:} $P'$ does not pass over any node contained in the path $P$ from $u$ to $v$. Let $s = s(P) = (s^{(1)}, s^{(2)})$ be the sequence of ranks of $P$, such that $s^{(1)}$ is the sequence for the subpath from $u$ to $v$ and $s^{(2)}$ the sequence for the subpath from $v$ to $c$. Let $s' = s(P')$ be the sequence of ranks of $P'$. Consider \cref{sfig: inst4} for a sketch of the situation. Note that the two paths can branch and merge multiple times and in particular $c$ and $c'$ can be equal. However, by the case assumption we have that $(s^{(1)}, s') \in \mathcal{S}_u$. Because the chosen path for $v$ is $P'$, we know that $s' \fpre s^{(2)}$. Property (i) (applied multiple times) then implies that $(s^{(1)}, s') \fpre (s^{(1)}, s^{(2)}) = s$ which is a contradiction to $P$ being chosen for $u$.

\textbf{Case 2:} $P'$ contains some node from $P$ within the subpath from $u$ to $v$. Let $w$ be the first node on the path $P$ from $u$ to $v$ that $P'$ contains. Then, let $s=(s^{(1)},s^{(2)},s^{(3)})$ be such that $s^{(1)}$ corresponds to the sequence induced by $P$ from $u$ to $w$, $s^{(2)}$ to the sequence induced by $P$ from $w$ to $v$, and $s^{(3)}$ to the sequence induced by $P$ from $v$ to $c$. In particular, if $u=w$ then $s^{(1)}$ is empty. 
Similarly, we denote $s' = s(P') = (s'^{(1)}, s'^{(2)})$, where $s'^{(1)}$ is the sequence of the subpath of $P'$ from $v$ to $w$ and $s'^{(2)}$ the sequence from $w$ to $c'$. Consider \cref{sfig: inst5} for a sketch of the situation.
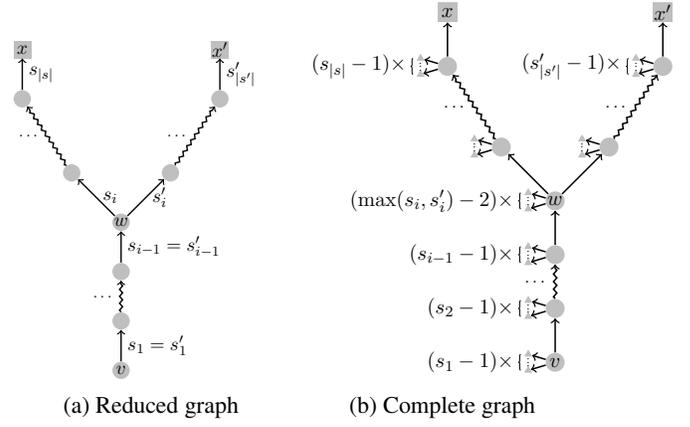
\begin{figure}[t!]
     \begin{subfigure}[b]{0.45\linewidth}
     \resizebox{!}{4.5cm}{
    \begin{tikzpicture}
        \tikzstyle{vertex}=[circle,fill=black!25,minimum size=10pt,inner sep=0pt]
        \tikzstyle{caster}=[rectangle,fill=black!25,minimum size=10pt,inner sep=0pt]
        \tikzstyle{edge} = [draw,thick,->]
        \tikzstyle{edgewavy} = [draw,thick,->,decorate, decoration={
                                    zigzag,
                                    segment length=4,
                                    amplitude=.9,post=lineto,
                                    post length=2pt
                                }]
        \tikzstyle{weight} = [font=\small]
        
        % \node[] at (-2, 2){$G$};
        
        \foreach \pos/\name/\lab in {{(0,1)/v/v}, {(0,2)/a/}, {(0,3)/b/}, {(0,4)/c/w}, 
                                {(-1,5)/s/}, {(-2,6.5)/t/}, 
                                {(1,5)/d/}, {(2,6.5)/e/}}
                            \node[vertex] (\name) at \pos{\large$\lab$};
        \foreach \pos/\name/\lab in {{(-2,7.5)/u/x}, {(2,7.5)/f/x'}}
                            \node[caster] (\name) at \pos{\large$\lab$};
        
        \foreach \source/ \dest /\weight in {a/b/, s/t/, d/e/}
                \path[edgewavy] (\source) -- node[weight, left] {$\dots$} (\dest);
                
        \foreach \source/ \dest /\weight in {v/a/s_1=s'_1, b/c/s_{i-1}=s'_{i-1}, c/s/s_{i}, c/d/s'_{i}, t/u/s_{|s|}, e/f/s'_{|s'|}}
                \path[edge] (\source) -- node[weight, right] {\large$\weight$} (\dest);
    \end{tikzpicture}
    }
    \caption{Reduced graph}\label{sfig: reduced graph}
    \end{subfigure}
    \begin{subfigure}[b]{0.45\linewidth}
     \resizebox{!}{5cm}{
    \begin{tikzpicture}
        \tikzstyle{vertex}=[circle,fill=black!25,minimum size=10pt,inner sep=0pt]
        \tikzstyle{caster}=[rectangle,fill=black!25,minimum size=10pt,inner sep=0pt]
        \tikzstyle{abstainer}=[shape=regular polygon, regular polygon sides=3, rotate=0, fill=black!25, minimum size=5pt,inner sep=0pt]
        \tikzstyle{edge} = [draw,thick,->]
        \tikzstyle{edgewavy} = [draw,thick,->,decorate, decoration={
                                    zigzag,
                                    segment length=4,
                                    amplitude=.9,post=lineto,
                                    post length=2pt
                                }]
        \tikzstyle{weight} = [font=\small]
        
        \foreach \pos/\name/\lab in {{(0,1)/v/v}, {(0,2)/a/}, {(0,3)/b/}, {(0,4)/c/w}, 
                                {(-1,5)/s/}, {(-2,6.5)/t/}, 
                                {(1,5)/d/}, {(2,6.5)/e/}}
                            \node[vertex] (\name) at \pos{\large$\lab$};
                            
        \foreach \pos/\name/\lab in {{(-2,7.5)/u/x}, {(2,7.5)/f/x'}}
                            \node[caster] (\name) at \pos{\large$\lab$};
                            
        \foreach \pos/\name/\lab in {{(-.5,.85)/v1/}, {(-.5,1.15)/v2/},
                                {(-.5,1.85)/a1/}, {(-.5,2.15)/a2/},
                                {(-.5,2.85)/b1/}, {(-.5,3.15)/b2/},
                                {(-.5,3.85)/c1/}, {(-.5,4.15)/c2/},
                                {(.5,4.85)/d1/}, {(.5,5.15)/d2/},
                                {(1.5,6.35)/e1/}, {(1.5,6.65)/e2/},
                                {(-1.5,4.85)/s1/}, {(-1.5,5.15)/s2/},
                                {(-2.5,6.35)/t1/}, {(-2.5,6.65)/t2/}}
                            \node[abstainer] (\name) at \pos{};
                            
        \newcommand\mydots{\ifmmode\ldots\else\makebox[1em][c]{.\hfil.\hfil.}\fi}
        \foreach \pos/\name/\lab in {{(-.5,1)/vdots/\rotatebox{90}{\tiny\mydots}},
                                {(-.5,2)/adots/\rotatebox{90}{\tiny\mydots}},
                                {(-.5,3)/bdots/\rotatebox{90}{\tiny\mydots}},
                                {(-.5,4)/cdots/\rotatebox{90}{\tiny\mydots}},
                                {(.5,5)/ddots/\rotatebox{90}{\tiny\mydots}},
                                {(1.5,6.5)/edots/\rotatebox{90}{\tiny\mydots}},
                                {(-1.5,5)/sdots/\rotatebox{90}{\tiny\mydots}},
                                {(-2.5,6.5)/tdots/\rotatebox{90}{\tiny\mydots}}}
                            \node (\name) at \pos{\large\lab};
                            
        \foreach \x/\y/\lab in {{-.5/1/(s_1 - 1) \times},
                                {-.5/2/(s_2 - 1) \times},
                                {-.5/3/(s_{i-1} - 1) \times},
                                {-.5/4/(\max(s_{i},s'_{i}) - 2) \times},
                                {1.5/6.5/(s'_{|s'|} - 1) \times},
                                {-2.5/6.5/(s_{|s|} - 1) \times}}
                            \draw [decorate,decoration={brace,amplitude=1pt},xshift=-4pt,yshift=0pt]
(\x,\y-0.15) -- (\x,\y+0.15) node [black,midway, left]{\large $\lab$};
        
        \foreach \source/ \dest /\weight in {a/b/, s/t/, d/e/}
                \path[edgewavy] (\source) -- node[weight, left] {$\dots$} (\dest);
                
        \foreach \source/ \dest /\weight in {v/a/r_1=r'_1, b/c/s_{i-1}=s'_{i-1}, c/s/r_{i}, c/d/r'_{i}, 
                                            t/u/r_{k}, e/f/r'_{\ell},   
                                            v/v1/, v/v2/,
                                            a/a1/, a/a2/,                      
                                            b/b1/, b/b2/,                       
                                            c/c1/, c/c2/,                       
                                            d/d1/, d/d2/,                      
                                            e/e1/, e/e2/,                      
                                            s/s1/, s/s2/,                      
                                            t/t1/, t/t2/}
                \path[edge] (\source) -- node[weight, right] {} (\dest);
    \end{tikzpicture}
    }
    \caption{Complete graph}\label{sfig: full graph}
    \end{subfigure}
    \caption{
    Example for the construction in the proof of Proposition~\ref{pr: obsCompIff}.
    Casting voters are indicated by rectangles
    and delegating voters are indicated by circles.
    } \label{fig: obsCompIff}
\end{figure}
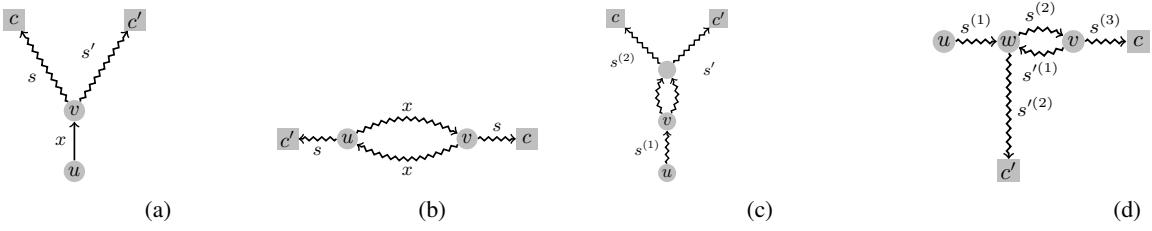
\begin{figure*}[t!]
\centering %\hspace*{0.5cm}
    \begin{subfigure}[b]{0.24\linewidth}
     \resizebox{!}{2.3cm}{
    \begin{tikzpicture}[auto]
        \tikzstyle{vertex}=[circle,fill=black!25,minimum size=10pt,inner sep=0pt]
        \tikzstyle{caster}=[rectangle,fill=black!25,minimum size=10pt,inner sep=0pt]
        \tikzstyle{edge} = [draw,thick,->]
        \tikzstyle{edgewavy} = [draw,thick,->,decorate, decoration={
                                    zigzag,
                                    segment length=4,
                                    amplitude=.9,post=lineto,
                                    post length=2pt
                                }]
        \tikzstyle{weight} = [font=\small]
        
        \foreach \pos/\name/\lab in {{(1,1)/u/u}, {(1,2)/v/v}}
                            \node[vertex] (\name) at \pos{$\lab$};
        \foreach \pos/\name/\lab in {{(0,3.5)/s/c}, {(2,3.5)/t/c'}}
                            \node[caster] (\name) at \pos{$\lab$};
        
        \foreach \source/ \dest /\weight in {v/t/s', v/s/s}
                \path[edgewavy] (\source) -- node[weight] {$\weight$} (\dest);
            \foreach \source/ \dest /\weight in {u/v/x}
                \path[edge] (\source) -- node[weight] {$\weight$} (\dest);
                
    \end{tikzpicture}
    }
    \caption{}\label{sfig: inst1}
    \end{subfigure}
    \hspace{-.8cm} 
    % \hfill
    \begin{subfigure}[b]{0.24\linewidth}
     \resizebox{3.6cm}{!}{
      \begin{tikzpicture}[auto]
        \tikzstyle{vertex}=[circle,fill=black!25,minimum size=10pt,inner sep=0pt]
        \tikzstyle{caster}=[rectangle,fill=black!25,minimum size=10pt,inner sep=0pt]
        \tikzstyle{edge} = [draw,thick,->]
        \tikzstyle{edgewavy} = [draw,thick,->,decorate, decoration={
                                    zigzag,
                                    segment length=4,
                                    amplitude=.9,post=lineto,
                                    post length=2pt
                                }]
        \tikzstyle{weight} = [font=\small]
        
        \foreach \pos/\name/\lab in {{(1,2)/u/u}, {(3,2)/v/v}}
                            \node[vertex] (\name) at \pos{$\lab$};
        \foreach \pos/\name/\lab in {{(0,2)/s/c'}, {(4,2)/t/c}}
                            \node[caster] (\name) at \pos{$\lab$};
        
        \foreach \source/ \dest /\weight in {u/v/x, v/u/x}
                \path[edge] (\source)  edge [edgewavy,bend left=30] node[weight] {$\weight$} (\dest);
        
        \foreach \source/ \dest /\weight in {u/s/s, v/t/s}
                \path[edgewavy] (\source)  -- node[weight] {$\weight$} (\dest);
    \end{tikzpicture} 
    }
    \caption{}\label{sfig: inst2}
    \end{subfigure}
    % \hfill
    \begin{subfigure}[b]{0.24\linewidth}
     \resizebox{!}{2.3cm}{
    \begin{tikzpicture}[auto]
        \tikzstyle{vertex}=[circle,fill=black!25,minimum size=10pt,inner sep=0pt]
        \tikzstyle{caster}=[rectangle,fill=black!25,minimum size=10pt,inner sep=0pt]
        \tikzstyle{edge} = [draw,thick,->]
        \tikzstyle{edgewavy} = [draw,thick,->,decorate, decoration={
                                    zigzag,
                                    segment length=4,
                                    amplitude=.9,post=lineto,
                                    post length=2pt
                                }]
        \tikzstyle{weight} = [font=\small]
        
        \foreach \pos/\name/\lab in {{(1,1)/u/u}, {(1,2)/v/v}, {(1,3)/w/}}
                            \node[vertex] (\name) at \pos{$\lab$};
        \foreach \pos/\name/\lab in {{(0,4)/s/c},{(2,4)/t/c'}}
                            \node[caster] (\name) at \pos{$\lab$};
        
        \foreach \source/ \dest /\weight in {{u/v/s^{(1)}}, {w/s/s^{(2)}}}
                \path[edgewavy] (\source) -- node[weight] {$\weight$} (\dest);
                
                        \foreach \source/ \dest /\weight in {{w/t/s'}}
                \path[edgewavy] (\source) -- node[weight,xshift=.6cm,yshift=-.7cm] {$\weight$} (\dest);
                
                \foreach \source/ \dest /\weight in {v/w/test}
                \path[edge] (\source)  edge [edgewavy, bend left=30] node[weight] {} (\dest);
                
                \foreach \source/ \dest /\weight in {v/w/test}
                \path[edge] (\source)  edge [edgewavy, bend right=30] node[xshift=.8cm] {} (\dest);
    \end{tikzpicture}}\caption{}\label{sfig: inst4}
    \end{subfigure} \begin{subfigure}[b]{0.3\linewidth}
     \resizebox{!}{2.5cm}{
     \begin{tikzpicture}[auto]
        \tikzstyle{vertex}=[circle,fill=black!25,minimum size=10pt,inner sep=0pt]
        \tikzstyle{caster}=[rectangle,fill=black!25,minimum size=10pt,inner sep=0pt]
        \tikzstyle{edge} = [draw,thick,->]
        \tikzstyle{edgewavy} = [draw,thick,->,decorate, decoration={
                                    zigzag,
                                    segment length=4,
                                    amplitude=.9,post=lineto,
                                    post length=2pt
                                }]
        \tikzstyle{weight} = [font=\small]
        
        \foreach \pos/\name/\lab in {{(1,2)/u/u}, {(3,2)/v/v}, {(2,2)/w/w}}
                            \node[vertex] (\name) at \pos{$\lab$};
        \foreach \pos/\name/\lab in {{(2,0)/s/c'}, {(4,2)/t/c}}
                            \node[caster] (\name) at \pos{$\lab$};
        
        \foreach \source/ \dest /\weight in {w/v/s^{(2)}, v/w/s'^{(1)}}
                \path[edge] (\source)  edge [bend left=30,edgewavy] node[weight] {$\weight$} (\dest);
        
        \foreach \source/ \dest /\weight in {w/s/s'^{(2)}, v/t/s^{(3)}, u/w/s^{(1)}}
                \path[edgewavy] (\source)  -- node[weight] {$\weight$} (\dest);
    \end{tikzpicture} 
    }
    \caption{}\label{sfig: inst5}
    \end{subfigure}
    \caption{
    The left two images show constructions as used in the first part of the proof of Proposition~\ref{pr: obsCompIff}. The right two images show the situation in the case distinction in the second part of the proof.
    } \label{fig:prefix-invariant}
\end{figure*}
It holds that $s^{(3)} \in \mathcal{S}_v$ and hence
the choice of $P'$ for $v$ implies that $s' = (s'^{(1)}, s'^{(2)}) \fpre s^{(3)}$.
By using property (i) (multiple times) we obtain: $(s^{(2)}, s'^{(1)}, s'^{(2)}) \fpre (s^{(2)}, s^{(3)})$. Note that, for this to be true these sequences do not have to be feasible sequences within the instance we are considering. 
Moreover, because $s'^{(2)}$ and $s^{(2)}$ are comparable by construction, we have by property (ii): $s'^{(2)} \fpre (s^{(2)}, s'^{(1)}, s'^{(2)})$.
Since all three sequences are pairwise comparable, we can apply the transitivity of $\fpre$ and obtain $s'^{(2)} \fpre (s^{(2)}, s'^{(1)}, s'^{(2)}) \fpre (s^{(2)}, s^{(3)})$. By applying property (i) again (multiple times), we obtain $(s^{(1)}, s'^{(2)}) \fpre (s^{(1)}, s^{(2)}, s^{(3)}) = s$. Since, by the choice of $w$, there exists some path in $\mathcal{P}_u$ with sequence $(s^{(1)}, s'^{(2)})$, this is a contradiction to the choice of the path $P$ for $u$.
\end{proof}
}

\newif\ifprConfluentAppendix 
\prConfluentAppendixtrue

\ifprConfluentAppendix
\else 
\prConfluentProof
\fi

Making use of this characterization we show:
\begin{restatable}[\appSymb]{corollary}{obsBasicConfluent}\label{cor: BFD-MinSUm-confluent}
\bfd and \minsumseq are confluent. \dfd is not confluent.
\end{restatable}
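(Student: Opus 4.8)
The plan is to apply the characterization of confluent sequence rules from \Cref{pr: prefix-invariant}: a sequence rule induced by $\fpre$ is confluent if and only if $\fpre$ satisfies the prepend-invariance property (i) and the prefix-preference property (ii). So for each of the three rules I would simply check these two conditions against the explicit orders that induce them, keeping in mind that $s(f(v)) = \max_{\fpre} \seqv$, so $a \fpre b$ is to be read as ``$a$ is weakly preferred to $b$.''

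For \bfd and \minsumseq I would verify (i) and (ii) directly. Both inducing orders have the same shape: a primary key (sequence length for \bfd, sum of ranks for \minsumseq) with $\lex$ as tie-breaker. For (i), prepending a common $x \in \mathbb{N}_{\geq 1}$ shifts the length of both $s$ and $s'$ by $1$ and their rank-sum by $x$, so the primary-key comparison between $(x,s)$ and $(x,s')$ agrees with that between $s$ and $s'$; when the primary keys tie, the comparison falls back to $\lex$, which is itself prepend-invariant (the first differing index merely moves right by one). Hence $s \fpre s' \Leftrightarrow (x,s) \fpre (x,s')$. For (ii), whenever $s'$ is non-empty the sequence $(s',s)$ is strictly longer than $s$ and, since every rank is at least $1$, also has strictly larger rank-sum; in either order this makes $s$ strictly preferred, giving $s \fpre (s',s)$. (The degenerate case $s' = ()$ forces $(s',s) = s$ and is trivial.) Thus both rules satisfy (i) and (ii) and are therefore confluent.

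For \dfd, induced by $\lex$, property (i) still holds, since prepending a common element preserves the lexicographic comparison; hence any obstruction must come from (ii), and I would exhibit an explicit violation. Take $s = (2)$ and $s' = (1)$, so that $(s',s) = (1,2)$. By \Cref{pr: obsCompIff} the distinct sequences $(2)$ and $(1,2)$ are comparable, since neither is a prefix of the other. But $(1,2) \lex (2)$, because they first differ at position $1$, where $1 < 2$; thus under \dfd the concatenated sequence $(s',s)$ is preferred to $s$, i.e. $s \not\fpre (s',s)$. This falsifies (ii), and \Cref{pr: prefix-invariant} then yields that \dfd is not confluent.

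The argument is largely mechanical once \Cref{pr: prefix-invariant} is available; the only places requiring genuine care are in (i), where one must confirm that the lexicographic tie-breaker---and not merely the primary key---is prepend-invariant, and in the \dfd counterexample, where one must check via \Cref{pr: obsCompIff} that the two witnessing sequences are truly comparable rather than standing in a prefix relation (so that the failure of (ii) is admissible and not vacuous).
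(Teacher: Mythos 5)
Your proof is correct and takes essentially the same approach as the paper's: both apply the characterization in \Cref{pr: prefix-invariant}, verify property (i) for \bfd and \minsumseq by noting that the primary key (length, resp.\ rank sum) and the lexicographic tie-breaker are preserved under prepending a common first rank, verify property (ii) since $(s',s)$ is strictly longer and has strictly larger rank sum than $s$, and refute confluence of \dfd via the identical counterexample $(1,2) \lex (2)$. Your added checks---that \dfd still satisfies (i) and that $(2)$ and $(1,2)$ are genuinely comparable by \Cref{pr: obsCompIff}---are careful touches but not a departure from the paper's argument.
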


\newcommand{\obsBasicConfluentProof}{
\begin{proof}
Let us first consider BFD and MinSum.
By Proposition~\ref{pr: prefix-invariant}, it is sufficient to check that BFD and \minsumseq satisfy the two properties (i) and (ii). 
Let $\fpre_{BFD}$ and $\fpre_{MinSum}$ denote the corresponding order relations over $\sequences$ for BFD and MinSum, respectively.
For property (i) of Proposition~\ref{pr: prefix-invariant}, consider two comparable sequences $s$ and $s'$ and $x \in \mathbb{N}_{\geq 1}$. If $s \fpre_{BFD} s'$ because $s$ is shorter than $s'$, or $s \fpre_{lex} s'$ in case of equal length, then $(x,s)$ is shorter than $(x,s')$, or $(x,s) \fpre_{lex} (x,s')$, respectively. Thus, $(x,s) \fpre_{BFD} (x,s')$.
Similarly, if $s \fpre_{MinSum} s'$ because $s$ has a lower rank sum than $s'$, or $s \fpre_{lex} s'$ in case of equal rank sum, then $(x,s)$ has a lower rank sum than $(x,s')$, or $(x,s) \fpre_{lex} (x,s')$, respectively. Thus, $(x,s) \fpre_{MinSum} (x,s')$. 
For property (ii) of Proposition~\ref{pr: prefix-invariant}, consider two sequences $s,s' \in \sequences$ such that $s$ and  $(s',s)$ are comparable.
Since $(s',s)$ is longer and also has a higher rank sum than $s$, $s \fpre_{BFD} (s',s)$ and $s \fpre_{MinSum} (s',s)$ trivially holds.

To show that DFD does not satisfy confluence, observe that $(1,2) \lex (2)$. This is clearly a violation of property (ii) in \cref{pr: prefix-invariant}.
\end{proof}
}

\newif\ifobsBasicConfluentAppendix 
\obsBasicConfluentAppendixtrue

\ifobsBasicConfluentAppendix
\else 
\obsBasicConfluentProof
\fi 

\subsection{Advanced Sequence Rules}

\par 

We introduce the new delegation rule \emph{\diffusion} which is inspired by a propagation process similar to those studied in the opinion diffusion literature. We give a motivation for this connection upfront, and then define the rule formally. 

Confronted with an instance of our setting, we argue that there are certain delegation paths which are, in a sense, \emph{best possible}: Let $x$ be the minimum rank of an incoming edge of \emph{any} casting voter. Then, all delegating voters that have a direct edge to a casting voter with rank $x$ should be assigned this path. The rationale behind this statement is that for every voter $v$, every path in $\mathcal{P}_v$ contains at least one edge with rank at least $x$. Hence a one-step path with rank $x$ seems preferable to any other path. 
However, typically, not all voters have such a path. A natural continuation of our argument goes as follows: In a second round, we call casting voters and delegating voters that already got assigned to a casting voter \emph{assigned}.
We treat all assigned voters as casting voters and repeat the process until all delegating voters become assigned. The path assignment is then derived by following the one-step paths. A similar process has been described by \citet{CGN20a} within their unravelling procedures ``basic update'' and ``direct vote priority.''\footnote{Besides the fact that \citet{CGN20a} define their rule for a different setting, they also treat abstaining voters differently as abstentions can be delegated (see also \Cref{fn:unravelling}).}

\smallskip
\noindent\textbf{\diffusion:}
Initialize the set of \emph{assigned} voters: $A \leftarrow C$.

\smallskip

\noindent 
While ($A \neq V \setminus I $), repeat the following steps:
\begin{enumerate}
\item $F \leftarrow \argmin \{r(e) \mid e \in \delta_G^-(A)\}$, where $\delta_G^-(A)$ is the set of edges in $G$ having their head in $A$ and tail in $V\setminus A$.
\item $A \leftarrow A \cup \{v \mid (v,w) \in F\} $
\item $f(v)= ((v,w),f(w)) \text{ for all $(v,w) \in F$}$
\end{enumerate}
    
The assignment in step 3 is well defined as voters' preferences are strict orders and thus there exist no two edges $(v,w), (v,w') \in F$. 
This immediately implies confluence.

\begin{restatable}{proposition}{DiffConfluent}\label{cor: diff-confluent}
\diffusion is confluent. 
\end{restatable}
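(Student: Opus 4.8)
The plan is to exploit the recursive, ``online'' structure of the \diffusion process: each delegating voter is assigned a path exactly once, at which moment a single outgoing edge is committed and never revised. Confluence, i.e.\ outdegree one in the union of chosen paths (\Cref{def:confluence}), then follows almost directly.

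First I would establish that the while loop terminates with $A = V \setminus I$, so that $f(v)$ is defined for every $v \in D$. The key observation is that whenever $A \neq V \setminus I$ there is still some unassigned delegating voter $v \in D \setminus A$; since $v \in D$, there is a path from $v$ to a casting voter, and as $C \subseteq A$ holds throughout while $v \notin A$, this path must contain an edge from $V \setminus A$ into $A$. Hence $\delta_G^-(A) \neq \emptyset$, so $F \neq \emptyset$ and at least one new voter enters $A$ in each iteration. Finiteness of $V$ then guarantees termination with every delegating voter assigned.

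Next I would argue that each $v \in D$ commits exactly one outgoing edge. In the iteration in which $v$ is added to $A$ in step~2, $v$ is the tail of some edge in $F$. Because $r$ induces a strict order on $\delta_G^+(v)$, no two distinct edges out of $v$ share a rank, so at most one edge $(v,w)$ of $v$ can lie in the $\argmin$ set $F$; thus step~3 sets $f(v) = ((v,w), f(w))$ with a uniquely determined first edge $(v,w)$. Once $v \in A$, the set $\delta_G^-(A)$ only ever contains edges whose tail lies in $V \setminus A$, so $v$ is never again the tail of a selected edge and its assignment is never overwritten.

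Finally I would translate this into the confluence condition. The recursive form $f(u) = ((u,u'), f(u'))$ means that any chosen path $f(u)$ that reaches $v$ continues along $f(v)$ and hence leaves $v$ via the committed edge $(v,w)$; no chosen path uses any other outgoing edge of $v$. Therefore in $(V, \bigcup_{v \in D} f(G,r,v))$ the outdegree of every $v \in D$ is exactly one. The only point requiring care is this last step: one must verify that taking the union over all delegating voters does not introduce a second out-edge at $v$, which is precisely guaranteed by the fact that every path through $v$ inherits $v$'s unique commitment $(v,w)$.
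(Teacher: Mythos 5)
Your proof is correct and follows essentially the same route as the paper, which simply observes that strict rankings prevent two outgoing edges of the same voter from lying in $F$, so each delegating voter commits a single edge that is never revised, and confluence follows immediately. Your additional details (termination of the while loop and the fact that a committed edge can never reappear in $\delta_G^-(A)$) are welcome elaborations of points the paper leaves implicit, not a different argument.
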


One may wonder whether this seemingly ``global'' process (in the sense that we need to know the entire graph to determine the minimal rank~$x$) can be explained by an order over $\sequences$ that can be applied to each delegating voter ``locally.'' Stated differently, is diffusion a sequence rule? Surprisingly, we answer this question in the affirmative: We define the order $\difforder$ (which will prove our claim) for sequences without a joint prefix first and then later extend it to any two comparable sequences in a straightforward way. Let $s$ and $s'$ be two comparable sequences with no joint prefix. We define $s \difforder s'$ if one of the following conditions holds:
\begin{enumerate}[labelindent=5pt,label=(\roman*),itemindent=0em,leftmargin=!]
    \item $\max(s) < \max(s')$; 
    \item $\max (s) \!=\! \max (s')$ and $|\argmax(s)|\! < \!|\argmax (s')|$; 
    \item $\max (s)\! = \!\max (s'), |\argmax(s)| \!= \!|\argmax (s')|$,  and  ($\Bar{s} \mathrel{\difforder} \Bar{s}'$ or $\Bar{s}=()$);
\end{enumerate} where $\Bar{s}$ (resp. $\Bar{s}'$) is defined as the prefix of $s$ (resp. $s'$) ending just before the first entry of rank $\max(s)$.

The relation $\difforder$ can now easily be extended to two comparable sequences $(t,s),(t,s')$ having a joint prefix $t\in \sequences$. That is, $(t, s) \mathrel{\difforder} (t,s')$ if and only if $s \difforder s'$.

\begin{restatable}[\appSymb]{theorem}{thmDiffusionSequence} 
The relation $\difforder$ induces \diffusion. \label{diffusionSequence}
\end{restatable}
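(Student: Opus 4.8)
The plan is to show that the relation $\difforder$, when restricted to any comparable subset of $\sequences$, is a linear order, and that its unique maximum element over $\seqv$ corresponds exactly to the path chosen by the \diffusion procedure. I would split the argument into two independent parts: first establishing that $\difforder$ is well-defined and a linear order on comparable sequences, and second proving that the \diffusion output coincides with the $\difforder$-maximum. Throughout I would want to relate the structure of $\difforder$ to the round-by-round behavior of the procedure: intuitively, a voter $v$ becomes assigned in round $t$ precisely when the best achievable sequence in $\seqv$ first becomes ``complete,'' and the quantity governing the round number should be read off from the lexicographic comparison of $(\max(s), |\argmax(s)|, \dots)$ that defines $\difforder$.

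First I would verify that $\difforder$ is a linear order on each comparable set. By \Cref{pr: obsCompIff}, two distinct comparable sequences have a well-defined first point of disagreement, and after stripping the joint prefix $t$ they become two sequences with no joint prefix; since $\difforder$ on such sequences is defined by a lexicographic cascade on $(\max(s), |\argmax(s)|, \bar{s})$ with the recursion on $\bar{s}$ strictly decreasing in length, I would argue by induction on $|s|+|s'|$ that exactly one of $s \difforder s'$, $s' \difforder s$ holds, and that transitivity is inherited from the transitivity of the underlying lexicographic comparison. The base case is the empty truncated prefix $\bar{s}=()$, which the definition handles explicitly. I expect this part to be routine but in need of care regarding the recursion terminating and the prefix-stripping being consistent with comparability.

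The substantive part is showing the \diffusion output is the $\difforder$-maximum of $\seqv$. Here I would argue by induction on the round counter of the procedure and exploit the key structural fact that \diffusion greedily minimizes, round by round, the rank of the cheapest edge crossing into the assigned set $A$. Concretely, I would prove that if $v$ is assigned in round $t$ via the one-step path to $w$, then the sequence $s(f(v))$ is $\difforder$-better than every other sequence in $\seqv$. The link to the definition is that the number of rounds a voter must wait, and the maximal rank it is forced to use, correspond exactly to the triple controlling $\difforder$: a voter assigned earlier has a sequence with smaller $\max$, or the same $\max$ but reached using fewer top-rank edges, matching clauses (i) and (ii), while the recursive clause (iii) matches the behavior on the prefix before the first maximal-rank edge. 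I would formalize this by showing that for any competing path $P \in \pathv$, its sequence $s(P)$ contains an edge forcing it to lose the $\difforder$-comparison against $s(f(v))$, using that every alternative path must cross into $A$ at some round no earlier than when \diffusion commits to $v$.

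The main obstacle I anticipate is proving that the ``global'' greedy choice of \diffusion is captured \emph{exactly} by the ``local'' sequence comparison $\difforder$ — in particular, reconciling the recursive clause (iii) on the truncated prefix $\bar{s}$ with the multi-round structure of the process. The delicate point is that $\difforder$ compares sequences by first looking at the largest rank and its multiplicity, which is not the order in which edges appear along the path; I would need to show that minimizing over rounds in \diffusion is equivalent to this ``worst-rank-first'' lexicographic reading, which likely requires a careful argument that the suffix of a sequence starting at its first maximal-rank entry is irrelevant to the round in which the corresponding voter is reached, so that the recursion on $\bar{s}$ correctly resolves ties. I would handle this by characterizing, for each sequence $s \in \seqv$, the exact round in which a voter following the path with sequence $s$ would be assigned, and then checking that this round ordering refines into $\difforder$ precisely via clauses (i)–(iii).
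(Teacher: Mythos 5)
Your first part (showing $\difforder$ is a linear order on every comparable subset) follows the paper's route, though your remark that transitivity is simply ``inherited from the underlying lexicographic comparison'' is too quick: because clause (iii) recurses on truncated prefixes, the paper has to prove transitivity by a minimal-counterexample argument on $|s|+|s'|+|s''|$. That part is repairable. The substantive gap is in your second part.

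Your plan hinges on two claims that do not hold. First, you want to characterize ``the exact round in which a voter following the path with sequence $s$ would be assigned'' and then show this round ordering refines into $\difforder$. But no such round exists for paths that \diffusion does not select, and even for selected paths the round is a global, instance-dependent quantity rather than a function of the sequence (a voter whose only sequence is $(2)$ is assigned in round $1$ in one instance and arbitrarily late in another), so there is no well-defined ``round ordering'' to compare with $\difforder$. Second, your key lemma---that the suffix of a sequence starting at its first maximal-rank entry is irrelevant to the round in which the voter is reached---is false: compare the instance $v\to a$ (rank $1$), $a \to c$ (rank $5$) with the instance $v \to a$ (rank $1$), $a \to b$ (rank $5$), $b \to c$ (rank $1$); the sequences are $(1,5)$ and $(1,5,1)$, and $v$ is assigned in round $2$ in the first but in round $3$ in the second, so the tail after the maximal rank does shift the timing. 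What is true, and what the paper proves instead, is a \emph{comparative} statement coupling the chosen path $P'$ with a competing path $P$ of the \emph{same} voter: since \diffusion always selects minimum-rank crossing edges, when the $j$-th maximum-rank edge of $P'$ enters $F$, the $j$-th maximum-rank edge of $P$ is either also in $F$ or its tail is already assigned (proved by induction on $j$). This coupling settles clauses (i) and (ii); for the recursive clause (iii) the paper additionally performs instance surgery---it declares the assigned set $A_{i+1}$ to be casting voters, observes that \diffusion on this reduced instance replays the remaining iterations of the original run, and closes the case by minimality of a counterexample chosen to minimize $|s|+|s'|$, together with \cref{lem:helperDifforder2}. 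Note also that before any of this, the paper uses confluence of both rules (via \cref{pr: prefix-invariant}) to reduce to the case where $s$ and $s'$ have no joint prefix, which is what makes the case analysis on (i)--(iii) applicable at all; your proposal has no analogue of this reduction or of the surgery step, and a round-counting induction cannot substitute for them.
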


\newcommand{\thmDiffusionSequenceProof}{
\begin{proof}
We first show that $\difforder$ is a complete order if restricted to any comparable subset of \sequences. To this end, let $(t,s),(t',s)$ be two comparable sequences with joint prefix $t$ (this can be empty). Recall we defined $(t,s) \difforder (t,s') \Leftrightarrow s \fpre s'$. Hence, it remains to show that $s$ and $s'$ are ordered by $\difforder$. 
By (i) and (ii), if $\max(s) < \max(s')$ or $|\argmax(s)| < |\argmax (s')|$, the $s$ and $s'$ are ordered by $\difforder$. To see that $s$ and $s'$ are also ordered by $\difforder$ if $\max(s) = \max(s')$ and $|\argmax(s)| = |\argmax (s')|$, consider sequences $\Bar{s}$ (resp. $\Bar{s}'$) defined as the prefix of $s$ (resp. $s'$) ending just before the first entry of rank $\max(s)$. Either $\Bar{s}$ and $\Bar{s}'$ are trivially ordered by $\difforder$ because of (i) or (ii), which by (iii) implies that $s$ and $s'$ are ordered by $\difforder$, or $\max(\Bar{s}) = \max(\Bar{s}')$ and $|\argmax(\Bar{s})| = |\argmax (\Bar{s}')|$. By repeatedly applying this recursive argument, and because $s$ and $s'$ have no common prefix, case (i) or (ii) are guaranteed to apply at some point such that $s$ and $s'$ are ordered by $\difforder$.

To show that $\difforder$ is transitive, consider three comparable sequences $s, s'$ and $s''$ such that $s \difforder s' \difforder s''$. Assume for contradiction, that $s'' \difforder s$ and that $s,s',s''$ build a counterexample to transitivity that is minimal in $|s|+|s'|+|s''|$. Suppose $s'' \difforder s$ because of (i), i.e.,  $\max(s'') < \max(s)$. Then $\max(s') < \max(s)$ or $\max(s'') < \max(s')$ which is a contradiction to $s \difforder s' \difforder s''$. Now suppose, that $\max(s'') = \max(s)$ but $|\argmax(s'')| < |\argmax (s)|$. Then $\max(s'') = \max(s') = \max(s)$ and either $\argmax(s') < \argmax(s)$ or $\argmax(s'') < \argmax(s')$ which is a contradiction to $s \difforder s' \difforder s''$. Lastly, assume that $\max(s'') = \max(s), |\argmax(s'')| = |\argmax (s)|$  and  $\Bar{s}'' \difforder \Bar{s}$, where $\Bar{s}''$ (resp. $\Bar{s}$) is defined as the prefix of $s''$ (resp. $s$) ending just before the first entry of rank $\max(s'')$. In this case $\max(s'') = \max(s') = \max(s)$ and $|\argmax(s'')| = |\argmax (s')| = |\argmax (s)|$. Let $\Bar{s}'$ be defined as the prefix of $s'$ ending just before the first entry of rank $\max(s')$. Then because $s \difforder s' \difforder s''$, $\Bar{s} \difforder \Bar{s}' \difforder \Bar{s}''$. This is a contradiction to the minimality of the counterexample.

We have shown that $\difforder$ is a linear order on any comparable subset of $\sequences$ and thus $\difforder$ induces a sequence rue. In the following we show that this sequence rule is in fact \diffusion.

Assume for contradiction that there exists an instance $(G,r)$ and a node $v \in D$ with two sequences $s,s' \in \mathcal{S}_v$, such that $s \neq s'$, \diffusion assigns $s'$ to $v$ but $s \difforder s'$. Among all such examples, we choose one which minimizes the joint length of $s$ and $s'$, i.e., $|s| + |s'|$. Using \cref{pr: prefix-invariant} and \cref{lem:helperDifforder} it follows that $\difforder$ is confluent, and because \diffusion is also confluent (Corollary~\ref{cor: diff-confluent}), this implies that $s$ and $s'$ have no joint prefix, since otherwise, there would exist some $v'\in D$, namely at the branching point of $s$ and $s'$, which would induce two sequences with the same order as $s$ and $s'$ but smaller joint length. Let $P$ be the path corresponding to $s$ and $P'$ be the path corresponding to $s'$. 

By the definition of $\difforder$, there exist three possible cases for $s$ being preferred to $s'$.

\smallskip

\noindent\textbf{Case 1.} It holds that $\max(s) < \max(s')$.

Let $e$ be some edge on $P'$ with rank $\max(s')$. Let $i$ be the iteration in which $e \in F$ and denote by $A_i$ the set of assigned nodes at the point in time when $e$ entered $F$ (i.e. in step 1 of the while loop). Denote by $c_1$ the casting voter at the end of $P$. Observe that $v \not\in A_i$ but $c_1 \in A_i$. Hence, there exists an edge on the path $P$ which is in $\delta_G^{-}(A_i)$. Since all edges on $P$ have lower rank than $e$, this is a contradiction to $e$ being in $F$. 

Before continuing with the second case, we state and prove a claim which will be helpful for the remaining two cases. To this end, let $(e_1, \dots, e_k)$ be the edges on $P$ with $r(e_i)=\max(s)$ for all $i \in [k]$, indexed with respect to their appearance on the path $P$. More precisely, $e_k$ is the first edge on $P$ with rank $\max(s)$, i.e., the edge closest to $v$ and $e_1$ is the last edge on $P$ with rank $\max(s)$. Similarly, let $e'_1, \dots, e'_{\ell}$ be the edges on $P'$ with $r(e'_i)=\max(s')$ for all $i \in [\ell]$, indexed with respect to their appearance on the path $P'$. Again, $e'_{\ell}$ is the edge closest to $v$ with rank $\max(s')$ and $e_1$ is the edge furthest from $v$ on $P'$ with rank $\max(s')$. 

\begin{claim} Let $\max(s) = \max(s')$ and $k < \ell$. Let $j \in [k]$ and $i$ be the iteration of the while-loop in which $e'_j \in F$. Then, either $e_j \in F$ or the start node of $e_j$  is already assigned, i.e., is included in $A_i$. 
\end{claim}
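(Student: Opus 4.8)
The plan is to prove the Claim by induction on $j \in [k]$, exploiting the fact that, apart from the edges $e_1, \dots, e_k$ (resp.\ $e'_1, \dots, e'_\ell$), every edge on $P$ (resp.\ $P'$) has rank strictly below $M := \max(s) = \max(s')$. The only tool I will use, applied once per induction step, is the following observation about the while-loop of \diffusion: whenever $e'_j \in F$ in some iteration $i$, the value $r(e'_j) = M$ equals $\min\{r(e) \mid e \in \delta_G^-(A_i)\}$, so no edge of rank strictly below $M$ has its head in $A_i$ and its tail outside $A_i$. Consequently, any path built only from edges of rank $< M$ whose final node lies in $A_i$ has \emph{all} of its nodes in $A_i$: walking backwards from the endpoint, the first node outside $A_i$ would produce an edge in $\delta_G^-(A_i)$ of rank $< M$, a contradiction.

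For the base case $j = 1$, let $i$ be the iteration in which $e'_1 \in F$ and consider the subpath of $P$ from the head of $e_1$ to the casting voter $c_1$ at the end of $P$. Since $e_1$ is the last rank-$M$ edge of $P$, this subpath uses only edges of rank $< M$, and its final node $c_1$ lies in $A_i$ because casting voters are assigned from the start. The observation then forces the head of $e_1$ into $A_i$; hence either the start node (tail) of $e_1$ is already in $A_i$, or $e_1 \in \delta_G^-(A_i)$ has the minimum rank $M$ and therefore $e_1 \in F$ in iteration $i$. This is exactly the claimed dichotomy.

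For the inductive step, assume the statement for $j-1$ and let $i$ be the iteration in which $e'_j \in F$. I will first argue that $e'_{j-1}$ enters $F$ in some strictly earlier iteration $i' < i$: the head of $e'_j$ is joined to the tail of $e'_{j-1}$ by a segment of rank-$<M$ edges, so the head of $e'_j$ cannot be assigned (and hence $e'_j$ cannot be available) before the tail of $e'_{j-1}$ is assigned, which happens no earlier than the iteration following the one that processes $e'_{j-1}$. Applying the induction hypothesis to $e'_{j-1}$ shows that the tail of $e_{j-1}$ is assigned by the end of iteration $i'$, so the tail of $e_{j-1} \in A_{i'+1} \subseteq A_i$. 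Now the subpath of $P$ from the head of $e_j$ to the tail of $e_{j-1}$ uses only edges of rank $< M$ (there is no rank-$M$ edge strictly between the consecutive maximal edges $e_j$ and $e_{j-1}$) and ends at the tail of $e_{j-1} \in A_i$; the observation again puts the head of $e_j$ into $A_i$, whence either the start node of $e_j$ lies in $A_i$ or $e_j \in F$ in iteration $i$, completing the induction.

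The main obstacle I anticipate is making the monotonicity in the inductive step fully rigorous — namely that $e'_{j-1}$ is genuinely processed before $e'_j$, so that $A_{i'+1} \subseteq A_i$ — together with keeping the edge orientations straight, since the process grows the assigned set \emph{backwards} along edges, from heads toward tails. Everything else reduces to the single ``no low-rank edge crosses the frontier'' observation; the remaining work is the bookkeeping of which rank-$<M$ segment lies between which pair of consecutive maximal-rank edges, which is routine but must be stated cleanly to avoid off-by-one confusion between the two (oppositely oriented) indexings of the maximal edges on $P$ and $P'$.
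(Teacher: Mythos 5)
Your overall plan --- induction on $j$, the ``frontier'' observation that no rank-$<M$ edge can lie in $\delta_G^-(A_i)$ during an iteration in which $e'_j \in F$, and the way you run the base case and the inductive step --- is exactly the paper's own argument, and those parts are correct. The problem is the one step you yourself flag as the main obstacle: your justification of the monotonicity claim that $e'_{j-1}$ enters $F$ at an iteration $i'$ strictly before the iteration $i$ in which $e'_j$ does. You derive it from ``the head of $e'_j$ is joined to the tail of $e'_{j-1}$ by a segment of rank-$<M$ edges, so the head of $e'_j$ cannot be assigned before the tail of $e'_{j-1}$ is assigned.'' That inference is not valid: in the Diffusion process, a node joined to another node by a low-rank segment can perfectly well be assigned \emph{earlier}, for instance via some other outgoing edge leading to an already-assigned voter. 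Your frontier observation cannot rescue the step either, because it propagates membership in $A_i$ only backwards along edges (from the head of a low-rank edge to its tail), whereas here you would need to push membership forwards along the segment, from the head of $e'_j$ toward the tail of $e'_{j-1}$.

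What actually forces the ordering is not ranks but the fact that $P'$ is the path Diffusion assigns to $v$. By step 3 of the rule, $f(v) = ((v,w), f(w))$, so by induction along this recursion the suffix of $P'$ starting at any node $x$ of $P'$ equals $f(x)$. Hence every delegating node on $P'$ is assigned precisely via its outgoing edge on $P'$, and for that edge to be in $F$ its head must already belong to the current assigned set; consequently the nodes of $P'$ are assigned, and the edges of $P'$ enter $F$, at strictly decreasing iterations as one moves from $v$ toward $c'$. In particular $e'_{j-1}$, lying closer to $c'$ than $e'_j$, is processed strictly earlier --- which is exactly the fact you need, and which also guarantees that each $e'_j$ enters $F$ at all (something your write-up tacitly assumes). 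To be fair, the paper uses this same fact silently (its inductive step simply asserts $u_j \in A_i$ from the induction hypothesis), so you correctly isolated the genuine subtlety; but the reason you give for it is wrong, and without replacing it by the suffix-property argument your inductive step does not go through.
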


\begin{claimproof}
The statement can be shown by induction over $j \in [k]$. For the base case let $j=1$. Let $i$ be the iteration in which $e'_1 \in F$. Let $e_1 = (u_1,v_1)$. If $u_1 \not\in A_i$ and $e_1 \not\in F$, then also $v_1 \not\in A_i$, since otherwise, $e_1 \in \delta_G^-(A_i)$ since $e_1$ has the same rank as $e'_1$. However, from that it follows that some edge on the path from $v_1$ to $c_1$ is included in $A_i$, all of which have rank smaller than $e'_1$, a contradiction to $e'_1$ being in $F$ in iteration $i$.

For the induction step, assume that the statement holds up to some $j \in [k]$. Then, let $i$ be the iteration of the while loop in which $e'_{j+1} \in F$. Let $u_{j}$ and $v_j$ be tail and head of the edge $e_j$, i.e. $e_j=(u_j,v_j)$, and similarly $e_{j+1}=(u_{j+1},v_{j+1})$. By the induction hypothesis we know that $u_j \in A_i$. Now, assume for contradiction that neither $u_{j+1} \in A_i$ nor $e_{j+1} \in F$. Then also $v_{j+1} \not\in A_i$ and there exists at least one edge on the subpath of $P$ from $v_{j+1}$ to $u_i$ which is included in $\delta_G^-(A_i)$. However, by construction, this edge has a smaller rank than the rank of $e'_{j+1}$, a contradiction to $e'_{j+1}$ being in $F$ in iteration $i$. 
\end{claimproof}

\smallskip

\noindent\textbf{Case 2.} $\max(s) = \max(s')$ and $|\argmax(s)| < |\argmax(s')|$. 

With the above shown claim it is easy to arrive at an overall contradiction. Recall that $e_1, \dots, e_k$ are the edges with maximum rank on $P$ and $e'_1, \dots, e'_{\ell}$ are the edges with maximum rank on $P'$. Moreover, $k < \ell$ by the case distinction. Let $i$ be the iteration of the while-loop in which $e'_{k+1} \in F$. By the above claim we know that $u_k$, which is the tail of the edge $e_k$, is included in $A_i$. Moreover, $v \not\in A_i$. Hence, there exists one edge on the subpath of $P$ from $v$ to $u_k$ which is included in $\delta^-(A_i)$. However, as this edge has a rank which is smaller than the rank of $e'_{k+1}$, this is a contradiction to $e'_{k+1}$ being in $F$ in iteration $i$. This concludes the second case.\\

\smallskip

\noindent \textbf{Case 3.} $\max(s) = \max(s')$, $|\argmax(s)| = |\argmax(s')|$ and $\bar{s} \difforder \bar{s}'$, or $\bar{s} = ()$ where $\bar{s}$ (respectively $\bar{s}'$ is defined as the prefix of $s$ (respectively $s'$) ending just before the first entry of $\max(s)$. 

Recall that $e'_k$ is the first edge on the path $P'$ which has a rank of $\max(s')$. Let $i$ be the iteration of the while loop in which $e'_k \in F$ holds. Recall that $e_k=(u_k,v_k)$ is the first edge on the path $P$ with rank $\max(s)$. 
Observe that, by the above claim, it holds that either $u_k \in A_i$ or $e_k \in F$. Now, if $\bar{s}=()$, then $u_k=v$ is assigned at latest after this round. Hence, $v$ is either assigned to $e_k$ or some other edge, but not to the first edge corresponding to the sequence $s'$, a contradiction. 

Thus, we assume in the following that $\bar{s} \difforder \bar{s}'$. We now construct a second reduced instance with graph, $H$, by copying the graph $G$, but defining the set of casting voters to be $A_{i+1}$. To make this a feasible instance, we also delete all outgoing edges of nodes in $A_{i+1}$. Observe that, starting with the iteration $i+1$ for the graph $G$, \diffusion behaves equivalent for the graph $G$ as it does for the graph $H$. More precisely, let $\bar{F}_j$ be the set of the selected edges within the $j^{th}$ iteration of the while-loop within the call of \diffusion for the graph $H$. Similarly, let $F_j$ be the set of the selected edges within $j^{th}$ iteration of the while-loop within the call of \diffusion for the graph $G$. Then, it holds that $\bar{F}_j = F_{j+i+1}$ for $j \in \mathbb{N}$. Because \diffusion selects the path $P'$ for $v$ in $G$, it follows directly that for $v$, \diffusion selects a path with sequence $\bar{s}'$ in $H$.

We show in the following that this contradicts the minimality of the selected counter example as there exists a subpath $\hat{P}$ of $P$, leading to from $v$ to a ``casting'' voter in $H$ with $s(\hat{P}) \difforder \bar{s}'$. To this end, let $w$ be the first node on $P$ which is included in $A_{i+1}$. Then, define $\hat{P}$ to be the prefix of the path $P$ up to $w$. We define $\hat{s} = s(\hat{P})$ and know that $\hat{s}$ is a prefix of $\bar{s}$. \Cref{lem:helperDifforder2} shows that $\bar{s} \difforder \bar{s}'$ implies that $\hat{s} \difforder \bar{s}'$. Since $|\hat{s}| + |\bar{s}'| < |s| + |s'|$ this contradicts the minimality of the counter example, which concludes the third case.
\end{proof}
}

\newif\ifthmDiffusionSequenceAppendix 
\thmDiffusionSequenceAppendixtrue 

\ifthmDiffusionSequenceAppendix
\else 
\thmDiffusionSequenceProof
\fi

The relation $\difforder$ reveals the decision criteria of \diffusion: If two sequences $s$ and $s'$ have a different maximum rank, \diffusion decides in favor of the  sequence with smaller maximum rank (see $(i)$). If the two sequences have equal maximum rank, \diffusion decides in favor of the sequence for which the maximum rank appears less often (see $(ii)$). Thus, \diffusion overcomes \bfd's shortcoming of selecting sequences with large edge ranks. For example, from $\{(1,100), (1,1,2)\}$, \diffusion selects $(1,1,2)$ while \bfd selects $(1,100)$. Moreover, in contrast to \dfd, \diffusion cannot be tricked into selecting a sequence with large edge ranks at the end. For example, \diffusion selects $(2)$ from $\{(1,100),(2)\}$ while \dfd selects $(1,100)$. 
Having said this, the very last tie-breaking rule (see $(iii)$) can be argued to be slightly unnatural as it only compares the parts of $s$ and $s'$ up to the first appearance of the maximum rank. For example, this leads to $(1,5,4,4,4,4) \difforder (2,5)$. 

Inspired by this, we define \emph{\lexrank}, a delegation rule that shares \diffusion's desirable properties but avoids its artificial tie-breaking. 
We define the function $\sigma$, taking as input a rank sequence and sorting the ranks of the sequence in non-increasing order.  
For instance, $\sigma((1,3,4,3)) = (4,3,3,1)$.

\smallskip
\noindent\textbf{\lexrank :} The delegation rule induced by the linear order $\lrorder$ over $\sequences$ defined as follows: For sequences $s,s' \in \sequences$ let $s \lrorder s'$ if (i) $\sigma(s) \lex \sigma(s')$, or (ii) $\sigma(s) = \sigma(s')$ and $s \lex s'$.

\smallskip
In the example in \Cref{fig:running-example}, \lexrank selects $(a,b,c,d,j)$
for voter $a$.
With the help of \cref{pr: prefix-invariant} we can show: 

\begin{restatable}[\appSymb]{corollary}{LexConfluent}
\lexrank is confluent. 
\end{restatable}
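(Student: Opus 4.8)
The plan is to apply the characterization of confluent sequence rules from \cref{pr: prefix-invariant} to the relation $\lrorder$ inducing \lexrank. Since \lexrank is a sequence rule induced by a relation defined on all of $\sequences$ (indeed $\lrorder$ is a linear order on $\sequences$, hence on every comparable subset), it suffices to verify the two properties (i) and (ii) of \cref{pr: prefix-invariant}. Recall that $\lrorder$ compares two sequences $s,s'$ first by $\sigma(s) \lex \sigma(s')$ (where $\sigma$ sorts ranks in non-increasing order) and breaks ties by $s \lex s'$.

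First I would check property (i): for comparable $s,s'$ and any $x \in \mathbb{N}_{\geq 1}$, that $s \lrorder s' \Leftrightarrow (x,s) \lrorder (x,s')$. The key observation is that prepending $x$ to both sequences inserts the \emph{same} value into each multiset of ranks, so $\sigma((x,s))$ and $\sigma((x,s'))$ are obtained from $\sigma(s)$ and $\sigma(s')$ by inserting $x$ at the (same) position dictated by its magnitude. I would argue that this insertion preserves the $\lex$-comparison of the sorted sequences: if $\sigma(s) \lex \sigma(s')$, then inserting the identical entry $x$ into both sorted lists at the sorted position keeps the strict lexicographic relation, and similarly for the tie-break on the original (unsorted) sequences, where the shared leading $x$ contributes equally and the comparison reduces to $s \lex s'$. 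This direction is largely a bookkeeping argument about how sorting interacts with a common insertion.

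Next I would verify property (ii): whenever $s$ is comparable to $(s',s)$, we have $s \lrorder (s',s)$. Here $(s',s)$ has strictly more ranks than $s$ and its multiset of ranks contains that of $s$ together with the extra (nonempty, since otherwise $s$ would be a prefix of $(s',s)$, contradicting comparability via \cref{pr: obsCompIff}) entries of $s'$. Because every rank lies in $\mathbb{N}_{\geq 1}$, these extra entries are at least $1$, so the sorted sequence $\sigma((s',s))$ either has a larger maximum or, at the first point of difference, a larger entry than $\sigma(s)$; in the remaining case the two sorted sequences agree on a common prefix but $\sigma((s',s))$ is strictly longer. In either situation $\sigma(s) \lex \sigma((s',s))$, or they are equal in which case the length comparison in the tie-breaking $\lex$ settles it in favor of the shorter $s$. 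Thus $s \lrorder (s',s)$.

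The main obstacle I anticipate is making the insertion argument in property (i) fully rigorous: one must show that inserting a common value $x$ into two sorted sequences that are strictly $\lex$-ordered cannot reverse or collapse that order, and cleanly handle the tie-case $\sigma(s)=\sigma(s')$ where the comparison falls back on $s \lex s'$ versus $(x,s) \lex (x,s')$. This is intuitively clear but requires a careful case split on whether $x$ is inserted before, at, or after the first index where $\sigma(s)$ and $\sigma(s')$ differ. The analogous verification already carried out for \bfd and \minsumseq in \cref{cor: BFD-MinSUm-confluent} is simpler because those orders are monotone under prepending in an obvious way; for \lexrank the sorting step is what makes property~(i) the delicate part, while property~(ii) follows more directly from length and positivity of ranks.
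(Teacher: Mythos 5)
Your proposal is correct and follows essentially the same route as the paper's proof: both reduce the claim to verifying properties (i) and (ii) of \cref{pr: prefix-invariant} for $\lrorder$, with property (i) handled by exactly the case split you anticipate (the tie case $\sigma(s)=\sigma(s')$ falling back to $\lex$, and the cases $x \geq \sigma(s')_i$ versus $x < \sigma(s')_i$ at the first index $i$ where the sorted sequences differ), and property (ii) by the same sub-multiset observation that yields $\sigma(s) \lex \sigma((s',s))$. The paper merely writes out the insertion bookkeeping that you defer; the decomposition and all key ideas coincide.
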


\newcommand{\LexConfluentProof}{
\begin{proof}
We start by showing property $(i)$ from \cref{pr: prefix-invariant}. Let $s$ and $s'$ be two comparable sequences from $\sequences$ and $x \in \mathbb{N}_{\geq 1}$. Assume wlog that $s \lrorder s'$. We distinguish three cases: 

\smallskip
\noindent\textbf{Case 1:} $\sigma(s) = \sigma(s')$ and $s \lex s'$. Then, $\sigma((x,s)) = \sigma((x,s'))$ and $(x,s) \lex (x,s')$. Hence $(x,s) \lrorder (x,s')$. 

If case 1 is does not apply, we know that $\sigma(s) \neq \sigma(s')$. In this case, let $i$ be the first index for which $\sigma(s)_i < \sigma(s')_i$ (which is guaranteed to exist due to $\sigma(s) \lex \sigma(s')$).

\smallskip
\noindent\textbf{Case 2:} $x \geq \sigma(s')_i$. Then, $\sigma((x,s))_j = \sigma((x,s'))_j$ for all $j \in \{1, \dots, i\}$ and $\sigma((x,s))_{i+1} < \sigma((x,s'))_{i+1}$. Hence, $\sigma((x,s)) \lex \sigma((x,s'))$ and therefore $(x,s) \lrorder (x,s')$. 

\smallskip
\noindent\textbf{Case 3:} $x < \sigma(s')_i$. Then, $\sigma((x,s))_j = \sigma((x,s'))_j$ for all $j \in \{1, \dots, i-1\}$ and $\sigma((x,s))_{i} < \sigma((x,s'))_{i}$. Hence, $\sigma((x,s)) \lex \sigma((x,s'))$ and therefore $(x,s) \lrorder (x,s')$. 

We now turn to prove property $(ii)$ from \cref{pr: prefix-invariant}. Let $s',s \in \sequences$ such that $s$ and $(s',s)$ are comparable. Clearly $\sigma(s) \neq \sigma(s',s)$. Moreover, since the multiset of elements in $s$ is a subset of the multiset of elements in $(s',s)$, it follows directly that $\sigma(s) \lex \sigma(s',s)$.

Applying \cref{pr: prefix-invariant} concludes the proof that \lexrank is confluent. 
\end{proof}
}

\newif\ifLexConfluentAppendix 
\LexConfluentAppendixtrue

\ifLexConfluentAppendix
\else 
\LexConfluentProof
\fi

\subsection{Branching Rules}
\label{sec:branching}

We can also view the output of a confluent delegation rule as a special \emph{directed forest} (aka \emph{branching}) in the reduced graph $\bar{G}=(D \cup C,\bar{E})$. More precisely, we call $B \subseteq \bar{E}$ a \emph{$C$-branching in $\bar{G}$} if $B$ is acyclic and $|\delta^{+}_B(v)| = 1$ for all $v \in D$. For any confluent rule $f$, the set $\bigcup_{v \in D} f(v)$ is a $C$-branching in $\bar{G}$ and this encodes all selected paths. 

Considering this interpretation, it seems natural to define delegation rules that optimize directly over the set of $C$-branchings, as for example selecting one with minimum sum of ranks. It is important to stress that this approach inherently comes with a different perspective on the problem. Namely, it gives equal importance to each of the edges instead of each of the selected paths. Under the premise that each voter is equally important, this is compatible with the assumption that voters care only about the first edge of their path. 
Though this is different from the approaches presented in the previous sections, this can be a valid view as the voters express their preferences explicitly only over their outgoing edges. Hence, any preferences over paths are ``inherited'' from other voters' preferences.
Below we define \minsumarb, a natural variant of \minsumseq. Such a branching can be found by using a linear program for solving the \emph{min-cost arborescence problem} \citep[e.g.,][]{KoVy06a}.

\smallskip
\noindent \textbf{\minsumarb:} Selects a $C$-branching $B$ in $\bar{G}$ that minimizes $\sum_{e \in B} r(e)$. 
\smallskip

In order for \minsumarb to be resolute, we need to define a tie-breaking rule. 
We later use a \emph{priority order} tie-breaking (formalized in the proof of \cref{prop:bordaGuru}). 
An example shows that \minsumarb is not a sequence rule: Voters $v_1$ and $v_2$ have each other as their first choice and casting voters $w_1$, $w_2$ as their second choice, respectively. Then, $\mathcal{S}_{v_1}=\mathcal{S}_{v_2}=\{(1,2),(2)\}$ and \minsumarb assigns sequence $(1,2)$ to one voter and $(2)$ to the other.

The above approach is reminiscent of the Borda rule in classical social choice theory, as for every branching the method sums up the position of the branching's edges in the corresponding voters' rankings and compares these scores. 

Interestingly, one can also apply an approach in the spirit of Condorcet and consider pairwise majority comparisons between branchings \citep{KKM+21a}. Lifting the delegation preferences of a voter $v \in D$ to preference relation $\succ_v$ over $C$-branchings in a straightforward way (by comparing the ranks of $v$'s outgoing edges in the branchings), define the \textit{majority margin} between two $C$-branchings $B$ and $B'$ as
   \[ \Delta(B,B') = |\{v \in D\mid B \succ_v B'\}| - |\{v \in D\mid B' \succ_v B\}| \text. \]

\noindent A $C$-branching $B$ is called \textit{popular} if $\Delta(B,B')\ge 0$ for all~$B'$. 
It follows from \citet{KKM+21a} that a popular branching in our setting need not exist. They also define the \emph{unpopularity margin}
of $B$ as $\mu(B) = \max_{B'} \Delta(B',B)$. Note that $\mu(B)\ge 0$ for all $B$ and $\mu(B)=0$ iff $B$ is popular. 

In our experiments, we evaluate branchings returned by confluent delegation rules by computing their unpopularity margin via a linear program \citep{KKM+21a}.
Surprisingly, we find that BordaBranching returns a popular branching in most instances.

\section{Axiomatic Analysis} \label{sec:axiomatic}

In this section, we revisit an axiom that was studied in the literature (\emph{guru-participation}) and we formalize a desideratum whose importance was emphasized by practitioners (\emph{copy-robustness}). We also provide axiomatic characterizations of the sequence rules DFD and BFD.

We define the \textit{relative voting weight} $\omega_f(G,r,c)$ of a casting voter $c \in C$ that results from applying a delegation rule $f$ to an instance $(G,r)$ as
\[\omega_f(G,r,c) = \frac{|\{ d \in D \mid f(G,r,d) \text{ ends in } c\}| + 1}{|C| + |D|} \text. \]
We use the \emph{relative} voting weight in order to cope with multiple elections with distinct numbers of non-isolated voters. 

\subsection{Guru-Participation}

\emph{Guru-participation} was introduced by \citet{KoRi20a}, and a similar axiom has been suggested by \citet{BeSw15a}. The axiom demands that a casting voter should not be penalised for being the \guru (also called \emph{guru}) of a delegating voter. More precisely, \citet{KoRi20a} consider a model that also comprises voting on a binary issue, which is decided by majority rule. They say that a casting voter $c$ is ``penalised'' for being the \guru of a delegating voter $v$, if $c$ would prefer the outcome of the election in which, all other things being equal, $v$ abstains. As our model captures the delegation phase only, we need to adapt the axiom to our setting. In the appendix we show that our version implies theirs, and, under a very mild assumption on the delegation rule, the two axioms are equivalent.

\begin{defn}
A delegation rule $f$ satisfies \emph{guru-participa\-tion} if  the following holds for every instance $(G,r)$: 
If $v \in D$ and $f(G,r,v)$ ends in $c$, then 
\[ \omega_f(G,r,u) \leq \omega_f(G',r,u) \text{ for all } u \in C \setminus \{c\}\text,\]  
where $G'=(C'\cup D' \cup I',E')$ is the graph derived from $G=(C \cup D \cup I,E)$ by setting $E' = E \setminus \delta^{+}_G(v)$ and $C'=C$. 
In particular, this implies that $\omega_f(G,r,c) \geq \omega_f(G',r,c)$.
\end{defn}

\citet{KoRi20a} showed that \dfd violates guru-participation while \bfd satisfies it. We generalize the latter statement to all confluent sequence rules. 

\begin{restatable}[\appSymb]{proposition}{confluentGuru}\label{confluentGuru}
Every confluent sequence rule satisfies guru-participation. 
\end{restatable}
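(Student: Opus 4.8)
The plan is to combine the branching structure forced by confluence with the observation that a sequence rule never revises the path of a voter whose selected path survives edge deletion. I would fix a confluent sequence rule $f$ induced by an order $\fpre$, an instance $(G,r)$, a voter $v\in D$ whose path $f(G,r,v)$ ends in $c$, and the reduced graph $G'$ obtained by deleting $\delta^+_G(v)$ (so $C'=C$). First I would record two elementary facts about sequence rules. A path is uniquely determined by its start vertex together with its rank sequence, since at each vertex $r$ is a bijection onto the outgoing edges; this makes ``selects the corresponding path'' unambiguous, and it means any two instances that offer a voter the same $\fpre$-maximal sequence assign that voter the same path. Second, deleting $\delta^+_G(v)$ removes from $\mathcal{P}^G_d$ exactly those simple $d$--$C$-paths passing through $v$, so $\mathcal{P}^{G'}_d=\{P\in\mathcal{P}^G_d : v\notin V(P)\}$ for every $d\neq v$.

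The key step I would isolate is an \emph{invariance claim}: if $d\in D$ and $f(G,r,d)$ does not pass through $v$, then $f(G',r,d)=f(G,r,d)$. Here $f(G,r,d)\in\mathcal{P}^{G'}_d\subseteq\mathcal{P}^G_d$, so its sequence lies in $\mathcal{S}^{G'}_d\subseteq\mathcal{S}^G_d$; being the $\fpre$-maximum of the larger set $\mathcal{S}^G_d$, it is a fortiori the maximum of $\mathcal{S}^{G'}_d$, and by the uniqueness fact the associated path is unchanged. In particular such a $d$ remains a delegating voter in $G'$ and keeps its representative.

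Next I would invoke confluence to see which voters this covers. Since $f$ is confluent, $B:=\bigcup_{w\in D} f(G,r,w)$ is a branching, so every vertex has at most one outgoing $B$-edge and $f(G,r,d)$ is precisely the unique $B$-path from $d$ to its representative. Hence if $f(G,r,d)$ reached $v$ its suffix would be forced to coincide with $f(G,r,v)$ and $d$ would be assigned to $c$. Contrapositively, every $d$ assigned to some $u\in C\setminus\{c\}$ has a chosen path avoiding $v$, so by the invariance claim it retains $u$ as representative in $G'$. Writing $n_u$ (resp.\ $n'_u$) for the number of delegating voters with representative $u$ in $G$ (resp.\ $G'$), this gives $n'_u\ge n_u$ for all $u\neq c$.

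Finally I would assemble the counts and the denominators. As $v\in D$ becomes isolated in $G'$, we have $C'=C$ and $|D'|\le|D|$, whence the denominator $|C'|+|D'|\le|C|+|D|$ while the numerator satisfies $n'_u+1\ge n_u+1>0$; monotonicity of the fraction then yields $\omega_f(G',r,u)\ge\omega_f(G,r,u)$ for all $u\neq c$, which is exactly guru-participation. The stated consequence $\omega_f(G,r,c)\ge\omega_f(G',r,c)$ follows because the relative weights over $C$ sum to $1$ in both instances. I expect the two delicate points to be the clean justification of $\mathcal{P}^{G'}_d=\{P\in\mathcal{P}^G_d:v\notin V(P)\}$ and, above all, the final fraction comparison: one must use that the denominator weakly \emph{decreases} rather than treating it as fixed, since otherwise the argument would be incomplete.
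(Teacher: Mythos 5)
Your proposal is correct and follows essentially the same route as the paper's proof: use confluence to conclude that any voter assigned to a casting voter $u\neq c$ has a chosen path avoiding $v$, observe that this path (being the $\fpre$-maximum of a set that only shrinks under the edge deletion) is retained in $G'$, and then compare the relative weights via the fact that the numerators for $u\neq c$ weakly increase while the denominator $|C'|+|D'|$ decreases. The only cosmetic differences are that you spell out the branching-uniqueness argument behind the confluence step and derive the clause for $c$ from the weights summing to $1$, both of which the paper leaves implicit.
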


\newcommand{\confluentGuruProof}{
\begin{proof}
Within this proof it will be helpful to argue about the absolute voting weight of a casting voter induced by some delegation function $f$ for some instance $(G,r)$. More precisely, this is defined by \[z_f(G,r,u) = |\{d \in D \mid f(G,r,d) \text{ ends in } c \}| + 1 \text.\]

Now, let $v$, $c$, $G$ and $G'$ be defined as in the definition of guru-participation.
Let $\mathcal{P}_w$ and $\mathcal{P}'_w$ be the set of available paths for $w \in D$ in $G$ and $G'$, respectively. Define $\mathcal{S}_w$ and $\mathcal{S}'_w$ analogously. 

Then, note that for all $w \in D, \mathcal P'_{w} \subseteq \mathcal P_{w}$ because $G'$ contains a subset of the edges of $G$. Then, let $w \in D$ and assume that the casting voter of $w$ is not $c$. We define $P$ to be the path for $w$ in $G$, i.e., $P=f(G,r,w)$. By confluence, that means that $v$ is not on the delegation path $P$. Therefore, $P \in \mathcal P'_{w}$. Since $s(P)$ is the maximum element of $\mathcal{S}_{w}$ and $\mathcal{S}'_w \subseteq \mathcal{S}_w$, it is also the maximum element in $\mathcal{S}'_{w}$. Hence, the casting voter of $w$ is unchanged.
Consequently, we have \[z_f(G,r,u) \leq z_f(G',r,u) \text{ for all } u \in C \setminus \{c\} \text.\] 
We now show that this also implies that the same holds for the relative voting weight, i.e.,  

\begin{align*}
    \omega_f(G,r,u) &= \frac{z_f(G,r,u)}{|C|+|D|} \leq \frac{z_f(G',r,u)}{|C|+|D|} \\ 
    & \leq \frac{z_f(G',r,u)}{|C'|+|D'|} = \omega_f(G',r,u)\text, 
\end{align*}
where the second inequality holds because $|C'| + |D'| < |C| + |D|$. This concludes the proof. 
\end{proof}
}

\newif\ifconfluentGuru
\confluentGurutrue

\ifconfluentGuru
\else 
\confluentGuruProof
\fi 

As a consequence, three more rules satisfy the axiom.\footnote{\citet{CGN20a} show that their \emph{unravelling procedures} (two of which reduce to \diffusion up to the fact that abstentions can be delegated) do \textit{not} satisfy guru-participation.
This is an artifact of their treatment of abstaining voters. \label{fn:unravelling}}

\begin{corollary}
\minsumseq, \diffusion, and \lexrank satisfy guru-participation.
\end{corollary}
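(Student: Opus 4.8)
The corollary follows almost immediately from the machinery already assembled, so the plan is to reduce it to two facts proved earlier in the paper: \Cref{confluentGuru}, which states that \emph{every} confluent sequence rule satisfies guru-participation, and the various confluence results established for the three rules in question. The only genuine work is to verify that each of \minsumseq, \diffusion, and \lexrank is simultaneously (a) a sequence rule and (b) confluent, and then invoke \Cref{confluentGuru}.

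First I would note that all three rules are sequence rules. For \minsumseq this is immediate from its definition as the sequence rule induced by the rank-sum order with lexicographic tie-breaking. For \lexrank this is likewise immediate, since it is defined as the rule induced by the linear order $\lrorder$ over $\sequences$. For \diffusion the membership in the class of sequence rules is the content of \Cref{diffusionSequence}, which exhibits the order $\difforder$ inducing it. Thus in each case we have a relation over $\sequences$ that is a linear order on every comparable subset and that induces the rule.

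Next I would record confluence for each rule, which has already been established: \minsumseq is confluent by \Cref{cor: BFD-MinSUm-confluent}, \diffusion is confluent by \Cref{cor: diff-confluent}, and \lexrank is confluent by the corollary proved immediately above its statement. Having both properties in hand, each of the three rules is a confluent sequence rule, so \Cref{confluentGuru} applies directly and yields guru-participation for all three.

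I do not expect any real obstacle here, since the corollary is explicitly flagged as a ``consequence'' of \Cref{confluentGuru}; the entire content is the bookkeeping that collects the already-proven sequence-rule and confluence properties for the three named rules. The only point requiring a moment of care is making sure that for \diffusion we are invoking \emph{both} \Cref{diffusionSequence} (to certify it is a sequence rule) and \Cref{cor: diff-confluent} (for confluence), rather than treating confluence alone as sufficient; \Cref{confluentGuru} is stated for confluent \emph{sequence} rules, so the sequence-rule hypothesis genuinely must be checked and is not automatic from confluence.
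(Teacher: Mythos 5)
Your proof is correct and follows exactly the route the paper intends: the corollary is stated without explicit proof precisely because it is the immediate combination of \Cref{confluentGuru} with the previously established facts that \minsumseq, \diffusion, and \lexrank are confluent sequence rules (confluence via \Cref{cor: BFD-MinSUm-confluent}, \Cref{cor: diff-confluent}, and the corollary on \lexrank; sequence-rule membership by definition for \minsumseq and \lexrank, and by \Cref{diffusionSequence} for \diffusion). Your closing remark that the sequence-rule hypothesis must be checked separately for \diffusion is a correct and worthwhile point of care.
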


We show that the same holds for \minsumarb.

\begin{restatable}[\appSymb]{proposition}{bordaGuru}\label{prop:bordaGuru}
There exists a tie-breaking rule for which \minsumarb satisfies guru-participation. 
\end{restatable}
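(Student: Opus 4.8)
The plan is to first reduce the claim about relative weights to a claim about \emph{absolute} weights. As in the proof of \Cref{confluentGuru}, write $z_f(G,r,u)$ for the number of delegating voters whose chosen path ends in $u$, plus one. Since deleting $\delta^{+}_G(v)$ moves $v$ (and possibly further voters) out of $D$, we have $|C'|+|D'| < |C|+|D|$, so it suffices to establish the \emph{absolute} monotonicity $z_f(G,r,u) \le z_f(G',r,u)$ for every $u \in C \setminus \{c\}$; the relative inequality then follows by the same two-step estimate used there. Throughout I identify the output of \minsumarb with the $C$-branching it induces, recalling that a $C$-branching is precisely an acyclic choice of one out-edge $b(w)$ per $w \in D$, so that $z_f(G,r,u)-1$ counts the vertices whose unique path ends in $u$.

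For the tie-breaking I would fix, once and for all, an injective priority function $\pi$ on the finite universe of conceivable edges and let \minsumarb return, among all rank-sum-minimal $C$-branchings, the one minimizing $\sum_{e\in B}\pi(e)$ (equivalently, optimize the perturbed weight $r(e)+\varepsilon\,\pi(e)$ for small $\varepsilon>0$). This renders the rule resolute and, more importantly, yields a \emph{unique} optimum in every instance, which is what converts cost inequalities into structural identities. Let $B$ be the branching chosen for $(G,r)$, with $v$ routed to $c$, and $B'$ the branching chosen for the reduced instance $(G',r)$. I introduce $S = \{w \in D : v \text{ lies on } w\text{'s path in } B\}$, the subtree hanging below $v$, and $R = D \setminus D'$, the voters cut off from $C$ once $\delta^{+}_G(v)$ is removed. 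The structural facts I would establish first are: $v \in R \subseteq S$; every vertex of $S$ is routed to $c$ in $B$; no edge of $B$ enters $S$ from outside $S$ (otherwise its tail would also lie below $v$); and no edge of $B'$ enters $R$ (such an edge would strand its tail, which then fails to reach $C$ in $G'$). In particular the weight of any $u\neq c$ under $B$ is carried entirely by $D\setminus S$, so it is enough to show that $B'$ sends every vertex of $D\setminus S$ to the same \guru as $B$.

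The core is an exchange argument built from two splicings. Combining $B$ on $R$ with $B'$ on $D'$ gives a branching $B_{\mathrm{mix}}$ that is acyclic and feasible in $G$ (no $B'$-edge enters $R$), so optimality of $B$ yields $\sum_{w\in D'} r(b(w)) \le \sum_{w\in D'} r(b'(w))$. Dually, combining $B$ on $D\setminus S$ with $B'$ on $S\cap D'$ gives a branching $B'_{\mathrm{mix}}$ that is acyclic and feasible in $G'$ (here I use that $B$ has no edge from $D\setminus S$ into $S$), so optimality of $B'$ yields the reverse inequality on $D\setminus S$. To upgrade these into the desired routing identity on $D\setminus S$, I would analyze the third splicing $\tilde B$ that puts $B'$ on $D\setminus S$ and $B$ on $S$ inside $G$: since the only $B$-edge leaving $S$ is $v$'s, $\tilde B$ contains \emph{at most one} directed cycle, and any such cycle must pass through $v$. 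Feeding this single, well-located obstruction into the uniqueness of the perturbed optima is intended to force $B$ and $B'$ to agree on $D\setminus S$, whence $z_f(G,r,u) \le z_f(G',r,u)$ for all $u \neq c$.

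I expect this last step to be the main obstacle. The two basic splicings are asymmetric: $B'_{\mathrm{mix}}$ is automatically acyclic because $B$ never routes out of $D\setminus S$ into $S$, but $\tilde B$ can genuinely close a cycle through $v$ when some voter in $D\setminus S$ finds it cheap to delegate into the re-routed subtree. Breaking that cycle naively (reverting the boundary-crossing entry edge to its $B$-value) hands back exactly the cost saving one is trying to exploit, so a plain cost comparison does not immediately contradict optimality of $B$. Making the argument go through is precisely where the priority tie-breaking must be used: it has to be chosen so that no rank-sum-optimal branching in $G'$ prefers such a boundary-crossing re-route over the routing inherited from $B$, i.e.\ so that the unique optimum never lets a voter outside $S$ defect into $S$ at no rank-sum cost. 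Pinning down this property of $\pi$, and verifying that it forbids exactly the re-routings that could lower some $u$'s weight, is the delicate heart of the proof; the remaining bookkeeping (the two feasibility checks and the absolute-to-relative conversion) is routine.
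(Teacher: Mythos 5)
Your overall scaffolding is sound: the reduction to absolute weights, the choice of a fixed priority tie-breaking (an additive edge perturbation works just as well as the paper's voter-priority lexicographic rule), and your first two splicings are all fine. The genuine gap is the target claim itself. You aim to show that $B'$ sends \emph{every} vertex of $D\setminus S$ to the same \guru as $B$; this is strictly stronger than what guru-participation requires, and it is false --- no property of $\pi$ can rescue your third splice $\tilde B$, because there are instances in which the \emph{unique} rank-sum optimum $B'$ in $\bar G'$ re-routes a voter of $D\setminus S$ into the re-routed subtree. Concretely: casting voters $c,u$; delegating voters $v,x,s$; edges $(v,x)$ of rank $1$ ($v$'s only edge), $(x,s)$ of rank $1$, $(x,c)$ of rank $2$, $(s,v)$ of rank $1$, $(s,u)$ of rank $2$. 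The two rank-sum-minimal branchings in $G$ both cost $4$, and an edge priority with $\pi((x,c))+\pi((s,v))$ small selects $B=\{(x,c),(s,v),(v,x)\}$, under which every voter's \guru is $c$; then $v$'s path ends in $c$, $S=\{v,s\}$, and $x\in D\setminus S$ with \guru $c$. Deleting $\delta^+_G(v)$ isolates $v$, leaves $(s,u)$ as $s$'s only edge, and the unique optimum in $\bar G'$ is $B'=\{(x,s),(s,u)\}$ with cost $3$: the voter $x\in D\setminus S$ defects into the former subtree and switches \guru from $c$ to $u$. (Guru-participation itself is not violated, since $u$ only gains weight --- but your intermediate claim is.) So the obstruction you flag in your last paragraph is not a technicality to be fixed by a clever $\pi$; it signals that the decomposition boundary is wrong.

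The paper's proof partitions $D$ not by ``path passes through $v$'' but by \guru identity: $D_1$ are the voters whose \guru in $B$ is $c$, and $D_2$ those whose \guru lies in $C\setminus\{c\}$; the claim proved is only $B_2=B'_2$, i.e., $D_2$-voters keep their exact edges, which suffices because the weight of any $u\neq c$ is carried entirely by $D_2$ (in my example $D_2=\emptyset$, so nothing is claimed about $x$). The payoff of this boundary is that both $D_1$ and $D_2$ are closed under $B$-paths --- every vertex on the path of a $D_i$-voter is again in $D_i$ --- so both splices $B''=B'_1\cup B_2$ (feasible in $\bar G'$) and $\hat B=B_1\cup B'_2$ (feasible in $\bar G$) are automatically acyclic (this is \cref{lem:merge-branchings}); no analogue of your cyclic $\tilde B$ ever arises. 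The endgame is then exactly the two-case analysis you anticipated, but it closes: either $B'_2$ has strictly smaller rank sum than $B_2$, and $\hat B$ contradicts optimality of $B$ in $\bar G$; or the sums are equal and the tie-breaking preference of $B'$ over $B''$ transfers to a preference of $\hat B$ over $B$ (with your additive perturbation this transfer is immediate, since the perturbed cost difference between $\hat B$ and $B$ equals that between $B'$ and $B''$), again contradicting optimality of $B$. So your proof becomes correct once you replace $D\setminus S$ by $D_2$ and drop the $S$, $R$, $\tilde B$ machinery.
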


\newcommand{\bordaGuruProof}{
\begin{proof}
We will show the statement for \emph{priority order} tie-breaking, which we define as follows: For a given set of voters $V$, let $\pi: V \rightarrow \{1, \dots, |V|\}$ be an arbitrary bijective function. Then, for two $C$-branchings $B$ and $B'$ in $\Bar{G}$, we say that $B$ is preferred by the priority order over $B'$ (written $B \succ_{\pi} B'$), if and only if there exists $i \in \{1,\dots,|V|\}$ such that
\begin{enumerate}[labelindent=15pt,label=(\roman*),itemindent=3em,leftmargin=!]
    \item $r(\delta_B^{+}(v)) = r(\delta_{B'}^{+}(v)) \text{ for all } v \text{ with } \pi(v) \leq i$, and
    \item $r(\delta_B^{+}(v)) < r(\delta_{B'}^{+}(v)) \text{ for } v \text{ with } \pi(v) = i$.
\end{enumerate}
Note that we slightly abuse notation by applying $r(\cdot)$ to a singleton sets of edges. To avoid technical issues, we define $r(\emptyset)=0$. In particular, $r(\delta_A^{+}(v)) = 0$ for all $v \in I \cup C$. 
Observe that this tie-breaking induces a unique winning $C$-branching for \minsumarb. 

Just as in the proof of \Cref{confluentGuru}, we will show that the statement from the definition of guru-participation is satisfied for the \emph{absolute} voting weight, denoted by $z_f(G,r,u)$ for all $u \in C$. In the proof of \Cref{confluentGuru} we already showed that this suffices to prove guru-participation.  

Now, let $v$, $c$, $G$ and $G'$ be defined as in the definition of guru-participation. We also fix some priority order $\pi$ over $V$. We denote by $\bar{G}$ and $\bar{G}'$ the reduced instance of $G$ and $G'$, respectively.
Let $B$ be the optimal $C$-branching within $(\bar{G},r)$ wrt to \minsumarb with priority order $\pi$ and $B'$ be the optimal $C$-branching within $(\bar{G}',r)$ wrt to \minsumarb with priority order $\pi$. Moreover, let $D_1 \dot\cup D_2 = D$ be a partition of the delegating voters (in the graph $\bar{G}$) such that $D_1$ are those delegating voters with \guru $c$ in branching $B$ and $D_2$ are those delegating voters assigned to some casting voter in $C \setminus \{c\}$. Observe that $|D'| < |D|$ but $C'=C$. In particular, $D_2 \subseteq D'$, i.e., all delegating voters in $D_2$ are still delegating (and not isolated) in $G'$. 

We define $B_i = \{(a,b) \mid a \in D_i, (a,b) \in B\}$ and $B'_i = \{(a,b) \mid a \in D_i, (a,b) \in B'\}$ for $i \in \{1,2\}$. We claim that $B_2 = B'_2$, which would suffice to proof the claim, since all agents in $D_2$ delegate exactly how they did in the original situation, and hence \[z_f(G,r,u) \leq z_f(G',r,u) \text{ for all } u \in C \setminus \{c\}\text.\] 

Assume for contradiction that this is not the case. Then we construct $B'' = B'_1 \cup B_2$ which is, by \cref{lem:merge-branchings} a $C$-branching within the graph $\bar{G}'$. We distinguish two cases. 

\textbf{Case 1:} $\sum_{e \in B''} r(e) > \sum_{e \in B'} r(e)$. Since $B'_1$ is included in both $B'$ and $B''$, this implies that the sum of ranks of $B'_2$ is strictly smaller than the sum of ranks for $B_2$. We will show that this is a contradiction to the optimality of $B$. For this, we construct $\hat{B} = B_1 \cup B'_2$ which, by \cref{lem:merge-branchings} is a $C$-branching within the graph $\bar{G}$. Since $\hat{B}$ and $B$ only differ in $B_2$ and $B'_2$, we get that the sum of ranks of $\hat{B}$ is smaller than the sum of ranks of $B$, a contradiction to the optimality of $B$ in $\bar{G}$. 

\textbf{Case 2:} $\sum_{e \in B''} r(e) = \sum_{e \in B'} r(e)$ and $B' \succ_{\pi} B''$. Let $i \in \{1,\dots, |D'|\}$ be such that the two statements $(i)$ and $(ii)$ from the definition of priority order tie-breaking prove that $B' \succ_{\pi} B''$. It follows that $\pi^{-1}(i) \in D_2$, since agents in $D_1$ get the same outgoing edge in both $B'$ and $B''$. Now, consider once again $\hat{B} = B_1 \cup B'_2$. By the same argument as in case 1, we get that $\hat{B}$ is a $C$-branching in $G$ and $\sum_{e \in \hat{B}} r(e) = \sum_{e \in B} r(e)$. However, it also holds that the two properties $(i)$ and $(ii)$ from the definition of the priority order tie-breaking hold for $i$ and hence prove $\hat{B} \succ_{\pi} B$, which is a contradiction to the optimality of $B$.
This concludes the proof. 
\end{proof}
}

\newif\ifbordaGuruAppendix
\bordaGuruAppendixtrue 

\ifbordaGuruAppendix
\else
\bordaGuruProof
\fi

\subsection{Copy-Robustness}

The issue motivating the next axiom was brought up by \citet{BeSw15a}, who are part of the developing team behind \textit{LiquidFeedback}.
Consider a delegating voter $v$ who is assigned a path of length one, i.e., this voter has a direct connection to its \guru, which we call~$c$. \citet{BeSw15a} argue that there is the threat of a \emph{copy-manipulation}: 
Using communication channels outside the liquid democracy system, these two voters can arrange that voter $v$ acts as a casting voter by copying $c$'s vote. 
If this manipulation leads to a different joint voting weight of $v$ and $c$, then the underlying delegation rule is not \emph{copy-robust}.\footnote{\citet{BeSw15a} consider an even stronger requirement, according to which the final vote count of a binary election needs to remain equal (just as \citeauthor{KoRi20a}, \citeyear{KoRi20a}, they capture the voting phase in their model). We remark that our positive result also holds for this stronger version of the axiom.} 
We formalize this property below.

\begin{defn} \label{def: copy-robustness}
A delegation rule $f$ is \emph{copy-robust} if the following holds for every instance $(G,r)$: If $v \in D$ such that $f(G,r,v)$ is of length one and ends in $c \in C$, then 
\[\omega_f(G,r,c) = \omega_f(G',r,c) + \omega_f(G',r,v), \]
where $G'=(C'\cup D' \cup I',E')$ is derived from $G=(C\cup D \cup I,E)$ by setting $E' = E \setminus \delta^{+}_G(v)$ and $C'=C \cup \{v\}$. 
\end{defn}

We give a consequence of copy-robustness that illustrates how limiting this property is for sequence rules. 

\begin{restatable}[\appSymb]{proposition}{copyRobustChar}
\label{pr: suffix-invariant}
If the sequence rule induced by $\fpre$ is copy-robust, then, for all $x \in \mathbb N$ and for any comparable sequences $s$ and $s'$, it holds that $(s,x) \fpre s' \Leftrightarrow s \fpre s'$. 
\end{restatable}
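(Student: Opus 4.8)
The plan is to prove the one required direction directly: assuming that the sequence rule $f$ induced by $\fpre$ is copy-robust, I will build, for a given rank $x\in\mathbb{N}_{\geq 1}$ and comparable sequences $s,s'$, a single instance in which the comparison ``$(s,x)$ versus $s'$'' and the comparison ``$s$ versus $s'$'' are forced to happen at the \emph{same} delegating voter, and then read the claimed equivalence off the copy-robustness identity. Using the gadget from the proof of Proposition~\ref{pr: obsCompIff}, I realize a voter $u$ whose only two available path-sequences are $(s,x)$ and $s'$, routing the $(s,x)$-path so that its penultimate vertex is a voter $v$ reached from $u$ along a path of sequence $s$, with final edge $(v,c)$ of rank $x$. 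I let $v$'s remaining out-edges lead only to abstaining voters, so that $\seqv=\{(x)\}$ and hence $f(v)$ is the length-one path $(v,c)$ ending in $c$; this is exactly the configuration needed to invoke Definition~\ref{def: copy-robustness} with this $v$ and $c$.

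Passing to $G'$ (delete $\delta^+_G(v)$ and add $v$ to $C$) truncates the $(s,x)$-path at $v$, so that in $G'$ voter $u$ instead sees the sequences $s$ and $s'$. In this gadget $v$ is the only vertex that changes status, so $|C'|+|D'|=|C|+|D|$; the copy-robustness identity may therefore be rewritten in terms of the absolute weights $z_f(\cdot)=\omega_f(\cdot)\,(|C|+|D|)$. Moreover $c$ is reachable only through $v$, so $z_f(G',r,c)=1$, and the identity reduces to $z_f(G,r,c)=z_f(G',r,v)+1$.

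I then evaluate both sides by classifying the delegating voters. Every interior vertex of the two branches has a single available sequence and is hence forced: the interior $s'$-vertices end in $c'$ in both instances and contribute nothing, while the interior $s$-vertices contribute equally to $c$ in $G$ and to $v$ in $G'$ and cancel, as does $v$'s own vote. What remains are the vertices on the \emph{common prefix} of $s$ and $s'$ (including $u$). Writing $s=(t,a)$, $s'=(t,b)$ with $a_1\neq b_1$ and $t_{>j}:=(t_{j+1},\dots,t_{|t|})$, the vertex at distance $j$ along $t$ has available sequences $\{(t_{>j},a,x),(t_{>j},b)\}$ in $G$ and $\{(t_{>j},a),(t_{>j},b)\}$ in $G'$, so it is counted toward $c$ (resp. $v$) exactly when $(t_{>j},a,x)\fpre(t_{>j},b)$ (resp. $(t_{>j},a)\fpre(t_{>j},b)$). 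Substituting and cancelling the forced terms yields the count identity
\[\sum_{j=0}^{|t|}\big[(t_{>j},a,x)\fpre(t_{>j},b)\big]=\sum_{j=0}^{|t|}\big[(t_{>j},a)\fpre(t_{>j},b)\big].\]

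The main obstacle is that this is only an equality of \emph{cardinalities}, whereas the claim is the \emph{pointwise} equivalence encoded by the $j=0$ term. I resolve this by induction on the length $|t|$ of the common prefix of $s$ and $s'$. When $|t|=0$ the sum has a single term and the identity is precisely $[(s,x)\fpre s']=[s\fpre s']$, i.e.\ the claim. For $|t|\geq 1$, each term with $j\geq 1$ concerns the pair $(t_{>j},a),(t_{>j},b)$ whose common prefix $t_{>j}$ is strictly shorter than $t$; by the induction hypothesis those terms agree on the two sides and cancel, leaving exactly the $j=0$ term $(s,x)\fpre s'\Leftrightarrow s\fpre s'$. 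It remains to check that all sequences appearing, in particular $(t_{>j},a,x)$ and $(t_{>j},b)$, are genuinely comparable so that $\fpre$ orders them and the indicators are well defined; this follows from Proposition~\ref{pr: obsCompIff} because $a_1\neq b_1$ guarantees neither is a prefix of the other.
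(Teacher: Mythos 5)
Your proof is correct and follows essentially the same route as the paper's: the same gadget (two paths sharing the common prefix, with the $(s,x)$-branch passing through a penultimate voter $v$ whose only option is the rank-$x$ edge to $c$), the same rewriting of copy-robustness in absolute weights, and the same cancellation of the prefix voters' contributions. The only cosmetic difference is that you phrase the well-ordering as an induction on the length of the common prefix, whereas the paper takes a minimal counterexample with respect to $|(s,x)|+|s'|$ and derives a contradiction in two cases — the underlying argument is identical.
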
 

\newcommand{\copyRobustCharProof}{
\begin{proof}
 \begin{figure}
      \begin{subfigure}[b]{0.45\linewidth}
      \resizebox{!}{4.6cm}{
     \begin{tikzpicture}
         \tikzstyle{vertex}=[circle,fill=black!25,minimum size=10pt,inner sep=0pt]
         \tikzstyle{caster}=[rectangle,fill=black!25,minimum size=10pt,inner sep=0pt]
         \tikzstyle{edge} = [draw,thick,->]
         \tikzstyle{edgewavy} = [draw,thick,->,decorate, decoration={
                                     zigzag,
                                     segment length=4,
                                     amplitude=.9,post=lineto,
                                     post length=2pt
                                 }]
         \tikzstyle{weight} = [font=\small]
        
         \node[] at (-2, 2){$G$};
        
         \foreach \pos/\name/\lab in {{(0,1)/v/w}, {(0,2)/a/}, {(0,3)/b/}, {(0,4)/c/}, 
                                 {(-1,5)/s/}, {(-2,6.5)/t/v}, 
                                 {(1,5)/d/}}
                             \node[vertex] (\name) at \pos{$\lab$};
         \foreach \pos/\name/\lab in {{(-2,7.5)/u/c}, {(2,6.5)/e/c'}}
                             \node[caster] (\name) at \pos{$\lab$};
        
         \foreach \source/ \dest /\weight in {a/b/, s/t/, d/e/}
                 \path[edgewavy] (\source) -- node[weight, left] {$\weight$} (\dest);
         \foreach \source/ \dest /\weight in {v/a/s_1=s'_1, b/c/s_{k}=s'_{k}, c/s/s_{k+1}, c/d/s'_{k+1}, t/u/x}
                 \path[edge] (\source) -- node[weight, right] {$\weight$} (\dest);
     \end{tikzpicture}
     }
     \caption{}\label{sfig: dfd-characterization a)}
     \end{subfigure}
      \begin{subfigure}[b]{0.45\linewidth}
      \resizebox{!}{4.6cm}{
         \begin{tikzpicture}
         \tikzstyle{vertex}=[circle,fill=black!25,minimum size=10pt,inner sep=0pt]
         \tikzstyle{caster}=[rectangle,fill=black!25,minimum size=10pt,inner sep=0pt]
         \tikzstyle{edge} = [draw,thick,->]
         \tikzstyle{edgewavy} = [draw,thick,->,decorate, decoration={
                                     zigzag,
                                     segment length=4,
                                     amplitude=.9,post=lineto,
                                     post length=2pt
                                 }]
         \tikzstyle{weight} = [font=\small]
        
         \node[] at (-2, 2){$G'$};
        
         \foreach \pos/\name/\lab in {{(0,1)/v/w}, {(0,2)/a/}, {(0,3)/b/}, {(0,4)/c/}, 
                                 {(-1,5)/s/}, {(-2,6.5)/t/}, 
                                 {(1,5)/d/}}
                             \node[vertex] (\name) at \pos{$\lab$};
         \foreach \pos/\name/\lab in {{(-2,7.5)/u/c}, {(-2,6.5)/t/v},{(2,6.5)/e/c'}}
                             \node[caster] (\name) at \pos{$\lab$};
        
         \foreach \source/ \dest /\weight in {a/b/, s/t/, d/e/}
                 \path[edgewavy] (\source) -- node[weight, left] {$\weight$} (\dest);
         \foreach \source/ \dest /\weight in {v/a/s_1=s'_1, b/c/s_{k}=s'_{k}, c/s/s_{k+1}, c/d/s'_{k+1}}
                 \path[edge] (\source) -- node[weight, right] {$\weight$} (\dest);
     \end{tikzpicture}
     }
     \caption{}\label{sfig: dfd-characterization b)}
     \end{subfigure}
     \caption{Construction in the proof of \cref{pr: suffix-invariant}} \label{fig:dfd-characterization}
 \end{figure}
 
Observe that in the definition of copy-robustness, the two instances $(G,r)$ and $(G',r)$ have the same number of non-isolated voters. As a consequence, we can argue about the absolute instead of the relative voting weight in the following. Assume for contradiction that there exists a sequence rule induced by some relation $\fpre$ which is copy-robust but does not satisfy the described property. That is, there exist two comparable sequences violating the property. Among all such pairs, let $(s,x)$ and $s'$ be two sequences minimizing $|(s,x)| + |s'|$. 

We construct a (reduced) graph $(H,w)$ by setting the sequences $(s,x)$ and $s'$ as the two available sequences for voter $v$. Let $(G,r)$ be a problem instance with the property that $(\Bar{G},\Bar{r}) = (H,w)$ (existence guaranteed by our argumentation in \cref{sec:model}). See \Cref{fig:dfd-characterization} for an illustration of the situation where $s=(s_1, \dots, s_{\ell})$, and $s' = (s'_1, \dots, s'_{\ell'})$. Let $(G',r')$ be the instance derived from $(G,r)$ when making $v$, the last delegating voter in the path from $w$ to $c$, a casting voter (see \Cref{sfig: dfd-characterization a)}). 

By the assumption that the pair $\{s,s'\}$ induces a minimal violation against the described property, we know for all delegating voters in $G$ besides $w$ that $c$ is their assigned casting voter within the instance $(G,r)$ if and only $v$ is their assigned casting voter within the instance $(G',r')$.

\textbf{Case 1:} $s' \fpre (s,x)$ but $s \fpre s'$. 
 Hence, $w$'s vote is assigned to casting voter $c'$ within graph $G$ but to $v$ within the graph $G'$. Hence, the number of votes delegated to $c$ and $v$ together increased after acting as if $v$ were a casting voter. This is a contradiction to the fact that the delegation rule induced by $\fpre$ satisfies copy-robustness. 

\textbf{Case 2:} $(s,x) \fpre s'$ but $s' \fpre s$. 
Thus, $w$'s vote is assigned to casting voter $c$ within graph $G$ but to $c'$ within the graph $G'$. Hence, the number of votes delegated to $c$ and $v$ together decreased after acting as if $v$ were a casting voter. This is a contradiction to the fact that the delegation rule induced by $\fpre$ satisfies copy-robustness. 
\end{proof}}

\newif\ifcopyRobustCharAppendix
\copyRobustCharAppendixtrue 

\ifcopyRobustCharAppendix
\else 
\copyRobustCharProof
\fi

\iffalse
\begin{figure}[h!]
    \begin{tikzpicture}[auto]
        \tikzstyle{vertex}=[circle,fill=black!25,minimum size=10pt,inner sep=0pt]
        \tikzstyle{caster}=[rectangle,fill=black!25,minimum size=10pt,inner sep=0pt]
        \tikzstyle{edge} = [draw,thick,->]
        \tikzstyle{edgewavy} = [draw,thick,->,decorate, decoration={
                                    zigzag,
                                    segment length=4,
                                    amplitude=.9,post=lineto,
                                    post length=2pt
                                }]
        \tikzstyle{weight} = [font=\small]
        
        \foreach \pos/\name/\lab in {{(0,1)/u/u}, {(1,2)/v/v}}
                            \node[vertex] (\name) at \pos{$\lab$};
        \foreach \pos/\name/\lab in {{(2,0)/c/c}, {(2,2)/t/c'}}
                            \node[caster] (\name) at \pos{$\lab$};
        
        \foreach \source/ \dest /\weight in {u/c/s_1, u/v/s_2}
                \path[edgewavy] (\source) -- node[weight] {$\weight$} (\dest);
        \foreach \source/ \dest /\weight in {v/t/x}
                \path[edge] (\source) -- node[weight] {$\weight$} (\dest);
    \end{tikzpicture}
\end{figure}
\fi

The restrictive nature of copy-robustness becomes even more apparent in the following impossibility result. 

\begin{restatable}[\appSymb]{proposition}{impConfCopy}
No sequence rule is both confluent and copy-robust. \label{thm:lemma-c-cr}
\end{restatable}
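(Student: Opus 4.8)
The plan is to combine the two structural characterizations already proved: \Cref{pr: prefix-invariant} tells us exactly what confluence demands of the inducing relation $\fpre$, and \Cref{pr: suffix-invariant} tells us what copy-robustness demands. I will assume for contradiction that some sequence rule induced by $\fpre$ is both confluent and copy-robust, and then exhibit a small comparable pair of sequences on which the two sets of conditions clash. The core tension is this: copy-robustness (via \Cref{pr: suffix-invariant}) says that appending an entry $x$ to the \emph{end} of a sequence never changes how it compares against a fixed $s'$, i.e.\ $(s,x) \fpre s' \Leftrightarrow s \fpre s'$; whereas confluence (via property (ii) of \Cref{pr: prefix-invariant}) forces $s \fpre (s',s)$ whenever $s$ is comparable to $(s',s)$, which is a statement about prepending. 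Together with property (i), these should over-constrain $\fpre$.

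\textbf{The key steps.} First I would pick a concrete comparable pair that is simple enough to analyze but rich enough to trigger both axioms. A natural candidate is to work with sequences sharing no prefix, such as $(1)$ and $(2,\dots)$, and use the suffix-invariance from \Cref{pr: suffix-invariant} to strip trailing entries. Concretely, take two comparable sequences $s$ and $s'$; by copy-robustness, comparisons are invariant under appending to either side (applying \Cref{pr: suffix-invariant} symmetrically, noting that transitivity and the linear-order property let me also append to the right argument). This reduces every comparison to a comparison of \emph{first entries} only: I would argue that $\fpre$ must be determined entirely by the first edge-rank of each sequence, since any two comparable sequences can be reduced (by repeatedly deleting trailing entries, preserving the $\fpre$-relation) to their length-one prefixes, which are comparable iff they have distinct first entries. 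Thus copy-robustness collapses $\fpre$ to a linear order on first ranks alone.

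\textbf{Deriving the contradiction.} Once $\fpre$ depends only on the first rank, I invoke confluence's property (ii): choosing $s = (2)$ and $s' = (1)$, the sequences $(2)$ and $(1,2)$ are comparable (neither is a prefix of the other), so (ii) forces $(2) \fpre (1,2)$. But $(1,2)$ has first rank $1$ and $(2)$ has first rank $2$; since $\fpre$ now depends only on the first rank, and the rule must be a well-defined linear order, I would check which first rank $\fpre$ prefers. If $\fpre$ prefers smaller first ranks, then $(1,2) \fpre (2)$, contradicting (ii). If $\fpre$ prefers larger first ranks, I construct a symmetric instance—e.g.\ comparing $(1)$ against $(2,1)$, where (ii) forces $(1) \fpre (2,1)$ but first-rank dependence plus the ``larger is better'' preference would give $(2,1) \fpre (1)$—again a contradiction. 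Either way the two axioms cannot coexist.

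\textbf{Main obstacle.} The delicate point is justifying that copy-robustness really collapses $\fpre$ to a function of the first entry alone. \Cref{pr: suffix-invariant} as stated only lets me append to the \emph{left-hand} sequence $(s,x) \fpre s' \Leftrightarrow s \fpre s'$; to append to the right-hand side I must use completeness of $\fpre$ on comparable sets together with transitivity, and I must verify at each stripping step that the sequences involved remain pairwise comparable (none becomes a prefix of another). Managing these comparability side-conditions carefully—ensuring every intermediate pair I compare is genuinely realizable by some instance—is where the real work lies; the final clash with property (ii) is then immediate.
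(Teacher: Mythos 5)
Your proof is correct, but it takes a different (more modular) route than the paper's. The paper proves this proposition directly at the instance level: it constructs two explicit four-voter instances --- $v$ and $u$ delegating to each other with rank-$1$ edges and each having a rank-$2$ edge to a distinct casting voter, plus the rank-swapped variant --- argues from the definition of confluence that in each instance both voters must take their direct edges (otherwise the two selected paths would cross), and then applies the definition of copy-robustness (turning $u$ into a casting voter and comparing voting weights) to extract $(2) \fpre (1)$ from the first instance and $(1) \fpre (2)$ from the second, a contradiction. You instead never touch an instance: you combine property (ii) of \cref{pr: prefix-invariant}, which gives $(2) \fpre (1,2)$ and $(1) \fpre (2,1)$, with \cref{pr: suffix-invariant}, which gives $(1,2) \fpre (2) \Leftrightarrow (1) \fpre (2)$ and $(2,1) \fpre (1) \Leftrightarrow (2) \fpre (1)$, and then let totality and antisymmetry of $\fpre$ on the comparable pairs $\{(1),(2)\}$, $\{(2),(1,2)\}$, $\{(1),(2,1)\}$ do the rest. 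The gadgets are the same --- the instances hidden inside the proofs of those two propositions are exactly the paper's figures --- but your packaging makes clear that the impossibility is a purely order-theoretic consequence of the two characterizations, at the cost of relying on them as black boxes, whereas the paper's argument is self-contained. Two minor points: your general ``collapse to first ranks'' is overstated, since two comparable sequences sharing a first entry cannot be stripped to length-one prefixes (the prefixes become incomparable); however, you only need the collapse for pairs with distinct first entries, where it is a single left-hand application of \cref{pr: suffix-invariant}. Relatedly, the ``main obstacle'' you flag (appending/stripping on the right-hand argument) never arises: both equivalences you need strip only the left argument, so no transitivity or completeness gymnastics are required.
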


\newcommand{\impConfCopyProof}{
\begin{proof}
Assume there exists a sequence rule induced by $\fpre$, satisfying both confluence and copy-robustness. 
Consider the instances in Figure~\ref{fig:lemma-c-cr} with casting voters $t$ and $s$.
In both instances, $t$ is the assigned casting voter for $v$ and $s$ is the assigned casting voter for $u$. Assume for contradiction that $s$ is the assigned casting voter for $v$. Because $\mathcal{S}_u = \mathcal{S}_v$ in any of the two networks, then $t$ is the assigned casting voter for $u$. This contradicts confluence. 

In Figure~\ref{sfig: lemma-c-cr a)}, if $u$ becomes a casting voter, $t$ needs remain to be the assigned casting voter for $v$ by copy-robustness. This implies that $(2) \fpre (1)$. The same reasoning on the instance in Figure~\ref{sfig: lemma-c-cr b)} gives $(1) \fpre (2)$, a contradiction.
\begin{figure}
     \begin{subfigure}[b]{0.45\linewidth}
     \begin{center}
     \resizebox{3.5cm}{!}{
    \begin{tikzpicture}[auto]
        \tikzstyle{vertex}=[circle,fill=black!25,minimum size=10pt,inner sep=0pt]
        \tikzstyle{caster}=[rectangle,fill=black!25,minimum size=10pt,inner sep=0pt]
        \tikzstyle{edge} = [draw,thick,->]
        \tikzstyle{edge_bend} = [draw,thick,->]
        \tikzstyle{weight} = [font=\small]
         
        \foreach \pos/\name/\lab in {{(1,1)/v/v}, {(2,1)/u/u}}
                            \node[vertex] (\name) at \pos{$\lab$};
        \foreach \pos/\name/\lab in {{(0,1)/t/t}, {(3,1)/s/s}}
                            \node[caster] (\name) at \pos{$\lab$};
        
        \foreach \source/ \dest /\weight in {v/t/2, u/s/2}
                \path[edge] (\source) edge node[weight] {$\weight$} (\dest);
         
         \foreach \source/ \dest /\weight in {v/u/1, u/v/1}
                \path[edge] (\source) edge [bend left=30] node[weight] {$\weight$} (\dest);
    \end{tikzpicture}
    }
     \end{center}
    \caption{}\label{sfig: lemma-c-cr a)}
    \end{subfigure}
     \begin{subfigure}[b]{0.45\linewidth}
     \begin{center}
     \resizebox{3.5cm}{!}{
        \begin{tikzpicture}[auto]
        \tikzstyle{vertex}=[circle,fill=black!25,minimum size=10pt,inner sep=0pt]
        \tikzstyle{caster}=[rectangle,fill=black!25,minimum size=10pt,inner sep=0pt]
        \tikzstyle{edge} = [draw,thick,->]
        \tikzstyle{edge_bend} = [draw,thick,->]
        \tikzstyle{weight} = [font=\small]
         
        \foreach \pos/\name/\lab in {{(1,1)/v/v}, {(2,1)/u/u}}
                            \node[vertex] (\name) at \pos{$\lab$};
        \foreach \pos/\name/\lab in {{(0,1)/t/t}, {(3,1)/s/s}}
                            \node[caster] (\name) at \pos{$\lab$};
        
        \foreach \source/ \dest /\weight in {v/t/1, u/s/1}
                \path[edge] (\source) edge node[weight] {$\weight$} (\dest);
         
         \foreach \source/ \dest /\weight in {v/u/2, u/v/2}
                \path[edge] (\source) edge [bend left=30] node[weight] {$\weight$} (\dest);
    \end{tikzpicture}
    }
     \end{center}
    \caption{}\label{sfig: lemma-c-cr b)}
    \end{subfigure}
     \caption{Construction in the proof of Lemma \ref{thm:lemma-c-cr}.} \label{fig:lemma-c-cr}
\end{figure}
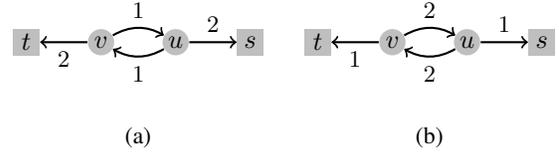
\end{proof}
}

\newif\ifimpConfCopyAppendix
\impConfCopyAppendixtrue 

\ifimpConfCopyAppendix
\else
\impConfCopyProof
\fi 

Hence, none of \bfd, \minsumseq, \diffusion, and \lexrank is copy-robust. On the other hand, \citet{BeSw15a} have implicitly shown that the non-confluent sequence rule \dfd satisfies the axiom (we give a proof in  \cref{thm:dfd-characterization}). 
Our next result shows that the incompatibility between confluence and copy-robustness is indeed restricted to sequences rules: We prove that \minsumarb is copy-robust. We use the same tie-breaking as in \Cref{prop:bordaGuru}.

\begin{restatable}[\appSymb]{proposition}{arbCopy}
There exists a tie-breaking rule for which \minsumarb satisfies copy-robustness.
\end{restatable}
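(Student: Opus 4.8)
The plan is to mirror the proof of \Cref{prop:bordaGuru}: use the same \emph{priority order} tie-breaking $\succ_{\pi}$ and argue about \emph{absolute} voting weights $z_f(G,r,u) = |\{d \in D \mid f(G,r,d) \text{ ends in } u\}| + 1$. This reduction is justified exactly as for copy-robustness in general: passing from $G$ to $G'$ neither isolates any delegating voter (a voter whose only route to $c$ ran through $v$ now reaches the casting voter $v$) nor un-isolates any voter, so $|C'|+|D'| = |C|+|D|$, and the required relative identity follows from its absolute counterpart $z_f(G,r,c) = z_f(G',r,c) + z_f(G',r,v)$.

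Fix $\pi$ and let $B$ and $B'$ be the $\succ_{\pi}$-optimal minimum-rank-sum $C$-branching in $\bar{G}$ and $C'$-branching in $\bar{G}'$, respectively. Since $f(G,r,v)$ has length one and ends in $c$, the edge $(v,c)$ lies in $B$. The heart of the argument is the claim $B' = B \setminus \{(v,c)\}$, and the first thing I would set up is a single-edge correspondence between the two instances. Deleting $(v,c)$ from any $C$-branching of $\bar{G}$ that contains it yields a $C'$-branching of $\bar{G}'$ (only $v$ loses its unique outgoing edge, and $v \in C'$); conversely, adding $(v,c)$ to any $C'$-branching $\hat{B}$ of $\bar{G}'$ yields a $C$-branching of $\bar{G}$, where the only point to check is acyclicity: since $c$ is a root it has no outgoing edges in $\hat{B}$, so $(v,c)$ cannot close a cycle. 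This correspondence shifts the rank sum by exactly $r((v,c))$.

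Next I would invoke optimality. Applying the correspondence to $\hat{B} = B'$ and to $B \setminus \{(v,c)\}$ gives the two inequalities $\sum_{e \in B} r(e) \le \sum_{e \in B'} r(e) + r((v,c))$ and $\sum_{e \in B'} r(e) \le \sum_{e \in B} r(e) - r((v,c))$, which sandwich to equality; hence $B \setminus \{(v,c)\}$ is a minimum-rank-sum branching in $\bar{G}'$, and the correspondence restricts to a bijection between the minimum-rank-sum branchings of the two instances that contain, resp.\ omit, $(v,c)$. The main obstacle is then to promote this bijection from rank sums to the tie-breaking order: because every branching on the $\bar{G}$-side assigns $v$ the same edge $(v,c)$, voter $v$ is never a point of disagreement, so the first $\pi$-coordinate on which two such branchings differ coincides with the first coordinate on which their images differ. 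Thus $X \succ_{\pi} Y$ in $\bar{G}$ iff $X \setminus \{(v,c)\} \succ_{\pi} Y \setminus \{(v,c)\}$ in $\bar{G}'$. Since outgoing-edge ranks pin down edges uniquely, $\succ_{\pi}$ is a strict total order on branchings, so the $\succ_{\pi}$-maximality of $B$ forces the $\succ_{\pi}$-maximality of $B \setminus \{(v,c)\}$ in $\bar{G}'$, giving $B' = B \setminus \{(v,c)\}$.

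Finally I would read off the weights from $B' = B \setminus \{(v,c)\}$. Deleting $(v,c)$ detaches the part of $c$'s tree hanging below $v$: every delegating voter whose $B$-path to $c$ passed through $v$ now has representative $v$ in $B'$, the remaining voters assigned to $c$ keep $c$, and $v$ itself turns from a delegator counted towards $c$ into its own casting voter. Writing $a$ for the number of delegating voters assigned to $c$ in $B$ (which includes $v$) and $a'_c, a'_v$ for the delegating voters assigned to $c$, resp.\ $v$, in $B'$, this bookkeeping yields $a = a'_c + a'_v + 1$, whence $z_f(G,r,c) = a + 1 = (a'_c + 1) + (a'_v + 1) = z_f(G',r,c) + z_f(G',r,v)$, as required.
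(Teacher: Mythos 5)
Your proposal is correct and follows essentially the same route as the paper's proof: the same priority-order tie-breaking, the same central claim $B' = B \setminus \{(v,c)\}$, and the same single-edge correspondence $\hat{B} = B' \cup \{(v,c)\}$ together with the observation that $v$ is never a point of disagreement in the tie-break. The only differences are organizational --- you argue directly via a sandwich of the two optimality inequalities and an order-preserving bijection, where the paper derives a contradiction in two cases, and you make explicit the weight bookkeeping and the relative-to-absolute reduction that the paper leaves implicit.
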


\newcommand{\arbCopyProof}{\begin{proof}

In the following we show that \minsumarb satisfies copy-robustness when using the \emph{priority order} tie-breaking, as defined in the proof of \Cref{prop:bordaGuru}. Let $(\Bar{G},r)$ be the reduced graph of some input instance for which there exists some delegating voter $v \in D$, such that, within the $C$-branching $B$ selected by \minsumarb within the graph $\Bar{G}$, the head of the edge $\delta_{B}^{+}(v)$ is a casting voter, which we will call $c$. Now, let $(G',r)$ be the instance derived from $G$ when $v$ becomes a casting voter (as defined within \cref{def: copy-robustness}). Let $B'$ be the $C$-branching selected by \minsumarb within the graph $\Bar{G}'$ (the reduced graph of $G'$). We claim that $B' = B \setminus \{(v,c)\}$. Assume not. Then, we construct an $C$-branching $\hat{B}$ within the graph $\bar{G}$ by setting $\hat{B} = B' \cup \{(v,c)\}$. Observe that this is possible since $v$ is a sink in $\bar{G}'$. We will show that $\hat{B}$ constitutes a contradiction to the selection of $B$ in $\bar{G}$. To this end, we consider the following two cases. 

\textbf{Case 1:} $\sum_{e \in B'} r(e) < \sum_{e \in B \setminus \{(v,c)\}} r(e)$. 
Then,
\begin{align*}
\sum_{e \in \hat{B}} r(e) & = \sum_{e \in B'} r(e) + r((v,c)) \\
 & <\sum_{e \in B \setminus \{(v,c)\}} r(e) + r((v,c)) = \sum_{e \in B} r(e),
\end{align*}
a contradiction to the minimality of $A$ in $\bar{G}$. 

\textbf{Case 2:} $B' \succ_{\sigma} B \setminus \{(v,c)\}$ and 
$\sum_{e \in B'} r(e) = \sum_{e \in B \setminus \{(v,c)\}} r(e)$. 
Let $i \in \{1, \dots |V|\}$ be such that the two statements $(i)$ and $(ii)$ from the definition of lexicographical tie-breaking prove that $B' \succ_{\sigma} B \setminus \{(v,c)\}$. From that it follows that $\hat{B} \succ_{\sigma} B$ since, no matter whether $\sigma(v)<i$ or $\sigma(v)>i$, the statements $(i)$ and $(ii)$ hold true for $i$, $\hat{B}$ and $B$. Moreover, due to the same argument as in Case 1, we get that $\sum_{e \in \hat{B}} r(e) = \sum_{e \in B} r(e)$, which is a contradiction to the fact that $B$ is selected within the graph $\bar{G}$.
\end{proof}
}

\newif\ifarbCopyAppendix
\arbCopyAppendixtrue %set to false if wanted

\ifarbCopyAppendix
\else
\arbCopyProof
\fi

\subsection{Characterizations} \label{subsec: impossibilities}

Finally, we give axiomatic characterizations of DFD and BFD. To this end, we define two additional properties. 

\begin{defn}
A sequence rule $f$ is \emph{weakly lexicographic} if for two comparable sequences $s$ and $s'$ with $|s|=|s'|$ that only differ in their last rank value, it holds that $s \fpre s' \Leftrightarrow s \lex s'.$ The rule $f$ is \emph{strongly lexicographic} if the same holds for any two comparable sequences of equal length. 
\end{defn}

Every reasonable sequence rule should be weakly lexicographic, as this can also be seen as avoiding the selection of ``Pareto dominated'' sequences. The stronger version, however, is technical in nature and has no normative appeal.

\begin{restatable}[\appSymb]{theorem}{DFDChar}
DFD is the only sequence rule that is weakly lexicographic and copy-robust.
\label{thm:dfd-characterization}
\end{restatable}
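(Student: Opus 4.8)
The plan is to prove the characterization in two directions: first that DFD itself is both weakly lexicographic and copy-robust, and second that these two properties uniquely pin it down among all sequence rules. For the first direction, weak lexicographicality is immediate from the definition of DFD, since $\lex$ orders any two equal-length sequences differing only in their last entry exactly by that last entry. Copy-robustness of DFD is the claim attributed to \citet{BeSw15a}; I would verify it directly by checking that the relation $\lex$ satisfies the suffix-invariance property from \Cref{pr: suffix-invariant}, namely that $(s,x) \lex s' \Leftrightarrow s \lex s'$ for comparable $s,s'$. Indeed, if $s$ and $s'$ are comparable then neither is a prefix of the other (\Cref{pr: obsCompIff}), so they first differ at some index $i \le \min(|s|,|s'|)$; appending $x$ to $s$ cannot change this first point of disagreement, so the lexicographic comparison of $(s,x)$ with $s'$ agrees with that of $s$ with $s'$. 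I would then argue that suffix-invariance of the inducing order is actually equivalent to copy-robustness for sequence rules (the forward direction is \Cref{pr: suffix-invariant}; the converse uses the same construction in \Cref{fig:dfd-characterization} read backwards), giving copy-robustness of DFD.

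For the uniqueness direction, let $f$ be an arbitrary sequence rule induced by $\fpre$ that is weakly lexicographic and copy-robust. The goal is to show $\fpre$ agrees with $\lex$ on every comparable pair. By \Cref{pr: suffix-invariant}, copy-robustness yields suffix-invariance: for any comparable $s,s'$ and any $x$, $(s,x) \fpre s' \Leftrightarrow s \fpre s'$. Applying this repeatedly, I can strip common or trailing portions so that comparing any two comparable sequences $s, s'$ reduces to comparing them up to and including their first point of disagreement. Concretely, given comparable $s,s'$ that first differ at position $i$ (with $s_i \ne s_i'$, existence guaranteed by \Cref{pr: obsCompIff}), I would use suffix-invariance to truncate everything after position $i$ on both sides, reducing to the comparison of the length-$i$ prefixes $s_{\le i}$ and $s'_{\le i}$, which have equal length and differ only in their last entry.

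At that point weak lexicographicality applies directly: the two truncated sequences have equal length $i$ and differ only in their final rank, so $s_{\le i} \fpre s'_{\le i} \Leftrightarrow s_{\le i} \lex s'_{\le i} \Leftrightarrow s_i < s_i'$. Tracing back through suffix-invariance, this gives $s \fpre s' \Leftrightarrow s_i < s_i' \Leftrightarrow s \lex s'$, so $\fpre$ coincides with $\lex$ on all comparable pairs and hence $f = \text{DFD}$. The main obstacle, and the step requiring the most care, is the reduction via suffix-invariance: I must confirm that truncating after the first point of disagreement is legitimate, i.e.\ that the intermediate truncated sequences are genuinely comparable (so that $\fpre$ is defined and linear on them and suffix-invariance is applicable at each step), and that the prefixes up to position $i$ remain comparable with each other. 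Here I would lean on \Cref{pr: obsCompIff}: two sequences are comparable iff neither is a prefix of the other, which holds for $s_{\le i}$ and $s'_{\le i}$ precisely because they differ at their common last position $i$. One subtlety to handle is the boundary case $i = \min(|s|,|s'|)$ versus $i < \min(|s|,|s'|)$, but since comparability forbids one sequence being a prefix of the other, the disagreement index always satisfies $i \le \min(|s|,|s'|)$ with $s_i \ne s_i'$, so the truncation argument goes through uniformly.
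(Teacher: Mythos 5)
Your uniqueness argument is essentially the paper's own proof and is correct: repeated use of \Cref{pr: suffix-invariant} truncates both sequences down to their first point of disagreement (your comparability checks via \Cref{pr: obsCompIff} are exactly the right ones), and weak lexicographicality then reduces the comparison to $s_i < s'_i$. The genuine gap is in the existence half, where you derive copy-robustness of DFD from the claim that suffix-invariance of the inducing order is \emph{equivalent} to copy-robustness. Only the direction proved in \Cref{pr: suffix-invariant} holds; the converse is false. Consider the relation $\fpre$ that compares two comparable sequences at their first point of disagreement, preferring the \emph{smaller} entry if that position is $1$ and the \emph{larger} entry if that position is at least $2$. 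This is a linear order on every comparable subset of $\sequences$ (it is lexicographic with the natural order on first entries and the reversed order on later positions), and it is suffix-invariant, since appending entries to the end of a sequence never changes the first point of disagreement with a sequence comparable to it. Yet the induced sequence rule is not copy-robust: take casting voters $c,c''$, delegating voters $v,u,w$, and edges $(v,c)$ of rank $1$, $(v,u)$ of rank $2$, $(u,c'')$ of rank $1$, $(w,v)$ of rank $1$. Then $\mathcal{S}_v=\{(1),(2,1)\}$, so $v$ is assigned the one-edge path to $c$, while $\mathcal{S}_w=\{(1,1),(1,2,1)\}$ and the rule assigns $w$ the sequence $(1,2,1)$, ending at $c''$. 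When $v$ becomes a casting voter, $w$'s unique remaining path ends at $v$, so the joint relative weight of $c$ and $v$ rises from $2/5$ to $3/5$, violating copy-robustness.

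The reason your ``construction read backwards'' cannot close this hole is that the instances in \Cref{fig:dfd-characterization} feature only one path through $v$; when a voter has several paths through $v$ with different continuations, suffix-invariance says nothing about which of them is selected, and the selection need not be consistent with $v$'s own choice. Note also that retreating to the weaker claim ``suffix-invariance together with weak lexicographicality implies copy-robustness'' would be circular: by your own uniqueness argument those two properties already force the order to be $\lex$, so that claim is equivalent to the statement that DFD is copy-robust---precisely what you are trying to prove. What is needed (and what the paper does) is a direct argument for DFD: for a voter $w$ whose DFD path avoids $v$, the lex-optimal sequence remains optimal after every sequence through $v$ is cut off at $v$; for a voter $w$ whose DFD path contains $v$, the optimality of $v$'s one-edge path forces all of $v$'s lower-ranked edges to lead to isolated voters, hence $w$'s path must end with the edge $(v,c)$, and its truncation at $v$ is lex-optimal in the modified instance, so $w$'s weight moves from $c$ to $v$. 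With such a direct proof of the existence half, your proposal would be complete; as written, it is not.
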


\newcommand{\DFDCharProof}{
\begin{proof}
It follows directly from the definition of DFD that it is a sequence rule and weakly lexicographic. We now show that DFD also satisfies copy-robustness. Let $(G,r)$ be an instance in which some delegating voter $v$ is assigned a one-edge path by DFD. We call $c$ the casting voter that $v$ is assigned to. Let $(G',r)$ be the instance that is obtained from $(G,r)$ when $v$ becomes a casting voter (as defined in the definition of copy-robustness). Denote by $\mathcal{S}_w$ and $\mathcal{S}'_w$ be the sequences corresponding to paths from some delegating voter $w$ to casting voters in $(G,r)$ and $(G',r)$, respectively. We claim $DFD(G,r,w)$ ends in $c$ if and only if $DFD(G',r,w)$ ends in $c$ or $v$. This suffices to prove that $DFD$ is copy-robust. In the following, we distinguish two cases: 

\textbf{Case 1.} $DFD(G,r,w)$ does not contain $v$. The set $\mathcal{S}'_w$ consists of the same sequences as $\mathcal{S}_w$ with the difference that any sequence induced by a path leading through $v$ is cut off at this point. Let $s$ be such that sequence and $s'$ be the corresponding sequence after the cut-off at $v$. We know that $s(DFD(G,r,w)) \lex s$. Since $s'$ is not a prefix of $s(DFD(G,r,w))$ we also know that $s(DFD(G,r,w)) \lex s'$. Hence $s(DFD(G,r,w))$ is still lexicographically optimal in $\mathcal{S}'_w$ and $w$'s vote is assigned to the same casting voter in $G$ and $G'$. 

\textbf{Case 2.} $DFD(G,r,w)$ contains $v$. 
Since the edge $(v,c)$ is lexicographically smallest among all edges not leading to isolated voters, it holds that $(v,c) \in DFD(G,r,w)$, i.e., the voting weight of voter $w$ is assigned to casting voter $c$ in $G$. 
Moreover, the sequence of $DFD(G,r,v)$ without its last edge is lexicographically smaller than all sequences in $\mathcal{S}_v$. 
The set $\mathcal{S}'_w$ consists of the same sequences as $\mathcal{S}_w$ with the difference that any sequence induced by a path leading through $v$ is cut off at this point. The sequence of $DFD(G,r,w)$ without the last edge is also lexicographically minimal among $\mathcal{S}'_w$ and the voting weight of $w$ is assigned to $v$ in $G'$.

Now, assume for contradiction that there exists a sequence rule $f$ with order relation $\fpre_f$ that is different from the order relation $\fpre_{DFD}$ of DFD, is weakly lexicographic and copy-robust. Then there exist at least two comparable sequences $s$ and $s'$ such that $s \fpre_{f} s'$ but $s' \fpre_{DFD} s$.
Let $k$ be the length of the joint prefix of $s$ and $s'$.
Because both $f$ and DFD are copy-robust, we can repeatedly apply \cref{pr: suffix-invariant}, i.e.,
\begin{align*}
    & s \fpre_{f} s' \\ 
    \Leftrightarrow & (s_1, \dots, s_{k+1}) \fpre_{f} s' \\ 
    \Leftrightarrow & (s_1, \dots, s_{k+1}) \fpre_{f} (s'_1, \dots, s'_{k+1}) \text{, and }\\
  & s' \fpre_{DFD} s \\ 
   \Leftrightarrow &(s'_1, \dots, s'_{k+1}) \fpre_{DFD} s \\
   \Leftrightarrow &(s'_1, \dots, s'_{k+1}) \fpre_{DFD} (s_1, \dots, s_{k+1}) \text.   
 \end{align*}
Since $f$ and \dfd have to decide equally for the two sequences $(s_1, \dots, s_{k+1})$ and $(s'_1, \dots, s'_{k+1})$, this is a contradiction. 
\end{proof}
}

\newif\ifDFDCharAppendix
\DFDCharAppendixtrue

\ifDFDCharAppendix
\else
\DFDCharProof
\fi

\begin{restatable}[\appSymb]{theorem}{BFDChar}\label{thm:bfd-characterization}
BFD is the only sequence rule that is confluent and strongly lexicographic.
\end{restatable}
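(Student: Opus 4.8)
The proof has two directions. For membership I would first observe that \bfd is strongly lexicographic directly from its definition: two comparable sequences of equal length are ordered by \bfd purely through its lexicographic tie-breaking, so $s \fpre s' \Leftrightarrow s \lex s'$; and \bfd is confluent by \Cref{cor: BFD-MinSUm-confluent}. For uniqueness, let $f$ be an arbitrary confluent, strongly lexicographic sequence rule, induced by a relation $\fpre$. Since a sequence rule is completely determined by how $\fpre$ orders comparable sequences (the path selected for $v$ is read off from the $\fpre$-maximum of the comparable set $\seqv$), it suffices to show that $\fpre$ agrees with the order inducing \bfd on every pair of comparable sequences, i.e. that $\fpre$ prefers the shorter of two comparable sequences and breaks ties lexicographically.

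The tie-breaking part is immediate: for comparable $s,s'$ with $|s|=|s'|$, strong lexicographicity gives $s \fpre s' \Leftrightarrow s \lex s'$, exactly as in \bfd. The entire content is therefore the claim that whenever $s,s'$ are comparable with $|s|<|s'|$, the shorter sequence is preferred, $s \fpre s'$. Here I would invoke the confluence characterization (\Cref{pr: prefix-invariant}), which tells us that $\fpre$ satisfies (i) invariance under prepending a common first entry, $s \fpre s' \Leftrightarrow (x,s) \fpre (x,s')$, and (ii) $s \fpre (s',s)$ whenever these are comparable.

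The key step is a ``sandwich'' that forces $s \fpre s'$. A naive attempt---take the length-$|s'|$ sequence $(w,s)$, use $s \fpre (w,s)$ from (ii), and then compare $(w,s)$ to $s'$ by strong lexicographicity---fails when $s'$ is lexicographically very small (for instance $s=(2)$, $s'=(1,1)$), because then no admissible extension $(w,s)$ is lexicographically below $s'$. I would circumvent this by creating lexicographic room with a large common prefix. Pick $N:=\max(s')+1\ge 2$ and set $\hat s:=(N,s)$, $\hat s':=(N,s')$; by (i), $s \fpre s' \Leftrightarrow \hat s \fpre \hat s'$. Now let $B:=(1^{\,|s'|-|s|},\hat s)$, which has the same length as $\hat s'$. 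Property (ii) yields $\hat s \fpre B$ (as $B$ begins with $1\neq N$, so $B$ and $\hat s$ are comparable), while $B$ begins with $1<N$ and hence $B \lex \hat s'$, so strong lexicographicity yields $B \fpre \hat s'$. Since $\hat s$, $B$, and $\hat s'$ are pairwise comparable, transitivity of $\fpre$ on this comparable set gives $\hat s \fpre \hat s'$, and therefore $s \fpre s'$ by (i). This establishes agreement with \bfd on all comparable pairs and completes the uniqueness argument.

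I expect the main obstacle to be exactly the bad case just described: a single application of property (ii) combined with a lexicographic comparison does not suffice, because the longer sequence can be lexicographically smaller than every admissible extension of the shorter one. The crucial idea---and the only genuinely nontrivial part of the argument---is the prepend-by-$N$ construction, which uses a large leading entry to guarantee that an intermediate sequence starting with $1$ is simultaneously an extension of $\hat s$ (so that (ii) applies) and lexicographically below $\hat s'$ (so that strong lexicographicity applies), after which property (i) transfers the conclusion back to $s$ and $s'$.
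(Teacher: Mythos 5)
Your proof is correct. It reaches the same conclusion via the same underlying trick as the paper, but the packaging is genuinely cleaner: the paper argues by minimal counterexample and builds an explicit gadget instance (a chain of $|s|-|s'|$ rank-$1$ edges from a node $v_1$ to a node $v_2$, plus a rank-$2$ shortcut edge $(v_1,v_2)$, with the two tails $s$ and $s'$ hanging off $v_2$), then applies confluence of the delegation rule \emph{on that instance} to force $v_1$'s path to inherit $v_2$'s choice, invokes property (ii) of \cref{pr: prefix-invariant} to select the shortcut, and derives a contradiction with strong lexicographicity. You instead stay entirely at the level of the order relation: property (i) lets you prepend a large common entry, property (ii) gives the sandwich $\hat s \fpre B$, strong lexicographicity gives $B \fpre \hat s'$, and transitivity on the pairwise-comparable triple $\{\hat s, B, \hat s'\}$ closes the argument — no instance construction and no minimality argument are needed, since all the graph-theoretic content is already encapsulated in \cref{pr: obsCompIff} and \cref{pr: prefix-invariant}. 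The core combinatorial idea is identical in both proofs: pad the shorter sequence with $1$'s to equalize lengths and exploit a larger leading entry for the lexicographic comparison; indeed, your choice $N=\max(s')+1$ is more than needed — $N=2$ suffices, since only the first-entry comparison $1<N$ is ever used, and this is exactly the role of the rank-$2$ shortcut edge in the paper's gadget. What your route buys is brevity and a direct (non-contradiction) statement that any such rule prefers the shorter of two comparable sequences; what the paper's route buys is an explicit witness instance showing where a hypothetical deviating rule would visibly misbehave.
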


\newcommand{\BFDCharProof}{
\begin{proof}
By \Cref{cor: BFD-MinSUm-confluent} BFD is confluent and by definition BFD is strongly lexicographic. In the following we show that BFD is the only such sequence rule.

Let $f$ be a sequence rule that is confluent and strongly lexicographic with order relation $\fpre_f$ over sequences. We will show in the following that $\fpre_f$ equals the sequence relation $\fpre_{\bfd}$ induced by BFD. 
For contradiction, assume that $\fpre_f$ and $\fpre_{\bfd}$ are different. 
 Then there exist at least two comparable sequences $s$ and $s'$ such that $s \fpre_{f} s'$ but $s' \fpre_{\bfd} s$. Among all such pairs $s$ and $s'$, choose one such that $|s|+|s'|$ is minimal.
 Because $f$ and BFD are strongly lexicographic, $s$ and $s'$ cannot have the same length, since otherwise both would order the sequences in the same way.
 Since BFD prefers short sequences, we know that $|s| > |s'|$. 
 Denote sequence $s = (s_1, \dots, s_\ell)$ and $s' = (s'_1, \dots, s'_{\ell'})$ such that $\ell > \ell'$.
Now, heading towards a contradiction, 
consider the reduced instance depicted in Figure~\ref{fig:bfd-characterization} and observe that $\mathcal{S}_{v_1} = \{(1, \dots, 1, 2, s), (1, \dots, 1, 2,s'),(2,s),(2,s')\}$. From our assumption we know that $s(f(v_2))=s$. Since $f$ is confluent it follows that $s(f(v_1)) \in \{(1, \dots, 1, 2,s),(2,s)\}$. 
Furthermore, because of confluence and by Proposition~\ref{pr: prefix-invariant} (ii), $s(f(v_1)) = (2,s)$, because $(2,s)$ is a suffix of $(1, \dots, 1, 2,s)$ and the sequences are comparable.
However, in particular this implies $(2,s) \fpre_f (1, \dots, 1, 2,s')$. Because $(1,...,1,2,s)$ and $(1,...,1,2,s')$ are of equal length, this is a contradiction to $f$ being strongly lexicographic.
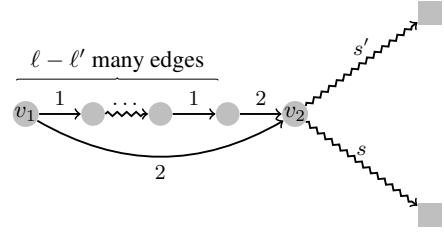
\begin{figure}
\begin{center}
     \resizebox{!}{3cm}{
    \begin{tikzpicture}
        \tikzstyle{vertex}=[circle,fill=black!25,minimum size=10pt,inner sep=0pt]
        \tikzstyle{caster}=[rectangle,fill=black!25,minimum size=10pt,inner sep=0pt]
        \tikzstyle{edge} = [draw,thick,->]
        \tikzstyle{edgewavy} = [draw,thick,->,decorate, decoration={
                                    zigzag,
                                    segment length=4,
                                    amplitude=.9,post=lineto,
                                    post length=2pt
                                }]
        \tikzstyle{weight} = [font=\small]
        
        \foreach \pos/\name/\lab in {{(1,0)/v/v_1}, {(2,0)/a/}, {(3,0)/b/}, {(4,0)/c/}, {(5,0)/d/v_2}}
                            \node[vertex] (\name) at \pos{$\lab$};
        \foreach \pos/\name/\lab in {{(7,-1.5)/u/}, {(7,1.5)/w/}}
                            \node[caster] (\name) at \pos{$\lab$};
        
        \foreach \source/ \dest /\weight in {a/b/\dots, d/u/s, d/w/s'}
                \path[edgewavy] (\source) -- node[weight, above] {$\weight$} (\dest);
        \foreach \source/ \dest /\weight in {v/a/1, b/c/1, c/d/2}
                \path[edge] (\source) -- node[weight, above] {$\weight$} (\dest);
        \foreach \source/ \dest /\weight in {v/d/2}
\path[edge] (\source)  edge [bend right =30] node[weight, below] {$\weight$} (\dest);
        \draw [decorate,decoration={brace,amplitude=1pt},xshift=-4pt,yshift=0pt]
(1,0.5) -- (4,0.5) node [black,midway, above]{$\ell-\ell'$ many edges};
    \end{tikzpicture}
    }
\end{center}
    \caption{Construction in the proof of Theorem \ref{thm:bfd-characterization}.} \label{fig:bfd-characterization}
\end{figure}
\end{proof}

}

\newif\ifBFDCharAppendix
\BFDCharAppendixtrue

\ifBFDCharAppendix
\else
\BFDCharProof
\fi

The appendix contains a characterization of \diffusion and an overview of our axiomatic results (\cref{tab:axioms}).

 \begin{figure*}[ht]
\begin{subfigure}{0.49\textwidth}
    \centering
    \input{fig_friends2}
    \caption{Synthetic data, averaged over $1000$ instances ($n=1000$, $\Delta = 4$, $p_c = 0.2$,  $\alpha = 2$)}
    \label{fig:table_friendship_synthetic}
\end{subfigure}\hfill
\begin{subfigure}{0.49\textwidth}
    \centering
    \input{fig_Facebook-reduced}
    \caption{Real-world data using Facebook network ($n = 63,731$ and $m=817,035$), averaged over $10$ instances ($p_c = 0.2$, $\alpha = 1$) }
    \label{fig:table_popularity_synthetic}
\end{subfigure}
\caption{Evaluation of delegation rules with respect to several quantities for real and synthetic unweighted undirected networks.} \label{fig:table_results_exp}
\end{figure*}

\section{Experimental Evaluation} \label{sec:experiments}

To complement our axiomatic results, we conducted an extensive experimental study in order to  compare delegation rules with respect to quantitative criteria.
To the best of our knowledge, there are no real-world data sets on liquid democracy with ranked delegations, nor do there exist established data generation methods for this setting. Thus, we developed three distinct \textit{methods for generating instances} of our setting. 
In all cases, we transform a (directed or undirected) \emph{base graph} $H = (V_H, E_H)$ to an instance $(G, r)$ of our setting,  with $G = (C \cup D \cup I, E)$. While it is always the case that $ V_H = C \cup D \cup I$, the set of casting voters $C \subseteq V_H$, the edges of the graph $E \subseteq E_H$, and the rank function $r$ are selected in a random process specified by the generation method. Hence, one base graph $H$ paired with one of the three methods leads to a wide range of instances. For each method, we use multiple base graphs coming from synthetic and real-world networks (such as partial networks of \emph{Facebook} and \emph{Twitter}). We let $n$ (respectively, $m$) denote the number of nodes (respectively, edges) of the graph $G$.
             
Our experimental setup is described in more detail in the appendix, where we also present the results for different generation methods, datasets, and parameters. Our code can be found at \url{https://github.com/TheoDlmz/rankeddelegation}.

\paragraph{Data Generation}

Our three generation methods apply to different types of base graphs: unweighted undirected networks, unweighted directed networks, and weighted directed networks. We describe the first method below, and refer to the appendix for descriptions of the other methods.\footnote{
In our method for unweighted directed networks, which is inspired by \citet{GKMP18b}, voters are more likely to delegate to nodes with a high indegree. This is reminiscent of the \textit{preferential attachment} model \citep{BaAl99a}.
For weighted directed networks, we interpret weights as ``trust'' and rank-order potential delegation edges by decreasing weight.} 

When the base graph $H = (V_H, E_H)$ is undirected and unweighted, we assume edges to represent ``friendship'' links between the nodes. We quantify the ``strength'' of a friendship between two adjacent nodes $v$ and $w$ by the number of common friends, i.e., $\lambda(v, w) = \left |  \delta_H (v) \cap \delta_H (w) \right|$, where $\delta_H(\cdot)$ is the set of neighbors of a node in $H$. 
We first choose $C \subseteq V_H$ by selecting each voter independently with probability $p_c \in [0,1]$. 
Then, for every $v \in V\setminus C$, we insert edges in $\delta_H(v)$ one by one as out-going edges for $v$ in $G$ where the probability of selecting the edge $\{v,w\} \in E_H$ is proportional to $(1 + \lambda(v,w))^{\alpha}$, and $\alpha>0$ is a constant.
The ranking $r$ among edges is defined by the order of selection. 

The real-world network we used for this method is a subgraph of the Facebook network \citep{viswanath-2009-activity}. 
For synthetic networks, the input network $H$ is generated by the standard Erd\H{o}s--R\'{e}nyi model $G(n,p)$, where we chose the edge probability $p \in [0,1]$ such that the average number of edges per voter is equal to $\Delta$.

\paragraph{Evaluation Metrics}

We implemented all considered delegation rules and evaluated the output of the delegation rules on various metrics: The maximum rank found on any delegation path (\textit{MaxRank}), the maximum and average length of delegations paths (\textit{MaxLen} and \textit{AvgLen}), and the maximum sum of ranks of a delegation path (\textit{MaxSum}).
We also computed \textit{MaxWeight}, the maximum value of the \textit{relative voting weight} 
$\omega_f(G,r,c)$
of a casting voter $c$, as a measure of the balancedness of the distribution of voting weight.  
(This value plays a crucial role in the analysis of \citealp{GKMP18b}.) 
Finally, for confluent rules, we computed the average rank of outgoing edges (\textit{AvgRank}) and the unpopularity margin of the selected branchings, divided by the number of non-isolated voters (\textit{Unpop}).
For all of these metrics, small values are desirable and correspond to light colors in our tables.

\paragraph{Experimental Results}
 
Figure \ref{fig:table_results_exp} presents the results for unweighted undirected base networks.
These results are representative also for other generation methods and data sets.  

The results indicate that the rules can be roughly aligned along a spectrum: 
On the one extreme, \bfd leads to a good weight distribution (i.e., low average values of MaxWeight), but high (maximum and average) ranks. 
On the other extreme, \dfd and \minsumarb perform better on the rank metrics but often lead to unbalanced weight distributions and long delegation paths. 
Other sequence rules fall between these extremes, with  \minsumseq closer to \bfd, and  \lexrank and \diffusion closer to \dfd and \minsumarb.
\minsumseq in particular seems to strike an attractive balance among most of the considered evaluation criteria. 

Other noteworthy observations include that \lexrank outperforms Diffusion on every metric (recall that we introduced the former as a simplification of the latter) and that \minsumarb very often produces popular branchings (for example, this is the case in $93\%$ of instances for the synthetic data set in \Cref{fig:table_results_exp}(a)).
On average, \minsumarb mostly outperforms \dfd, and it also has the advantage of being a confluent rule that moreover satisfies guru-participation and copy-robustness.

\section{Discussion}

Building upon a graph-theoretic model, we explored the rich space of delegation rules
and introduced novel rules that axiomatically and empirically outperform existing ones. 

Our experiments revealed a gap between rules such as \bfd and \minsumseq on the one side, and rules such as \lexrank on the other. This gap can be filled by defining \textit{weighted} sequence rules: 
For an increasing function $w: \mathbb{N}_{\geq 1} \rightarrow \mathbb{R}^+$, such a rule orders sequences by 
$\sum_{i} w(s_i)$.
Our axiomatic results for \minsumseq can be generalized to this class.

In future work,
it might be desirable to take preferences over delegation \textit{paths} (rather than only edges) into account. 
These could then be lifted to preferences over branchings, and 
approaches like those described in \Cref{sec:branching} could still be used to define confluent delegation rules. 

While some branching rules are inherently non-neutral, \emph{randomized delegation rules} have the potential to avoid this issue by selecting probability distributions over outgoing edges \citep{Bril18a}. 
Going beyond deterministic delegations would also raise the necessity for new axioms. 

Our experiments indicate a trade-off between minimizing the quantities \emph{unpopularity} and \emph{average rank} on the one hand, and \emph{maximum length} and \emph{maximum weight} on the other hand. It would be interesting to formalize this trade-off by finding worst-case bounds and guarantees.

We also suggest a comparison between outcomes of liquid democracy with and without ranked delegations.

\section*{Acknowledgments}

We would like to thank Jan Behrens, Axel Kistner, Andreas Nitsche, and Bj\"{o}rn Swierczek from the Association for Interactive Democracy for helpful discussions. 
This work was supported by the Deutsche Forschungsgemeinschaft under grant BR~4744/2-1, by the Austrian Science Fund (FWF), project P31890, and by the Norwegian  Research Council, project nr. 302203.

\clearpage
\appendix

\newif\ifAppendix
\Appendixtrue

\ifAppendix
\section{Missing Proofs of Section \ref{sec:rules}}

\obsCompIff*

\ifobsCompIffAppendix
\obsCompIffProof
\fi

\prConfluent* 

\ifprConfluentAppendix 
\prConfluentProof
\fi

\obsBasicConfluent*

\ifobsBasicConfluentAppendix
\obsBasicConfluentProof 
\fi

Before we turn to proof \cref{diffusionSequence}, we show two helpful lemma for $\difforder$.  

\begin{lemma}
The relation $\difforder$ satisfies the two properties in \cref{pr: prefix-invariant}. \label{lem:helperDifforder}
\end{lemma}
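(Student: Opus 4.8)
The plan is to verify the two properties of \cref{pr: prefix-invariant} separately, with property (ii) being the only nontrivial part. For property (i) I would simply observe that it is essentially built into the definition of $\difforder$: given comparable $s,s'$ and $x\in\mathbb{N}_{\geq 1}$, the sequences $(x,s)$ and $(x,s')$ share the joint prefix $(x)$, so the extension clause ``$(t,s)\difforder(t,s')$ iff $s\difforder s'$'' applied with $t=(x)$ gives exactly $s\difforder s' \Leftrightarrow (x,s)\difforder(x,s')$. The only thing to check is that $(x,s),(x,s')$ are indeed comparable; by \cref{pr: obsCompIff} this amounts to noting that prepending a common element cannot create a prefix relationship where there was none.

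For property (ii) I would prove $s\difforder(s',s)$ (for $s'$ non-empty and $s$ comparable to $(s',s)$) by strong induction on $|s|+|s'|$, splitting on whether $s$ and $(s',s)$ share a first element. If $s_1=s'_1=:x$, I would peel it off: writing $s=(x,\tilde s)$ and $s'=(x,\tilde s')$, comparability forces $|s|\geq 2$ (otherwise $s=(x)$ would be a prefix of $(s',s)$), so $\tilde s$ is non-empty, and by property (i) the goal reduces to $\tilde s \difforder \big((\tilde s',x),\tilde s\big)$, which is a \emph{smaller} instance of property (ii) and follows from the induction hypothesis. If instead $s_1\neq s'_1$, then $s$ and $(s',s)$ have no joint prefix and I can apply the three defining conditions of $\difforder$ directly, writing $M=\max(s)$: when $\max(s')>M$ condition~(i) applies; when $\max(s')=M$ the sequence $(s',s)$ has strictly more occurrences of $M$, so condition~(ii) applies; and when $\max(s')<M$ the maxima and their multiplicities coincide, the first entry of rank $M$ of $(s',s)$ lies in its $s$-part so the relevant pre-max prefix is $(s',\bar s)$, and condition~(iii) either fires immediately (if $\bar s=()$) or reduces to $\bar s \difforder (s',\bar s)$, again a strictly smaller instance of property (ii).

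The main obstacle, and the reason the naive argument fails, is the interaction with joint prefixes: one cannot simply compare $\max(s)$ with $\max\big((s',s)\big)$ on the whole sequences, because a common prefix of $s$ and $(s',s)$ may already contain the maximum rank (e.g.\ $s=(2,1)$, $s'=(2)$), so the decision is actually made on the tails after the common prefix. Routing the common-prefix case through property (i) (peeling one matching element at a time) sidesteps this cleanly and simultaneously keeps every reduction inside the class of property-(ii) instances while strictly decreasing $|s|+|s'|$. The remaining care is bookkeeping: checking at each reduction that the new pair is still comparable (which follows from \cref{pr: obsCompIff}, since deleting a common prefix preserves the ``neither is a prefix of the other'' condition) and that no degenerate empty inner sequence arises (guaranteed by the $|s|\geq 2$ observation in the peeling case and by the explicit $\bar s=()$ branch of condition~(iii)).
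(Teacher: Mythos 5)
Your proof is correct, and although it runs on the same engine as the paper's proof---induction on $|s|+|s'|$ (equivalently, a minimal-counterexample argument) driven by the case structure of conditions (i)--(iii) in the definition of $\difforder$---it differs in two ways that make it strictly tighter. First, the paper argues by contradiction from a putative violation $(s',s)\difforder s$; turning $\neg\bigl((s',s)\difforder s\bigr)$ into the required positive statement $s\difforder(s',s)$ presupposes that $\difforder$ is total on comparable pairs, a fact the paper only establishes later (at the start of the proof of \cref{diffusionSequence}, which in turn cites this lemma). Your direct derivation of $s\difforder(s',s)$ avoids that dependency. Second, and more substantively, the paper's case analysis (``it follows directly that $\max(s')<\max(s)$\dots'') applies the max/argmax criteria to $s$ and $(s',s)$ as whole sequences, which the definition of $\difforder$ licenses only when the two sequences have no joint prefix; the case $s_1=s'_1$ is silently skipped there. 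Your Case A---peeling the common first entry via property (i), observing that comparability forces $|s|\ge 2$, and reducing to the strictly smaller property-(ii) instance $\tilde s\difforder\bigl((\tilde s',x),\tilde s\bigr)$---closes exactly this gap, and your comparability bookkeeping via \cref{pr: obsCompIff} is right in both branches: in the no-joint-prefix branch the reduced pair $\bar s$ and $(s',\bar s)$ already differ in their first entries, and in the peeling branch prefix-freeness is inherited from that of $s$ and $(s',s)$. The one point you pass over quickly is in property (i): applying the extension clause with $t=(x)$ when $s$ and $s'$ themselves share a prefix tacitly uses that stripping $(x)$ first and then the remaining common prefix agrees with stripping the maximal joint prefix at once; this consistency is exactly what the paper's chain of equivalences $(x,s^{(1)},s^{(2)}) \difforder (x,s^{(1)},s'^{(2)}) \Leftrightarrow s^{(2)} \difforder s'^{(2)} \Leftrightarrow (s^{(1)},s^{(2)}) \difforder (s^{(1)},s'^{(2)})$ makes explicit, and it is worth one sentence in your write-up.
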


\begin{proof}
We first show property (i). Let $s,s'$ be two comparable sequences and $x \in \mathbb{N}_{\geq}$. Let $s=(s^{(1)},s^{(2)})$ and $s'=(s^{(1)},s'^{(2)})$ such that $s^{(1)}$ is the joint prefix of $s$ and $s'$ (possibly empty). By definition of \difforder, we have that $(x,s^{(1)},s^{(2)}) \difforder (x,s^{(1)},s'^{(2)}) \Leftrightarrow (s^{(2)}) \difforder (s'^{(2)}) \Leftrightarrow (s^{(1)},s^{(2)}) \difforder (s^{(1)},s'^{(2)})$, this suffices to prove the property. 

For property (ii) assume for contradiction that there exist $s,s' \in \sequences$ such that $s$ and $(s',s)$ are comparable and $(s',s) \difforder s$. Among all such violations chose one that minimizes $|s|$ + $|s'|$. It follows directly that $\max(s') < \max(s)$ (otherwise, we would already have a contradiction). Let $\bar{s}$ be the prefix of the $s$ up to the first appearance of $\max(s)$. If $\bar{s} = ()$, then $s \difforder (s',s)$, a contradiction. Otherwise, $(s',s) \difforder s$ implies that $(s',\bar{s}) \difforder s$, a contradiction to the choice of $s'$ and $s$.  
\end{proof}

\begin{lemma}
Let $s,s',t \in \sequences$ be such that $s'$ is a prefix of $s$ and $s \difforder t$. Then, $s' \difforder t$. \label{lem:helperDifforder2}
\end{lemma}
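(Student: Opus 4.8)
The plan is to reduce to the situation where the sequences being compared share no common prefix and then to run a short case analysis on which of the three clauses witnesses $s \difforder t$; somewhat surprisingly, no induction will be needed. Throughout I assume, as holds wherever the lemma is applied, that $s'$ and $t$ are comparable (this is necessary for $s' \difforder t$ to even be defined). First I would strip the maximal common prefix: writing $p$ for the maximal common prefix of $s$ and $t$, set $s = (p,a)$ and $t = (p,b)$, so that $a$ and $b$ begin with distinct ranks and, by the extension rule defining $\difforder$, we have $s \difforder t \Leftrightarrow a \difforder b$. I would then observe that $s'$ must properly extend $p$: were $s'$ a prefix of $p$, it would be a prefix of $t$, contradicting comparability of $s'$ and $t$ via \Cref{pr: obsCompIff}. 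Hence $s' = (p,a')$ for some nonempty prefix $a'$ of $a$, and since $a'$ and $b$ still begin differently, $s' \difforder t \Leftrightarrow a' \difforder b$. This reduces the lemma to proving $a' \difforder b$ where $a'$ is a nonempty prefix of $a$, the pair $a,b$ has no common prefix, and $a \difforder b$.

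With the common prefix gone I would rename back to $s,s',t$ and record the key structural fact that makes clause (iii) tractable: if $\max(s') = \max(s)$, then the first maximal entry of $s$ already lies within the prefix $s'$, so the truncations of $s$ and $s'$ just before their first maximal entry coincide, $\overline{s'} = \bar{s}$. The case split is then immediate. If $\max(s') < \max(s)$, then in all three clauses one has $\max(s) \le \max(t)$, whence $\max(s') < \max(t)$ and clause (i) yields $s' \difforder t$. If $\max(s') = \max(s) =: M$, I would use $|\argmax(s')| \le |\argmax(s)|$ together with $\overline{s'} = \bar{s}$: clauses (i) and (ii) for $s \difforder t$ transfer directly, and in clause (iii) either $|\argmax(s')| < |\argmax(t)|$ (so clause (ii) applies to $s'$) or $|\argmax(s')| = |\argmax(t)|$, in which case the recursive condition ``$\bar{s} \difforder \bar{t}$ or $\bar{s} = ()$'' passes verbatim to $s'$ because $\overline{s'} = \bar{s}$.

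The main obstacle I anticipate is exactly the recursive tie-breaking clause (iii), where naively one would expect to induct on sequence length; the observation $\overline{s'} = \bar{s}$ dissolves this, since the truncated sequences being compared are literally unchanged. The only other point needing care is the comparability bookkeeping in the reduction step, where I rely on the standing assumption that $s'$ and $t$ are comparable to exclude the degenerate possibility that $s'$ is a prefix of $t$.
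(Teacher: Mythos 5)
Your proof is correct and follows essentially the same route as the paper's: the same case split on $\max(s')$ versus $\max(s)$ (and on which clause of the definition witnesses $s \difforder t$), resolved by the same key observation that $\bar{s}' = \bar{s}$ whenever $\max(s') = \max(s)$, which is exactly what dissolves the recursion in clause (iii). The only difference is presentational: you make explicit the joint-prefix stripping and the comparability bookkeeping that the paper's three-case proof leaves implicit, which is a mild gain in rigor but not a different argument.
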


\begin{proof}
We distinguish three cases. 

\textbf{Case 1:} $\max(s') < \max(s)$. Clearly, $\max(s') < \max(s) \leq \max(t)$ and $s' \difforder t$.%, a contradiction. 

\textbf{Case 2:} $\max(s') = \max(s)$ and $|\argmax(s')| < |\argmax(s)|$. Then, either $\max(s') < \max(t)$, in which case $s' \difforder t$, or $\max(s') = \max(t)$ and $|\argmax(s')| < |\argmax(s)| \leq |\argmax(t)|$, in which case also $s'
 \difforder t$. 

 \textbf{Case 3:} $\max(s') = \max(s)$ and $|\argmax(s')| = |\argmax(s)|$. If $s \difforder t$ because of one of the first two cases in the definition of \difforder, we are done. Otherwise, observe that $\bar{s}' = \bar{s}$ and hence $\bar{s'} \difforder \bar{t}$ and $s' \difforder t$. 
\end{proof}

\thmDiffusionSequence* 

\ifthmDiffusionSequenceAppendix
\thmDiffusionSequenceProof
\fi

\LexConfluent*
\ifLexConfluentAppendix
\LexConfluentProof
\fi

%%%%%%%%%%%%%%%%%%%%%%

\begin{table*}[t]
    \centering
    \begin{tabular}{lcccccc}
    \toprule
         & \phantom{x} \bfd \phantom{x} & \phantom{x} \minsumseq \phantom{x} & \lexrank & \diffusion & \minsumarb & \phantom{x} \dfd \phantom{x}  \\
    \midrule
    Confluent & \yes & \yes & \yes & \yes & \yes & \no \\
    Sequence rule & \yes & \yes & \yes & \yes & \no & \yes\\
    Guru-participation & \yes & \yes & \yes & \yes & \yes & \no\\
    Copy-robustness & \no & \no & \no & \no & \yes & \yes \\ 
    Weakly lexicographic & \yes  & \yes & \yes & \yes & n/a & \yes\\
    Strongly lexicographic & \yes  & \no & \no & \no & n/a & \yes\\ 
    Rank-aware & \no & \no & \yes & \yes & n/a & \no \\
    \bottomrule
    \end{tabular}
    \caption{Summary of the properties of the delegation rules. The latter three properties are only defined for sequence rules.}  
    \label{tab:axioms}
\end{table*}

\section{Missing Proofs of Section \ref{sec:axiomatic}}

We start this section by formalizing the ``two-stage'' (delegation and voting) setting proposed by \citet{KoRi20a}, using our own notation. 
Every casting voter $c \in C$ has an opinion on a binary issue, which we denote by $\tau_c \in \{0,1\}$. Using the majority rule, option $0$ is implemented if \[\sum_{c \in C} \omega_{f}(G,r,c) \cdot \tau_c < 0.5, \] option $1$ is implemented if \[\sum_{c \in C} \omega_{f}(G,r,c) \cdot \tau_c > 0.5\] and a tie (denoted by $0.5$) is obtained if \[\sum_{c \in C} \omega_{f}(G,r,c) \cdot \tau_c = 0.5 \text.\]
We assume that a casting voter $c$ with $\tau_c = 0$ has the following preferences over the outcome $0 \succ_c 0.5 \succ_c 1$. Similarly, a casting voter with $\tau_c = 1$ has preferences $1 \succ_c 0.5 \succ_c 0$.\footnote{\citet{KoRi20a} did not specify the preferences in case of a tie. Having said this, our argument would also work for any other tie-breaking rule.} 

We are now ready to define the axiom as stated by \citet{KoRi20a} and mark it by * to distinguish it from the property defined in \Cref{sec:axiomatic}.

\begin{defn}
A delegation rule $f$ satisfies \emph{guru-participa\-tion*} if  the following holds for every instance $(G,r)$: 
If $v \in D$ and $f(G,r,v)$ ends in casting voter $c$, then $o = o'$ or $o \succ_c o'$ where $o$ is the election outcome of instance $(G,r)$ and $o'$ is the election outcome of $(G',r)$. The graph $G'=(C'\cup D' \cup I',E')$ is derived from $G=(C \cup D \cup I,E)$ by setting $E' = E \setminus \delta^{+}_G(v)$ and $C'=C$. 
\end{defn}

\begin{proposition}
Let $f$ be a delegation rule satisfying guru-participation. Then $f$ also satisfies guru-partition*.
\end{proposition}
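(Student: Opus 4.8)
The plan is to reduce the ordinal comparison of the two election outcomes to a single numerical comparison of the weighted support for $c$'s own opinion, and then to show, using guru-participation, that this support can only weakly decrease when $v$ stops participating. First I would record the two facts I intend to combine. Fact one is weight conservation: in any instance the relative voting weights of the casting voters sum to one, since $\sum_{c'\in C}\omega_f(G,r,c') = \tfrac{|D|+|C|}{|C|+|D|}=1$, because each delegating voter's path ends in exactly one casting voter (and likewise $\sum_{c'\in C}\omega_f(G',r,c')=1$ as $C'=C$). Fact two is exactly the hypothesis: guru-participation supplies $\omega_f(G,r,u)\le \omega_f(G',r,u)$ for every $u\in C\setminus\{c\}$.

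Next I would partition the casting voters according to whether they agree with $c$. Let $U(G)=\sum_{c'\in C:\,\tau_{c'}\ne\tau_c}\omega_f(G,r,c')$ be the total weight \emph{opposing} $c$'s opinion, and define $U(G')$ analogously; let $T(G)$ and $T(G')$ denote the complementary weights \emph{supporting} $\tau_c$. The crucial observation is that every voter counted in $U$ is distinct from $c$, so each of its summands weakly increases from $G$ to $G'$ by Fact two; hence $U(G)\le U(G')$. Combining this with Fact one, which gives $T+U=1$ in both instances, yields $T(G)=1-U(G)\ge 1-U(G')=T(G')$, i.e. the weighted support for $c$'s own opinion weakly decreases when passing to $G'$.

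Finally I would translate this back into outcomes. Writing $S=\sum_{c'\in C}\omega_f\,\tau_{c'}$ for the support of option $1$, one checks that option $\tau_c$ wins precisely when $T>\tfrac12$, a tie results when $T=\tfrac12$, and the opposite option wins when $T<\tfrac12$; thus $c$'s satisfaction is a monotone non-decreasing step function of $T$, matching the preference order $\tau_c \succ_c 0.5 \succ_c (1-\tau_c)$. Since $T(G)\ge T(G')$, voter $c$ is at least as satisfied in $(G,r)$ as in $(G',r)$, that is $o=o'$ or $o\succ_c o'$, which is exactly guru-participation*.

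I expect the only genuine obstacle to be the sign bookkeeping. A naive term-by-term comparison of the \emph{supporting} weights fails, because $c$'s own weight decreases while the weights of other like-minded casting voters increase, so these contributions point in opposite directions and no direct inequality is available. The device that makes the argument go through is to reason on the complementary (opposing) camp instead, which consists entirely of voters $u\ne c$ to which guru-participation applies uniformly, and then to close the loop through weight conservation; this also removes any need for a case distinction on the value of $\tau_c$ and disposes of the tie case automatically.
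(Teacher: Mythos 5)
Your proof is correct and takes essentially the same route as the paper's: both apply guru-participation term-by-term to the camp of casting voters \emph{opposing} $c$'s opinion (all of whom are distinct from $c$), then use the fact that relative voting weights sum to one to conclude that the weighted support for $\tau_c$ weakly decreases when passing to $G'$, so $c$ weakly prefers the outcome of $(G,r)$. The only cosmetic difference is that the paper fixes $\tau_c = 1$ without loss of generality, whereas you phrase the argument symmetrically via supporting/opposing camps.
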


\begin{proof}
Let $(G,r)$ be some ranked delegation instance and $v \in D$ with $f(G,r,v)$ ending in casting voter $c$. Let $(G',r)$ be the instance as specified within the definitions of guru-participation and guru-participation*. Let $C^{(0)} = \{c \in C \mid \tau_c = 0\}$ and $C^{(1)} = \{c \in C \mid \tau_c = 1\}$ and assume wlog that $\tau_c = 1$.

Since $f$ satisfies guru-participation and $c \not \in C^{(0)}$, we know that 
\begin{equation}
    \sum_{u \in C^{(0)}} \omega_f(G,r,u) \leq \sum_{u \in C^{(0)}} \omega_f(G',r,u)\text. \label{bound_for_c}
\end{equation}

This leads to 
\begin{align*}
    \sum_{u \in C} \omega_f(G,r,u) \cdot \tau_c & = \sum_{u \in C^{(1)}} \omega_f(G,r,u) \\
    & = 1 - \sum_{u \in C^{(0)}} \omega_f(G,r,u) \\ 
    & \stackrel{\cref{bound_for_c}}{\geq} 1 - \sum_{u \in C^{(0)}} \omega_f(G',r,u) \\
    & = \sum_{u \in C^{(1)}} \omega_f(G',r,u) \\ 
    & = \sum_{u \in C} \omega_f(G,r,u) \cdot \tau_c.
\end{align*}
Hence, either $o = o'$ or $o \succ_c o'$, that is, the election outcome for election $(G,r)$ is weakly preferred by the casting voter $c$ over the election outcome for the election $(G',r)$. 
\end{proof}

For the reverse direction to hold we need to make one additional assumption, namely that the delegation rule is not influenced by isolated casting voters. This is just a technicality required for the proof and any reasonable delegation should satisfy this. 

\begin{defn}
A delegation rule satisfied \emph{independence of isolated casting voters (IIC)}, if for every instance $(G=(C\cup D\cup I),r)$ the following holds: Let $(G'=(C' \cup D \cup I,E),r)$ be derived from $G$ by setting $C' = C \cup \{c^*\}$, where $c^* \not\in C$. Then, $f(G,r,v) = f(G',r,v)$ for all $v \in D$.  
\end{defn}

Before we prove the reverse direction we prove a helpful lemma. 

\begin{lemma}
Let $a,b,c,d,y \in \mathbb{N}$ such that $b > d$ and $\frac{a}{b} > \frac{c}{d}$.
Then, \[\frac{a}{b+y} > \frac{c}{d+y}\text.\] \label{lem:helper}
\end{lemma}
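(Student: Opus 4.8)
The plan is to clear denominators and reduce the claim to an inequality between non-negative integers, the engine of which is the observation that the two hypotheses already force $a > c$. Throughout I would use that the meaningfulness of the fraction $\frac{c}{d}$ together with $b > d$ guarantees $d \geq 1$, hence $b, d, b+y, d+y$ are all strictly positive, so every fraction involved is well defined and cross-multiplication preserves the direction of each inequality.

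The first and key step is to show that $a > c$. I would argue by contradiction: if $a \leq c$, then since $b > d > 0$ we have $\frac{a}{b} \leq \frac{c}{b} \leq \frac{c}{d}$, where the first inequality uses $a \leq c$ and $b > 0$, and the second uses $c \geq 0$ together with $b > d$ (so $\frac{1}{b} < \frac{1}{d}$). This contradicts the hypothesis $\frac{a}{b} > \frac{c}{d}$, and hence $a > c$. This is precisely the step where the assumption $b > d$ is indispensable: without it the lemma is false, as witnessed by $a=1,b=1,c=5,d=6,y=10$, where $\frac{a}{b} > \frac{c}{d}$ holds but $\frac{a}{b+y} = \frac{1}{11} < \frac{5}{16} = \frac{c}{d+y}$.

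With $a > c$ established, I would finish by cross-multiplication. The desired inequality $\frac{a}{b+y} > \frac{c}{d+y}$ is equivalent, after multiplying by the positive quantity $(b+y)(d+y)$, to $a(d+y) > c(b+y)$, i.e.\ to $ad + ay > cb + cy$. From the hypothesis $\frac{a}{b} > \frac{c}{d}$ I obtain $ad > cb$ (cross-multiplying by $bd > 0$), and from $a > c \geq 0$ together with $y \geq 0$ I obtain $ay \geq cy$. Adding the strict inequality $ad > cb$ to the weak inequality $ay \geq cy$ yields $ad + ay > cb + cy$, which is exactly what is needed.

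I expect the only subtlety, and hence the main obstacle, to be this first step: the conclusion does \emph{not} follow from $\frac{a}{b} > \frac{c}{d}$ alone, since adding $y$ to both denominators is not a monotone operation with respect to this ordering. The whole argument therefore hinges on first distilling the numerator comparison $a > c$ out of the combined hypotheses $b > d$ and $\frac{a}{b} > \frac{c}{d}$; once that is in place, the remaining manipulations are entirely routine.
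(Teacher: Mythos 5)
Your proof is correct, and it shares the paper's central insight: both arguments pivot on first extracting $a > c$ from the combination of $b > d$ and $\frac{a}{b} > \frac{c}{d}$. Where you diverge is in how each proof uses that fact. The paper simply asserts $a > c$ (with the terse justification ``From $b > d$ we also know $a > c$'') and then finishes by passing to reciprocals, writing $\frac{b+y}{a} = \frac{b}{a} + \frac{y}{a} < \frac{d}{c} + \frac{y}{c} = \frac{d+y}{c}$ and inverting. You instead prove $a > c$ explicitly by contradiction and finish by cross-multiplication, reducing the claim to $ad + ay > cb + cy$, which follows by adding $ad > cb$ to $ay \geq cy$. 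Your route buys two small things: it is self-contained on the step the paper leaves implicit, and it never divides by $a$ or $c$, so it remains valid when $c = 0$ (a case the lemma's statement over $\mathbb{N}$ permits, and on which the paper's expression $\frac{d}{c}$ is undefined; the case is immaterial for the paper's application, where voting weights are at least $1$, but your argument covers it for free). The paper's reciprocal trick, in turn, is marginally shorter once $a > c$ is granted. Your observation that the lemma genuinely needs $b > d$, with an explicit counterexample, is a nice touch that the paper does not include.
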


\begin{proof}
From $b >d$ we also know $a > c$. Hence, we get 
\begin{align*}
    \frac{b+y}{a} &= \frac{b}{a} + \frac{y}{a} < \frac{d}{c} + \frac{y}{c}  = \frac{d + y}{c}, 
\end{align*}
which is equivalent to the claim. 
\end{proof}

We are now ready to prove the reverse direction. 

\begin{proposition}
Let $f$ be a delegation function satisfying IIC and guru-participation*. Then, $f$ satisfies guru-participation. 
\end{proposition}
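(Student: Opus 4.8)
The plan is to prove the contrapositive: assuming that $f$ satisfies IIC but \emph{violates} guru-participation, I will produce an opinion profile witnessing a violation of guru-participation*. Fix an instance $(G,r)$, a voter $v \in D$ whose guru is $c$, and a witness $u^* \in C \setminus \{c\}$ with $\omega_f(G,r,u^*) > \omega_f(G',r,u^*)$. Write $z$ (resp.\ $z'$) for the number of voters routed to $u^*$ in $G$ (resp.\ $G'$), and set $N = |C|+|D|$, $N' = |C'|+|D'|$, so that the failure reads $z/N > z'/N'$. Since $E' = E \setminus \delta_G^+(v)$ can only turn voters into isolated ones, we have $N > N'$, and this forces $z > z'$: otherwise $z/N \le z'/N \le z'/N'$, contradicting the failure. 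I would then feed guru-participation* the profile setting $\tau_{u^*} = 1$ and $\tau_w = 0$ for every other casting voter---in particular $\tau_c = 0$---so that the $1$-support of the election equals exactly the relative weight of $u^*$, and the guru $c$ strictly prefers the outcome $0$.

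The next step is to sharpen the coarse majority rule using IIC. Adding $y$ isolated casting voters of opinion $0$ leaves, by IIC, every delegation path unchanged, so the relative weight of $u^*$ becomes $z/(N+y)$ before $v$ abstains and $z'/(N'+y)$ afterwards. Here \cref{lem:helper}, applied with $(a,b,c,d)=(z,N,z',N')$ and $b>d$, is the key tool: it guarantees the strict inequality survives arbitrary padding, $z/(N+y) > z'/(N'+y)$ for all $y \ge 0$. The aim is to choose $y$ so that the threshold $\tfrac12$ separates the two quantities, i.e.\ $z'/(N'+y) < \tfrac12 \le z/(N+y)$ (using the preference $0 \succ_c 0.5$ to absorb the boundary tie). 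For such a $y$ the padded election has outcome $\ne 0$ while the election after $v$ abstains has outcome $0$; as $\tau_c = 0$, this yields $o' \succ_c o$, contradicting guru-participation*. As in the proof of \cref{confluentGuru}, working with these (padded, equal-vertex-set) weights suffices, since IIC keeps the normalisation under control.

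The main obstacle is exactly this positioning step. Because the majority rule only records whether the $1$-support lies above, at, or below $\tfrac12$, a strict inequality of relative weights need not by itself change the coarse outcome: the admissible padding range is $2z' - N' < y \le 2z - N$, which is nonempty only when the weight drop $z-z'$ is large enough relative to the size drop $N-N'$, and this is \emph{not} implied by $z > z'$ alone. I therefore expect the honest single-voter profile above to be insufficient in general, and the crux of the argument to be a more careful choice: replacing $\{u^*\}$ by the set $T$ of \emph{all} non-guru casting voters whose absolute weight strictly decreased (keeping $\tau_c=0$), so that the relevant gap is the total weight drop $\sum_{u\in T}(z_u - z'_u)$ rather than a single voter's, and then re-running the \cref{lem:helper}/IIC padding argument on the aggregate $1$-support. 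Verifying that this aggregate gap always admits a separating padding size---possibly by additionally exploiting guru-participation* on auxiliary profiles---is the step I would expect to demand the most care.
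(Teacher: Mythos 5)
Your strategy is the same as the paper's: argue the contrapositive, pad the instance with dummy isolated casting voters (IIC keeps all delegation paths fixed, \cref{lem:helper} keeps the strict weight inequality alive under padding), and choose a binary profile so that the majority outcome moves against the guru $c$. Your feasible regime --- the interval $2z'-N' < y \le 2z-N$ together with $y\ge 0$, hence $2z\ge N$ --- is exactly the paper's Case~2 ($\omega_f(G,r,u)>0.5$), where the witness $u$ sits alone on one side of the profile and the argument is sound. The obstacle you flag in the remaining regime is genuine, and it is worth noting that it coincides with the step the paper's own Case~1 ($\omega_f(G,r,u)\le 0.5$) glosses over: there the dummies are placed on the \emph{same} side as $u$, and since a dummy's relative weight is $1/(N'+y)$ after $v$ abstains versus $1/(N+y)$ before, the dummies' total weight \emph{increases} when passing to $G'$; the displayed chain of inequalities cites only the bound on $u$'s weight and ignores this opposing term. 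Unwinding it, the paper's Case~1 needs precisely your condition $N-N' < 2(z-z')$, which does not follow from $z/N > z'/N'$.

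However, your proposed repair does not close the gap. For \emph{any} profile --- any set $S\subseteq C\setminus\{c\}$ placed opposite $c$, with the dummies split arbitrarily between the two sides --- a short computation shows that some padding witnesses a guru-participation* violation if and only if $2\bigl(\sum_{u\in S} z_u - \sum_{u\in S} z'_u\bigr) > N-N'$, and the left-hand side is maximized by taking $S=T$, your set of all non-guru casting voters whose absolute weight dropped. So the entire method (yours and the paper's alike) reduces to the single condition $2\sum_{u\in T}(z_u-z'_u) > N-N'$, and this can fail while guru-participation is violated. Concretely, take absolute weights $z_u=3\to z'_u=2$, $z_c=3\to z'_c=1$, $z_x=6\to z'_x=6$ with $N=12\to N'=9$: this is realizable when three delegators become isolated upon deleting $\delta^+_G(v)$, two of them routed to $c$ and one routed \emph{through} $v$ to $u$ by a non-confluent rule. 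Here $u$'s relative weight falls from $1/4$ to $2/9$, so guru-participation fails, yet $T=\{u\}$ and $2(z_u-z'_u)=2<3=N-N'$; one can check directly that every profile on $\{c,u,x\}$, with any padding, yields election outcomes satisfying guru-participation* (all near-misses land on exact ties). So no choice of profile and padding on this instance pair produces the desired contradiction; completing the proof would require exploiting guru-participation* on genuinely different auxiliary instances, or a different argument altogether --- and the same configuration shows that the paper's Case~1, as written, cannot be repaired within this framework either.
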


\begin{proof}
We show the contrapositive of the statement. Let $f$ be a delegation rule satisfying IIC but not guru-participation. 

Then, there exists an instance $(G,r)$ and some voter $v \in D$ whose assigned delegation path $f(G,r,v)$ ends in some casting voter $c$ such that $\omega_f(G,r,u) > \omega_f(G',r,u)$ for some $u \in C \setminus \{c\}$, where $G'$ is defined as in the definition of guru-participation.  

We are going to show that this leads to a violation of guru-participation*. 
To this end we create an instance $(\hat{G},r)$ from $G$ such that the outcome of the election will be a tie. For that we will first add some dummy casting voters and then define the sets $C^{0}$ and $C^{(1)}$, denoting the set of casting voters voting for $0$ and $1$ in the final election, respectively.

We denote by $z_f(G,r,u)$ the absolute voting weight of a voter in instance $(G,r)$, i.e. 
\[z_f(G,r,u) = \omega_f(G,r,u) \cdot (|C| + |D|)\text.\]

\textbf{Case 1:} $\omega_f(G,r,u) \leq 0.5$. Then, we create $\hat{G}$ from $G$ by adding $y = |C| + |D| -2z_f(G,r,u)$ ``dummy'' isolated casting voters. Then, let $C^{(0)}$ to be the set consisting of $u$ and all dummy casting voters and $C^{(1)}= \{C \setminus \{u\}\}$. Now, IIC ensures that 
\begin{align*}
\sum_{w \in C^{(0)}} z_f(\hat{G},r,w) &= |C| + |D| - z_f(\hat{G},r,u) \\
&= \sum_{w \in C^{(1)}} z_f(\hat{G},r,w),  
\end{align*}
and hence the outcome of the election is a tie. Finally, let $\hat{G}'$ be the graph obtained by $G'$ by adding the $y$ isolated casting voters, setting $C'{(0)} = C{(0)}$ and $C^{(1)}= C^{(1)}$. 

\textbf{Case 2:} $\omega_f(G,r,u) > 0.5$. We create $\hat{G}$ by adding $y = 2z_f(G,r,u) - |C| - |D|$ ``dummy'' isolated casting voters. Now, $C^{(0)}= \{u\}$ and $C^{(1)}$ consists of all $C \setminus \{u\}$ plus all dummy casting voters. Again, by IIC we get that the outcome of the election $\hat{G}$ is a tie. Finally, let $\hat{G}'$ be the graph obtained by $G'$ by adding the $y$ isolated casting voters, setting $C'{(0)} = C{(0)}$ and $C^{(1)}= C^{(1)}$.

In both cases, we get by IIC that 
\begin{align}
    \omega_f(\hat{G},r,u) & = \frac{z_f(\hat{G},r,u)}{|C| + |D| + y}  \nonumber = \frac{z_f(G,r,u)}{|C| + |D| + y}  \nonumber \\ 
    & > \frac{z_f(G',r,u)}{|C'| + |D'| + y}   = \frac{z_f(\hat{G}',r,u)}{|C'| + |D'| + y}  \label{bound_for_u} \\ 
    &= \omega_f(\hat{G}',r,u) \text, \nonumber
\end{align}
where the inequality follows from \Cref{lem:helper} and $\omega_f(G,r,u) > \omega_f(G',r,u)$. 
We use this to conclude
\begin{align*}
    \sum_{a \in C} \omega_f(\hat{G}',r,a) \cdot \tau_a  & = \sum_{a \in C^{(1)}} \omega_f(\hat{G}',r,a) \\ 
    & = 1 - \sum_{a \in C^{(0)}} \omega_f(\hat{G}',r,a)\\
    & \stackrel{\cref{bound_for_u}}{>}  1 - \sum_{a \in C^{(0)}} \omega_f(\hat{G},r,a) \\ 
    & = \sum_{a \in C^{(1)}} \omega_f(\hat{G},r,a) \\ 
    & = \sum_{a \in C} \omega_f(\hat{G},r,a) \cdot \tau_a \\ 
    & = 0.5\text. 
\end{align*}
Hence, the outcome of $\hat{G}'$ is $1$ while the outcome of $\hat{G}$ is $0.5$. Since $c$ prefers outcome $1$, the instances $\hat{G}$ and $\hat{G}'$ (together with the partition into $C^{(0)}$ and $C^{(1)}$) witness a contradiction of guru-participation*. 
\end{proof}

We are now ready to prove \Cref{confluentGuru}. 

\confluentGuru*

\ifconfluentGuru
\confluentGuruProof
\fi 

Before we give the proof for \cref{prop:bordaGuru}, we prove the following lemma: 

\begin{lemma}
Let $(G,r)$ be a ranked delegation instance and $(\bar{G},\bar{r})$ be the corresponding reduced instance. 
Let $D_1 \dot\cup D_2 = D$ be a partition of the delegating voters. Let $A_1, A_2 \subseteq \bar{E}$ be acyclic and for both $i \in \{1,2\}$ it holds that $\delta^{+}_{A_i}(v) = 1$ for all $v \in D_i$ and $\delta^{+}_{A_j}(v) = 0$ for all $v \in D_j$, $j \in \{1,2\} \setminus \{i\}$. Moreover, it holds that for every $v \in D_2$ there exists a path towards some node in $C$ within $A_2$. Then, $A_1 \cup A_2$ is a $C$-branching in $\bar{G}$. \label{lem:merge-branchings}
\end{lemma}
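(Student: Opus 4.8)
The plan is to verify the two defining properties of a $C$-branching for the edge set $B := A_1 \cup A_2$ separately: that every $v \in D$ has out-degree exactly one in $B$, and that $B$ is acyclic. The first property is routine and follows directly from the hypotheses together with the partition $D = D_1 \dot\cup D_2$; the second is where the real work lies.

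For the out-degree condition I would fix $v \in D_1$: by assumption $\delta^+_{A_1}(v)$ consists of a single edge while $\delta^+_{A_2}(v) = \emptyset$, so the set of $B$-edges leaving $v$ is exactly that one edge, giving $|\delta^+_B(v)| = 1$. The symmetric argument handles $v \in D_2$ via $A_2$, using $\delta^+_{A_1}(v) = \emptyset$. Since casting voters have no outgoing edges in $\bar{G}$, this shows that every vertex of $\bar{G}$ has out-degree \emph{at most} one in $B$ (exactly one on $D$, zero on $C$). This observation---that $B$ is a partial ``functional'' graph in which each vertex with out-degree one uses its unique outgoing edge---is what drives the acyclicity argument.

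The harder part, and the main obstacle, is acyclicity; here the assumption that every $v \in D_2$ reaches $C$ within $A_2$ is essential. I would argue by contradiction: suppose $B$ contains a cycle. Because every vertex has out-degree at most one in $B$ and casting voters have out-degree zero, the cycle lives entirely on $D$ and at each vertex uses that vertex's unique outgoing $B$-edge; for $v \in D_2$ this edge lies in $A_2$, and for $v \in D_1$ it lies in $A_1$. Since $A_1$ and $A_2$ are each acyclic, the cycle can lie in neither, so it must contain at least one vertex of $D_1$ and at least one vertex of $D_2$.

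Finally I would fix a $D_2$-vertex $u_0$ on the cycle and follow the cycle forward until first reaching a $D_1$-vertex $u_k$ (one exists, by the previous step). Every vertex traversed before $u_k$ lies in $D_2$, so every edge used is the corresponding unique $A_2$-out-edge; hence this portion of the cycle coincides edge by edge with the walk in $A_2$ obtained by repeatedly following unique outgoing edges from $u_0$. By acyclicity of $A_2$, that walk is the \emph{unique} $A_2$-path starting at $u_0$. But $u_k \in D_1$ has no outgoing edge in $A_2$, so this path gets stuck at $u_k \notin C$, meaning there is no $A_2$-path from $u_0$ to any casting voter---contradicting the hypothesis on $D_2$. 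The two points to state carefully are that the cycle-walk and the forced $A_2$-walk genuinely agree (because at each $D_2$-vertex the outgoing edge is unique in both $B$ and $A_2$) and that a $D_1$-vertex is guaranteed on the cycle (from acyclicity of $A_2$); both were established above, completing the proof.
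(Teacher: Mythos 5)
Your proof is correct, but it is organized differently from the paper's. The paper does not verify acyclicity head-on: it observes that, since every $v \in D$ has out-degree exactly one in $A_1 \cup A_2$, it suffices to show that every delegating voter \emph{reaches} a casting voter in $A_1 \cup A_2$ (with out-degree one everywhere on $D$, a vertex on a cycle could never reach the out-degree-zero set $C$, so reachability implies acyclicity). It then proves reachability positively: for $v \in D_2$ the hypothesized $A_2$-path to $C$ survives in the union; for $v \in D_1$ one follows the unique outgoing edges, which by acyclicity of $A_1$ must eventually leave $D_1$ and hit either $C$ or a $D_2$-vertex, at which point the first case applies. You instead verify the two defining properties of a $C$-branching literally, arguing acyclicity by contradiction: a cycle would have to mix $D_1$- and $D_2$-vertices (else it would sit inside the acyclic $A_1$ or $A_2$), and then the $D_2$-segment of the cycle ending at its first $D_1$-vertex is exactly the forced maximal $A_2$-walk from a $D_2$-vertex, stranded outside $C$ --- contradicting the reachability hypothesis. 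The two arguments are close to contrapositives of one another and use the same functional-graph structure; the paper's version is slightly shorter and constructive (it exhibits how the merged branching actually routes $D_1$-voters to their representatives, which is what the guru-participation proof then exploits), while yours makes explicit the ``out-degree one plus reachability implies acyclic'' step that the paper leaves implicit in its ``it suffices to show'' reduction. One cosmetic remark: the uniqueness of the forced $A_2$-walk from $u_0$ follows from the out-degree assumptions alone, not from acyclicity of $A_2$ as you state; acyclicity of $A_2$ is what you need (and correctly use) only to guarantee that a $D_1$-vertex exists on the cycle.
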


\begin{proof}
Since by construction all voters in $D$ have outdegree $1$ within $A_1 \cup A_2$, it suffices to show that every delegating voter can reach a casting voter in $A_1 \cup A_2$. We first establish this statement for the voters in $D_2$. By the assumptions within the lemma, node in $D_2$ can reach a casting voter in $C$ via edges in $A_2$. Clearly, the same holds for the set $A_1 \cup A_2$. 

For $v \in D_1$, follow the unique path starting in $v$. Since $A_1$ does not contain any cycle and every node in $D_1$ has outdegree $1$, this path eventually leads either to a casting voter (in which case the claim holds) or to a voter in $D_2$. In this case, we know that there exists a unique path towards a casting voter using only edges in $A_2$. This suffices to conclude the proof. 
\end{proof}

\bordaGuru*
\ifbordaGuruAppendix
\bordaGuruProof
\fi

\copyRobustChar* 
\ifcopyRobustCharAppendix
\copyRobustCharProof
\fi

\impConfCopy*
\ifimpConfCopyAppendix
\impConfCopyProof
\fi

\arbCopy*
\ifarbCopyAppendix
\arbCopyProof
\fi 

\DFDChar*
\ifDFDCharAppendix
\DFDCharProof
\fi 

\BFDChar* 
\ifBFDCharAppendix
\BFDCharProof
\fi

\section{Characterization of Diffusion}

One of the axioms in this characterization is \emph{rank-awareness}, informally speaking implying that a main criterion among which sequences are compared is the maximum rank appearing on a sequence.

\begin{defn}
A sequence rule induced by the order $\fpre$ satisfies \emph{rank-awareness} if, for any two comparable sequences $s, s' \in \sequences$ with $\max(s) < \max(s')$ it holds that $s \fpre s'$.
\end{defn}

Clearly, \diffusion and \lexrank satisfy rank-awareness. However, we show in the following that rank-awareness is rather difficult to be achieved together with other properties. 
We start by giving two observations about rank-awareness. 

\begin{proposition}
No sequence rule can be rank-aware and strongly lexicographic.
\end{proposition}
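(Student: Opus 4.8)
The plan is to prove this incompatibility by exhibiting a single pair of comparable, equal-length sequences on which the two requirements force opposite verdicts. The key observation is that strong lexicographicity completely pins down $\fpre$ on every comparable pair of equal length---it must agree with $\lex$---while rank-awareness independently forces $s \fpre s'$ whenever $\max(s) < \max(s')$. So it suffices to find two sequences of the same length where the lexicographically smaller one happens to have the \emph{larger} maximum rank: then rank-awareness and strong lexicographicity will demand contradictory orderings.

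First I would take $s = (2,1)$ and $s' = (1,3)$. These are distinct, non-empty, and both of length $2$, and neither is a prefix of the other, so by \Cref{pr: obsCompIff} they are comparable; in particular, a relation $\fpre$ inducing a sequence rule restricts to a linear order on $\{s,s'\}$, so it cannot rank $s$ and $s'$ both ways. Next I would apply the two axioms in turn. Since $\max(s) = 2 < 3 = \max(s')$, rank-awareness yields $s \fpre s'$. On the other hand, $s$ and $s'$ first differ at index $1$, where $s'_1 = 1 < 2 = s_1$, so $s' \lex s$; because $|s| = |s'|$, strong lexicographicity yields $s' \fpre s$. Having both $s \fpre s'$ and $s' \fpre s$ for the distinct comparable pair $\{s,s'\}$ contradicts antisymmetry of the linear order $\fpre$, which completes the argument.

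There is no real computational difficulty here; the only points that require care are the orientation conventions (that $\fpre$-maximal means \emph{most preferred}, so that $\lex$ prefers lexicographically smaller sequences and rank-awareness prefers smaller maximum rank) and the appeal to \Cref{pr: obsCompIff}, which certifies that the chosen pair is genuinely realizable as $\mathcal{S}_v$ for some instance so that both axioms actually apply to it. The main (and minor) obstacle is thus simply choosing a witness whose lexicographic and maximum-rank comparisons disagree while the two sequences remain comparable and of equal length; any such pair works just as well as the one above.
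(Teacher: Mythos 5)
Your proof is correct and follows essentially the same approach as the paper's: both exhibit a concrete comparable pair of equal-length sequences on which rank-awareness and strong lexicographicity force opposite orderings (the paper uses $(1,1,5)$ versus $(2,2,2)$, you use $(2,1)$ versus $(1,3)$). Your additional care in citing \cref{pr: obsCompIff} to certify comparability is a nice touch but does not change the substance of the argument.
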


\begin{proof}
Consider $s = (1,1,5)$ and $s' = (2,2,2)$.
A rank-aware sequence rule $f$ would order $s' \succ_f s$ and strongly lexicographic sequence rule $f$ would order $s \succ_f s'$.
\end{proof}

\begin{proposition}
No sequence rule can be rank-aware and copy-robust. \label{thm:lemma-ra-cr}
\end{proposition}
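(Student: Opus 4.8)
The plan is to derive a contradiction by playing rank-awareness against the suffix-invariance that copy-robustness forces. First I would assume, for contradiction, that some sequence rule induced by a relation $\fpre$ is simultaneously rank-aware and copy-robust. Since it is copy-robust, \Cref{pr: suffix-invariant} applies and gives the suffix-invariance property: for any comparable sequences $s,s'$ and any $x \in \mathbb{N}_{\geq 1}$ we have $(s,x) \fpre s' \Leftrightarrow s \fpre s'$. The conceptual heart of the argument is that appending an entry at the \emph{end} of $s$ never alters how $s$ compares with $s'$, even though appending a large rank can inflate $\max(s)$ arbitrarily---and $\max(\cdot)$ is exactly the quantity that rank-awareness reacts to.

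Next I would exhibit a concrete two-element witness and check that the relevant pairs are comparable, so that $\fpre$ (which is only constrained, and only a linear order, on comparable subsets) is actually pinned down on them. Take $s=(1)$, $s'=(2)$, and $x=3$. By \Cref{pr: obsCompIff}, neither $(1)$ nor $(2)$ is a prefix of the other, so $\{(1),(2)\}$ is comparable; likewise neither $(1,3)$ nor $(2)$ is a prefix of the other, so $\{(1,3),(2)\}$ is comparable. This legitimizes applying both rank-awareness and suffix-invariance to these two pairs.

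Then I would combine the three ingredients. Rank-awareness on $\{(1),(2)\}$, where $\max((1))=1<2=\max((2))$, yields $(1) \fpre (2)$. Suffix-invariance with $s=(1)$, $x=3$, $s'=(2)$ then upgrades this to $(1,3) \fpre (2)$. On the other hand, rank-awareness on $\{(1,3),(2)\}$, where $\max((2))=2<3=\max((1,3))$, yields $(2) \fpre (1,3)$. Since $\{(1,3),(2)\}$ is comparable and $\fpre$ restricted to any comparable subset is a linear order, we cannot have both $(1,3) \fpre (2)$ and $(2) \fpre (1,3)$ for the distinct sequences $(1,3)\ne(2)$---the desired contradiction.

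I expect the only genuine subtlety to be the bookkeeping of comparability: every invocation of \Cref{pr: suffix-invariant} and of rank-awareness must be on a comparable pair, which is precisely what the prefix criterion of \Cref{pr: obsCompIff} secures here. Finally I would remark that the choice $x=3$ is inessential---any $x > \max(s')$ works, and the same template (append a rank exceeding the opponent's maximum, so that rank-awareness must flip the preference while suffix-invariance forbids any change) proves the claim in full generality.
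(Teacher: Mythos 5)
Your proposal is correct and takes essentially the same route as the paper's own proof: the paper also uses the witness $s=(1)$, $s'=(2)$ with appended rank $3$, notes that rank-awareness forces $(1)\fpre(2)$ and $(2)\fpre(1,3)$, and contradicts the suffix-invariance that copy-robustness imposes via \cref{pr: suffix-invariant}. You merely spell out the comparability checks (via \cref{pr: obsCompIff}) and the final appeal to linearity more explicitly than the paper's terse three-line argument.
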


\begin{proof}
Consider $s=(1)$ and $s'=(2)$. Any rank-aware sequence rule ranks $s \fpre s'$ but $s' \fpre (s,3)$. By \Cref{pr: obsCompIff} this is a contradiction to $\fpre$ satisfying copy-robustness. 
\end{proof}

To give a characterization of \diffusion we need to introduce another property, \emph{truncation}, that allows to determine the order of two sequences by only considering a truncated version of the sequences.

\begin{defn}
A sequence rule induced by the order $\fpre$ satisfies \emph{truncation}, if for any sequences $s, s',t,t' \in \sequences$ such that $s$ and $s'$ are comparable and have no joint prefix, and $x \in \mathbb{N}_{\geq 1}$ it holds that: If $x>\max(t)$, $x>\max(t')$ and $(s,x,t) \fpre (s',x,t')$, then $s \fpre s'$.
\end{defn}

\begin{theorem}
\diffusion is the only sequence rule that is confluent, rank-aware, and satisfies truncation.
\end{theorem}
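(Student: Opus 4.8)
The plan is to prove the two inclusions separately: first that \diffusion has all three properties, and then that any sequence rule with these properties must coincide with \diffusion, i.e.\ its inducing order agrees with $\difforder$ on every comparable pair.

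For the first direction, confluence is \cref{cor: diff-confluent} and \emph{rank-awareness} is immediate from clause~(i) in the definition of $\difforder$ (a strictly smaller maximum is always preferred). \emph{Truncation} requires a short verification: given comparable $s,s'$ with no joint prefix and $x>\max(t),\max(t')$, I would split on whether $x$ exceeds $\max(s)$ and $\max(s')$. Since $x$ dominates both tails $t,t'$, the first occurrence of $\max((s,x,t))$ lies inside $s$ whenever $\max(s)>x$ and equals the inserted $x$ otherwise; in either case the prefix of $(s,x,t)$ up to its first maximal entry is a prefix of (or equals) $s$, and symmetrically for the primed sequence. Hence the recursive clauses~(ii)--(iii) of $\difforder$ reduce the comparison of $(s,x,t)$ and $(s',x,t')$ to that of $s$ and $s'$, giving $s\difforder s'$.

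For uniqueness, let $f$ be induced by $\fpre$ and satisfy all three axioms; I would show $\fpre$ equals $\difforder$ on comparable pairs by strong induction on $|s|+|s'|$. Since $f$ is confluent, \cref{pr: prefix-invariant} supplies prepend-invariance, so after stripping a common prefix (which $\difforder$ also ignores) I may assume $s,s'$ have no joint prefix. If $\max(s)\neq\max(s')$, rank-awareness and $\difforder$ both prefer the smaller maximum and thus agree. If $\max(s)=\max(s')=:M$, I would peel the \emph{last} occurrence of $M$, writing $s=(\hat s,M,w)$ and $s'=(\hat s',M,w')$ with $\max(w),\max(w')<M$. Truncation, combined with totality of $\fpre$ on the comparable set $\{s,s'\}$, becomes an equivalence $s\fpre s'\Leftrightarrow\hat s\fpre\hat s'$; moreover $\difforder$ is invariant under this peeling, because deleting the last maximal entry decreases both maximum-counts by one (preserving clause~(ii)) and leaves the prefix before the \emph{first} $M$ unchanged (preserving clause~(iii)). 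The induction hypothesis applied to the shorter pair $\hat s,\hat s'$ then closes the loop.

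The main obstacle is the base case of this peeling: a sequence beginning with its unique global maximum, so that $\hat s=()$ and the prefix before the first $M$ is empty. Here truncation is silent---aligning the two sequences at $M$ leaves an empty, hence incomparable, prefix on one side---and rank-awareness gives nothing because the maxima coincide. I therefore expect this to be the technical heart: one must show directly that a sequence $(M,w)$ with $\max(w)<M$ is $\fpre$-maximal among all comparable sequences of maximum $M$. My plan is to derive this from clause~(ii) of \cref{pr: prefix-invariant} together with a carefully constructed instance: that clause already yields $(M,q)\fpre(\bar s',M,q)$, and the remaining gap---precisely the case where the tail $q$ of $s'$ is $\fpre$-better than $w$---cannot be closed by any single local comparison, so it must be forced by a global confluence argument. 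Making this step precise, and checking that the order on such leading-maximum sequences is genuinely \emph{forced} (and not merely \emph{consistent}) via transitivity across the whole family of sequences reachable through prepend-invariance and truncation, is where the real work lies.
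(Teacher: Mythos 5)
Your proof of the first direction and the skeleton of your uniqueness argument are essentially the paper's: it too verifies that $\difforder$ satisfies truncation by a case analysis on where the maximum of $(s,x,t)$ lies, and for uniqueness it takes a disagreeing comparable pair minimizing $|s|+|s'|$, strips the joint prefix using confluence together with \cref{pr: prefix-invariant}, forces $\max(s)=\max(s')=M$ by rank-awareness, and peels at the \emph{last} occurrence of $M$ via truncation to contradict minimality. Up to that point your proposal is sound and matches the paper step for step.

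The genuine gap is the case you yourself flag and leave open: when, after stripping, one sequence starts with its unique maximum (so the peeled prefix $\hat s$ is empty), truncation is inapplicable and you offer only the intention of a ``global confluence argument.'' Your partial observation --- that clause (ii) of \cref{pr: prefix-invariant} settles the comparison only when the two tails coincide --- is correct, but the case of differing tails is exactly what remains, so the proof is incomplete at its crux. The paper closes this case with a trick your proposal lacks: since $s_1\neq s'_1$ forces $j\neq 1$, prepend $x=M+1$ to \emph{both} sequences; confluence and \cref{pr: prefix-invariant} guarantee that both orders survive the prepending, i.e.\ $(x,s)\difforder (x,s')$ while $(x,s')\fpre (x,s)$. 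Truncating at $M$ now cuts away both tails, leaving the pair $(x)$ versus $(x,s'_1,\dots,s'_{j-1})$, whose joint length is strictly smaller than $|s|+|s'|$ unless $|s|=1$ and $|s'|=j$; in that residual case $s=(M)$ is a suffix of $s'=(s'_1,\dots,s'_{j-1},M)$, and clause (ii) of \cref{pr: prefix-invariant} yields $s\fpre s'$ outright, contradicting $s'\fpre s$. One caveat worth recording: after the prepending, the two truncation prefixes both begin with $x$, so the paper's own appeal to truncation at that point does not literally meet the axiom's ``comparable with no joint prefix'' hypothesis; your instinct that this case is the delicate one is therefore well founded, but as written your proposal does not resolve it either.
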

\begin{proof}
It follows directly from the definition of $\difforder$ and Proposition~\ref{cor: diff-confluent} that \diffusion is confluent and rank-aware. 
To see that \diffusion satisfies truncation consider sequences $s, s',t,t' \in \sequences$ such that $s$ and $s'$ have no joint prefix, and $x \in \mathbb{N}_{\geq 1}$ such that $x>\max(t)$ and $x>\max(t')$. Let $(s,x,t) \fpre (s',x,t')$.

\textbf{Case 1}: $\max((s,x,t)) < \max((s',x,t'))$. Then, because $x>\max(t)$, $x>\max(t')$ it holds that $\max(s) < \max(s')$, which implies $s \difforder s'$. 

\textbf{Case 2:} It holds that $\max((s,x,t)) = \max((s',x,t'))$ and $|\argmax((s,x,t))| < |\argmax ((s',x,t'))|$. Then, if $\max(s) < x$, we know that $\max (s) < \max (s')=x$. On the other hand, if $\max(s) \geq x$, it holds that $\max (s) = \max (s')$ and $|\argmax(s)| < |\argmax (s')|$. In any case, $s \difforder s'$.

\textbf{Case 3:} It holds that $\max((s,x,t)) = \max((s',x,t'))$ and $|\argmax((s,x,t))| = |\argmax((s',x,t'))|$. If $x$ is the unique maximum of both $(s,x,t)$ and $(s',x,t')$, then by (iii) in the definition of $\difforder$ it directly follows that $s \difforder s'$. Otherwise, $\max(s) = \max(s'), |\argmax(s)| = |\argmax(s')|$. Let $\bar{s}$ and $\bar{s}'$, respectively, be the prefix of $(s,x,t)$ and $(s',x,t')$, respectively, before the first occurrence of the maximal rank. By (iii) in the definition of $\difforder$ it follows that $\bar{s}\difforder \bar{s}'$. But since $\bar{s}$ and $\bar{s}'$ are also such prefixes for $s$ and $s'$ respectively, $s \difforder s'$ holds.
Thus $(s,x,t) \fpre (s',x,t')$ implies $s \difforder s'$, which proves that \diffusion satisfies truncation.

We prove the uniqueness of \diffusion satisfying the three axioms by contradiction.
Suppose that there exists a sequence rule $f$ induced by some order $\fpre$ on $\sequences$ that also satisfies the three axioms but that differs from \diffusion. That is, there exists two comparable sequences  $s$ and $s'$ such that $s \difforder s'$ and $s' \fpre s$. Let $s$ and $s'$ be such that $|s|+|s'|$ is minimal.
Because of the minimality of $|s|+|s'|$ and because both $f$ and \diffusion are confluent and Proposition~\ref{pr: prefix-invariant}, $s$ and $s'$ have no common prefix, i.e., $s_1 \neq s'_1$.
Because $s \difforder s'$ and $s' \fpre s$ and both $f$ and \diffusion satisfy rank awareness, $\max (s) = \max (s')$. 
Let $i$ be the highest index such that $s_i = \max(s)$ and $j$ be the highest index such that $s'_j = \max(s')$. 
Suppose $i,j >1$.
Then by truncation, $(s_1, \dots, s_{i-1}) \difforder (s'_1, \dots, s'_{j-1})$ and at the same time $(s'_1, \dots, s'_{j-1}) \fpre (s_1, \dots, s_{i-1})$. This is a contradiction to the minimality of $|s|+|s'|$. 
Now assume that $i=1$, i.e. $s_1 = max(s)$ and because $s_1 \neq s'_1$ we have $j\neq1$.
Because of confluence and \cref{pr: prefix-invariant}, for any $x \in \mathbb{N}_{\geq 1}$ we have
$(x, s) \difforder (x,s')$ and $(x,s') \fpre (x,s)$. 
Consider the particular case of $x = s_1 + 1$. Then by truncation, $(x) \difforder (x,s'_1, \dots, s'_{j-1})$ and at the same time $(x,s'_1, \dots, s'_{j-1}) \fpre (x)$. This is a contradiction to the minimality of $|s|+|s'|$ unless $|s|=1$ and $|s'|=j$. Thus assume  $|s|=1$ and $|s'|=j$. By confluence and \cref{pr: prefix-invariant}, $s = (s_1) \fpre (s'_1, \dots, s'_{j-1}, s_1) = s'$. This is a contradiction to our assumption $s' \fpre s$.
As the case for $j=1$ is analogous to $i=1$, we have thus proven that there cannot be another sequence rule than \diffusion that is confluent, rank-aware and satisfies truncation.
\end{proof}

\section{Extended Version of Section \ref{sec:experiments}}

\par 
Our experimental work is only briefly discussed in the main paper. We use this section of the appendix to provide a more in-depth analysis. For the sake of completeness, we repeat some of the information given in the main text.

\subsection{System setup}
\par
The experiments were carried out on a system with a 1.4 GHz Quad-Core Intel Core i5 CPU, 8GB RAM, and macOS 11.2.3 operating system. The software was implemented in Python 3.8.8 and the libraries matplotlib 3.3.4, numpy 1.20.1, and pandas 1.2.4 were used. Additionally, Gurobi 9.1.2 was used to solve linear programs. 

\subsection{Data generation}
\par 
Initially, all methods choose the set of casting voters $C \subseteq V_H$ by selecting every voter independently with probability $p_c \in [0,1]$. In a second step the set $E \subseteq E_H$ and $r$ are constructed simultaneously.

\paragraph{Friendship-based method}
When the base graph $H = (V_H, E_H)$ is undirected and unweighted, we assume edges to represent ``friendship'' links between the nodes. We quantify the ``strength'' of a friendship between two adjacent nodes $v$ and $w$ by the number of common friends, i.e., $\lambda(v, w) = \left |  \delta_H (v) \cap \delta_H (w) \right|$, where $\delta_H(\cdot)$ is the set of neighbors of a node in $H$. 
Then, for every $v \in V\setminus C$, we insert edges in $\delta_H(v)$ one by one as out-going edges for $v$ in $G$ where the probability of selecting the edge $\{v,w\} \in E_H$ is proportional to $(1 + \lambda(v,w))^{\alpha}$, and $\alpha>0$ is a constant. The ranking $r$ among edges is defined by the order of selection. 

The real-world network we used for this method is a subgraph of the Facebook network \citep{viswanath-2009-activity}. 
For synthetic networks, the input network $H$ is generated by the standard Erd\H{o}s--R\'{e}nyi model $G(n,p)$, where we chose the edge probability $p \in [0,1]$ such that the average number of edges per voter is equal some $\Delta >0$.

\paragraph{Prominence-based method}
\citet{KKH+15a} showed in an empirical study that the liquid democracy network of the German Pirate Party has a ``power-law like'' indegree distribution. Motivated by this, \citet{GKMP18b} proposed a method for generating synthetic liquid democracy data in a similar fashion to preferential attachment models \cite{BaAl99a}. On a high level, the prominence of a voter determines how likely it is for other voters to delegate to them: voters tends to delegate their vote to prominent agent, i.e. those who have a high ingoing degree. We adjust this approach to ranked delegation for base graphs $H$ which are directed and unweighted (e.g. the \emph{Twitter} network). As a reminder, $\delta_H^-(x)$ is the number of ingoing edges of $x$ in $H$.

\begin{itemize}
    \item For real networks, we used the \emph{Slashdot Zoo} graph from \cite{slashdot} and the \emph{Twitter} graph from \cite{De_Domenico_2013} as our base graph $H$. We motivate this choice by the fact that \cite{kumar2006} observe that social networks of this kind often have degree distributions that follow a power law, and hence reflect the idea of our prominence-based method. 
    Then, for every $v \in V\setminus C$, we insert edges in $\delta_H(v)$ one by one as out-going edges for $v$ in $G$ where the probability of selecting the edge $\{v,x\} \in E_H$ is proportional to $(1 + \delta_H^-(x))^{\beta}$, and $\beta>0$ is a constant. The ranking $r$ among edges is defined by the order of selection. 
    \item For synthetic data, we set $H$ to be the complete directed graph. We iteratively choose edges from $E_H$ to include in $G$ as follows: Pick a non-casting voter $v$ uniformly at random and add a new outgoing edge $(v,x) \in E_H \setminus E_G$ to $G$ with probability proportional to $(1+|\delta_G^-(x)|)^\beta$. Observe that, in contrast to the real networks, this probability depends on $\delta_G^{-}(x)$ instead of $\delta_H^{-}(x)$. As $\delta_G^{-}(x)$ changes over time, this leads to a self-reinforcing effect  where prominent voters become even more prominent. The process stops after $m = \Delta \cdot |V \setminus C|$ edges have been added to $G$, where $\Delta >0$.
\end{itemize}

\paragraph{Weight-based method}
When the base graph $H$ is a weighted directed graph, with assume edge weights to reflect trust levels between agents. Therefore, the outgoing edges of some node $x$ in $G$ are ordered in decreasing weight (the top-ranked edge being the one with the highest rank). If there is a tie between several edges, we order uniformly at random between the tied edges. Our real networks are taken from the bitcoin exchange platforms \emph{Bitcoin Alpha} and \emph{Bitcoin OTC} \cite{kumar2016edge}, on which users express how much they trust other users on a scale from $-10$ to $+10$. We set $E_G$ to be the set of all edges in $E_H$ with positive weight, as distrust is not modeled in our liquid democracy setting.  For synthetic networks, we randomly generated $n$ nodes on a 2D plane using either a uniform or a Gaussian distribution. For each node, $E_G$ contains edges to the $\Delta$ closest neighbours.

\paragraph{Networks properties}
Table \ref{tab:network_prop} shows some properties of the various real networks used in our experiments. Note that in ``prominence networks'' (i.e., \emph{Twitter} and \emph{Slashdot}), some users have a lot of ingoing edges and thus a lot of people will delegates to them, while in the ``friendship network'', users will likely delegates to their close friends, which will create a lot of cycles. The table also shows the mean participation rate $1-\frac{|I|}{|V|}$ averaged over the several instances of the experiments when the proportion of casting voters is set to $20\%$ (i.e. $p_c = 0.2$). For very connected networks in which the average outdegree is high (like \emph{Twitter} and \emph{Facebook}), the participation rate is close to $100\%$. 
\begin{table*}[t]
    \centering
    \begin{tabular}{lrrrrr}
    \toprule
         & \emph{Twitter} & \emph{Slashdot} & \emph{Facebook} & \emph{Bitcoin OTC} & \emph{Bitcoin Alpha} \\ 
         \midrule
        nodes & $456,626$ & $79,120$ & $63,731$ & $5,881$ & $3,783$ \\
        edges & $14,855,842$& $515,397$ & $817,035$ & $35,592$ & $24,186$ \\
        average degree & $65$ & $13$ & $26$ &  $12$ & $13$\\
        max outdegree & $1,259$ & $426$ & $1,098$ & $763$ & $490$\\
        max indegree & $51,386$ & $2,529$ &  & $535$ &  $398$\\
        \midrule
        delegation generation method & prominence & prominence & friendship & weight-based & weight-based \\
        \midrule
        mean participation rate for $p_c=0.2$ & $100\%$ & $90\%$ & $100\%$ & $84\%$ & $89\%$ \\
        %diameter & $9$ & $12$& $15$ & $9$ & $10$ \\
    \bottomrule
    \end{tabular}
    \caption{Properties of the real networks used in our experiments.
    }
    \label{tab:network_prop}
\end{table*}
\par 

\subsection{Impact of backup delegation}
\par

In this paper, we argue that the liquid democracy framework would benefit from allowing voters to delegate to several agents. Intuitively, enabling voters to have backup delegation creates more possibilities for delegation paths, and therefore this will lower the number of isolated voters. In order to see how important the impact of backup delegation is, we conducted experiments. For each method, we looked at the average fraction of isolated voters $\frac{|I|}{|V|}$ depending on the maximum outdegree allowed in the delegating graph. We denote this maximum outdegree $d$ and say that we only keep edges $(u,v) \in E$ such that $r((u,v)) \le d$. For $d=0$, only the casting voters take part in the election and $D = \emptyset$. The case $d=1$ corresponds to the classic model of liquid democracy, in which every voter delegate to at most one agent. Moreover, we conducted this experiment with various proportions of casting voters, ranging from $1\%$ to $50\%$ of the total number of voters.
\par

\paragraph{Friendship-based method}
\begin{figure}[!ht]
    \centering
    % This file was created by tikzplotlib v0.9.8.
\begin{tikzpicture}[scale=.8]

\definecolor{color0}{rgb}{0.12156862745098,0.466666666666667,0.705882352941177}
\definecolor{color1}{rgb}{1,0.498039215686275,0.0549019607843137}
\definecolor{color2}{rgb}{0.172549019607843,0.627450980392157,0.172549019607843}
\definecolor{color3}{rgb}{0.83921568627451,0.152941176470588,0.156862745098039}
\definecolor{color4}{rgb}{0.580392156862745,0.403921568627451,0.741176470588235}

\begin{axis}[
height=6cm,
legend cell align={left},
legend style={fill opacity=0.8, draw opacity=1, text opacity=1, draw=white!80!black},
tick align=outside,
tick pos=left,
title={Impact of backup delegation},
width=10cm,
x grid style={white!69.0196078431373!black},
xlabel={Maximum outdegree},
xmajorgrids,
xmin=0, xmax=4,
xtick={0,1,2,3,4},
xticklabels={0,1,2,3,4},
xtick style={color=black},
y grid style={white!69.0196078431373!black},
ylabel={Isolated voters},
ymajorgrids,
ymin=0, ymax=1,
ytick style={color=black},
ytick={0,0.2,0.4,0.6,0.8,1},
yticklabels={0\%, 20\%, 40\%, 60\%, 80\%, 100\%}
]
\path [fill=color0, fill opacity=0.2]
(axis cs:0,0.485025721263955)
--(axis cs:0,0.517708972613596)
--(axis cs:1,0.105985419998095)
--(axis cs:2,0.00585346607566795)
--(axis cs:3,0.00534520112357106)
--(axis cs:4,0.00534520112357106)
--(axis cs:4,0.00151194173357139)
--(axis cs:4,0.00151194173357139)
--(axis cs:3,0.00151194173357139)
--(axis cs:2,0.00177918698555612)
--(axis cs:1,0.0718921310223134)
--(axis cs:0,0.485025721263955)
--cycle;

\path [fill=color1, fill opacity=0.2]
(axis cs:0,0.789967999689005)
--(axis cs:0,0.815539692618687)
--(axis cs:1,0.414897022580919)
--(axis cs:2,0.0108926392418954)
--(axis cs:3,0.00835959743200509)
--(axis cs:4,0.00835959743200509)
--(axis cs:4,0.0032096333372259)
--(axis cs:4,0.0032096333372259)
--(axis cs:3,0.0032096333372259)
--(axis cs:2,0.00375351460425832)
--(axis cs:1,0.319964515880619)
--(axis cs:0,0.789967999689005)
--cycle;

\path [fill=color2, fill opacity=0.2]
(axis cs:0,0.891866747859027)
--(axis cs:0,0.911311823569544)
--(axis cs:1,0.646204959745846)
--(axis cs:2,0.0119916951457941)
--(axis cs:3,0.00921626907637918)
--(axis cs:4,0.00921626907637918)
--(axis cs:4,0.00406944520933583)
--(axis cs:4,0.00406944520933583)
--(axis cs:3,0.00406944520933583)
--(axis cs:2,0.00547259056849247)
--(axis cs:1,0.531259325968439)
--(axis cs:0,0.891866747859027)
--cycle;

\path [fill=color3, fill opacity=0.2]
(axis cs:0,0.942427955802444)
--(axis cs:0,0.958412953288465)
--(axis cs:1,0.817736577146206)
--(axis cs:2,0.0125718575300149)
--(axis cs:3,0.00915283170427905)
--(axis cs:4,0.00915283170427905)
--(axis cs:4,0.00350625920481229)
--(axis cs:4,0.00350625920481229)
--(axis cs:3,0.00350625920481229)
--(axis cs:2,0.00476905156089491)
--(axis cs:1,0.721354331944703)
--(axis cs:0,0.942427955802444)
--cycle;

\path [fill=color4, fill opacity=0.2]
(axis cs:0,0.986872274702745)
--(axis cs:0,0.993410333992907)
--(axis cs:1,0.982917190147696)
--(axis cs:2,0.0133444450012449)
--(axis cs:3,0.00970380302998364)
--(axis cs:4,0.00970380302998364)
--(axis cs:4,0.004100544796104)
--(axis cs:4,0.004100544796104)
--(axis cs:3,0.004100544796104)
--(axis cs:2,0.00559033760745165)
--(axis cs:1,0.907408896808826)
--(axis cs:0,0.986872274702745)
--cycle;

\addplot [semithick, color0, mark=*, mark size=3, mark options={solid}]
table {%
0 0.501367330551147
1 0.0889387130737305
2 0.00381636619567871
3 0.00342857837677002
4 0.00342857837677002
};
\addlegendentry{50\% casting}
\addplot [semithick, color1, mark=*, mark size=3, mark options={solid}]
table {%
0 0.802753925323486
1 0.367430806159973
2 0.00732302665710449
3 0.00578463077545166
4 0.00578463077545166
};
\addlegendentry{20\% casting}
\addplot [semithick, color2, mark=*, mark size=3, mark options={solid}]
table {%
0 0.901589274406433
1 0.588732123374939
2 0.00873219966888428
3 0.00664281845092773
4 0.00664281845092773
};
\addlegendentry{10\% casting}
\addplot [semithick, color3, mark=*, mark size=3, mark options={solid}]
table {%
0 0.950420379638672
1 0.769545435905457
2 0.00867044925689697
3 0.00632953643798828
4 0.00632953643798828
};
\addlegendentry{5\%  casting}
\addplot [semithick, color4, mark=*, mark size=3, mark options={solid}]
table {%
0 0.99014139175415
1 0.945163011550903
2 0.00946736335754395
3 0.00690221786499023
4 0.00690221786499023
};
\addlegendentry{1\% casting}
\end{axis}

\end{tikzpicture}
    \caption{Evolution of the proportion of isolated voters depending on the percentage of casting voter and the maximum outdegree, for the friendship-based model and synthetic data ($n=1000$,  $\Delta=5$, $\alpha = 2$). Averaged over $100$ instances.}
    \label{fig:participation_friends}
\end{figure}
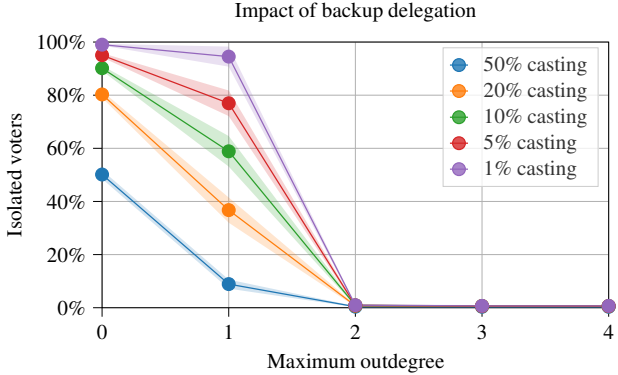

\begin{figure}[!ht]
    \centering
    % This file was created by tikzplotlib v0.9.8.
\begin{tikzpicture}[scale=.8]

\definecolor{color0}{rgb}{0.12156862745098,0.466666666666667,0.705882352941177}
\definecolor{color1}{rgb}{1,0.498039215686275,0.0549019607843137}
\definecolor{color2}{rgb}{0.172549019607843,0.627450980392157,0.172549019607843}
\definecolor{color3}{rgb}{0.83921568627451,0.152941176470588,0.156862745098039}
\definecolor{color4}{rgb}{0.580392156862745,0.403921568627451,0.741176470588235}

\begin{axis}[
height=6cm,
legend cell align={left},
legend style={fill opacity=0.8, draw opacity=1, text opacity=1, draw=white!80!black},
tick align=outside,
tick pos=left,
title={Impact of backup delegation},
width=10cm,
x grid style={white!69.0196078431373!black},
xlabel={Maximum outdegree},
xmajorgrids,
xmin=0, xmax=4,
xtick={0,1,2,3,4},
xticklabels={0,1,2,3,4},
xtick style={color=black},
y grid style={white!69.0196078431373!black},
ylabel={Isolated voters},
ymajorgrids,
ymin=0, ymax=1,
ytick style={color=black},
ytick={0,0.2,0.4,0.6,0.8,1},
yticklabels={0\%, 20\%, 40\%, 60\%, 80\%, 100\%}
]
\path [fill=color0, fill opacity=0.2]
(axis cs:0,0.480954020503746)
--(axis cs:0,0.51057539126096)
--(axis cs:1,0.101331669455204)
--(axis cs:2,0.0140828320247768)
--(axis cs:3,0.00193944486755737)
--(axis cs:4,0.00193944486755737)
--(axis cs:4,-5.70919263811476e-05)
--(axis cs:4,-5.70919263811476e-05)
--(axis cs:3,-5.70919263811476e-05)
--(axis cs:2,0.00368187385757657)
--(axis cs:1,0.0733742128977372)
--(axis cs:0,0.480954020503746)
--cycle;

\path [fill=color1, fill opacity=0.2]
(axis cs:0,0.785066983588245)
--(axis cs:0,0.811019972933494)
--(axis cs:1,0.419867638926745)
--(axis cs:2,0.0658638062547221)
--(axis cs:3,0.00303218329245203)
--(axis cs:4,0.00303218329245203)
--(axis cs:4,0.000533034098852081)
--(axis cs:4,0.000533034098852081)
--(axis cs:3,0.000533034098852081)
--(axis cs:2,0.0337014111365823)
--(axis cs:1,0.323697578464559)
--(axis cs:0,0.785066983588245)
--cycle;

\path [fill=color2, fill opacity=0.2]
(axis cs:0,0.885949366359655)
--(axis cs:0,0.908959724549436)
--(axis cs:1,0.648423120404886)
--(axis cs:2,0.0945071428961256)
--(axis cs:3,0.00394975729127212)
--(axis cs:4,0.00394975729127212)
--(axis cs:4,0.000838121496607114)
--(axis cs:4,0.000838121496607114)
--(axis cs:3,0.000838121496607114)
--(axis cs:2,0.0572504328614499)
--(axis cs:1,0.544485970504205)
--(axis cs:0,0.885949366359655)
--cycle;

\path [fill=color3, fill opacity=0.2]
(axis cs:0,0.943635685248091)
--(axis cs:0,0.955818860206455)
--(axis cs:1,0.815220161329564)
--(axis cs:2,0.115307965915416)
--(axis cs:3,0.00364776732518646)
--(axis cs:4,0.00364776732518646)
--(axis cs:4,0.00114011146269233)
--(axis cs:4,0.00114011146269233)
--(axis cs:3,0.00114011146269233)
--(axis cs:2,0.0831162765088265)
--(axis cs:1,0.726355596246194)
--(axis cs:0,0.943635685248091)
--cycle;

\path [fill=color4, fill opacity=0.2]
(axis cs:0,0.987424391549187)
--(axis cs:0,0.993472160174951)
--(axis cs:1,0.979429826955995)
--(axis cs:2,0.148346485363245)
--(axis cs:3,0.0042407160577056)
--(axis cs:4,0.0042407160577056)
--(axis cs:4,0.000655835666431903)
--(axis cs:4,0.000655835666431903)
--(axis cs:3,0.000655835666431903)
--(axis cs:2,0.0909638594643413)
--(axis cs:1,0.929949483388833)
--(axis cs:0,0.987424391549187)
--cycle;

\addplot [semithick, color0, mark=*, mark size=3, mark options={solid}]
table {%
0 0.49576473236084
1 0.087352991104126
2 0.00888240337371826
3 0.000941157341003418
4 0.000941157341003418
};
\addlegendentry{50\% casting}
\addplot [semithick, color1, mark=*, mark size=3, mark options={solid}]
table {%
0 0.798043489456177
1 0.371782541275024
2 0.0497826337814331
3 0.00178265571594238
4 0.00178265571594238
};
\addlegendentry{20\% casting}
\addplot [semithick, color2, mark=*, mark size=3, mark options={solid}]
table {%
0 0.897454500198364
1 0.596454620361328
2 0.0758787393569946
3 0.00239396095275879
4 0.00239396095275879
};
\addlegendentry{10\% casting}
\addplot [semithick, color3, mark=*, mark size=3, mark options={solid}]
table {%
0 0.949727296829224
1 0.770787954330444
2 0.0992121696472168
3 0.00239396095275879
4 0.00239396095275879
};
\addlegendentry{5\% casting}
\addplot [semithick, color4, mark=*, mark size=3, mark options={solid}]
table {%
0 0.990448236465454
1 0.954689741134644
2 0.119655132293701
3 0.00244832038879395
4 0.00244832038879395
};
\addlegendentry{1\% casting}
\end{axis}

\end{tikzpicture}
    \caption{Evolution of the proportion of isolated voters depending on the percentage of casting voter and the maximum outdegree, for the friendship-based model and synthetic data ($n=1000$,  $\Delta=5$, $\alpha = 10$). Averaged over $100$ instances.}
    \label{fig:participation_friends_hard}
\end{figure}
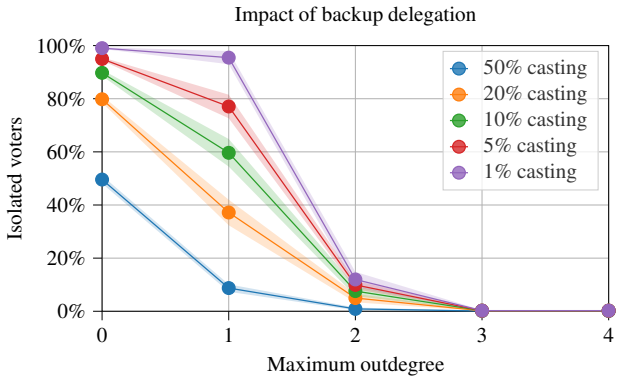

Figure~\ref{fig:participation_friends} shows the results for the friendship-based method with synthetic data. This method often leads to small cycles, because close friends are likely to delegate to each other. Therefore, allowing only one delegation per voter leads to a high proportion of isolated voters, particularly with a small proportion of casting voters. However, with 2 delegations per voters, the proportion of isolated voters immediately drops to almost $0 \%$, even with a small proportion of casting voter. When $\alpha$ is higher, bigger cycles will be created, and $3$ delegations are necessary to reach a proportion of isolated voters close to $0\%$. This is shown on Figure~\ref{fig:participation_friends_hard} in which $\alpha = 10$.

\paragraph{Prominence-based method}

\begin{figure}[!ht]
    \centering
    % This file was created by tikzplotlib v0.9.8.
\begin{tikzpicture}[scale=.8]

\definecolor{color0}{rgb}{0.12156862745098,0.466666666666667,0.705882352941177}
\definecolor{color1}{rgb}{1,0.498039215686275,0.0549019607843137}
\definecolor{color2}{rgb}{0.172549019607843,0.627450980392157,0.172549019607843}
\definecolor{color3}{rgb}{0.83921568627451,0.152941176470588,0.156862745098039}
\definecolor{color4}{rgb}{0.580392156862745,0.403921568627451,0.741176470588235}

\begin{axis}[
height=6cm,
legend cell align={left},
legend style={fill opacity=0.8, draw opacity=1, text opacity=1, draw=white!80!black},
tick align=outside,
tick pos=left,
title={Impact of backup delegation},
width=10cm,
x grid style={white!69.0196078431373!black},
xlabel={Maximum outdegree},
xmajorgrids,
xmin=0, xmax=4,
xtick={0,1,2,3,4},
xticklabels={0,1,2,3,4},
xtick style={color=black},
y grid style={white!69.0196078431373!black},
ylabel={Isolated voters},
ymajorgrids,
ymin=0, ymax=1,
ytick style={color=black},
ytick={0,0.2,0.4,0.6,0.8,1},
yticklabels={0\%, 20\%, 40\%, 60\%, 80\%, 100\%}
]
\path [fill=color0, fill opacity=0.2]
(axis cs:0,0.483405102799958)
--(axis cs:0,0.517694897200042)
--(axis cs:1,0.0289443461339146)
--(axis cs:2,0.012545705427261)
--(axis cs:3,0.012531608320769)
--(axis cs:4,0.012531608320769)
--(axis cs:4,0.00562839167923157)
--(axis cs:4,0.00562839167923157)
--(axis cs:3,0.00562839167923157)
--(axis cs:2,0.00563429457273934)
--(axis cs:1,0.0074956538660863)
--(axis cs:0,0.483405102799958)
--cycle;

\path [fill=color1, fill opacity=0.2]
(axis cs:0,0.787846362277952)
--(axis cs:0,0.811573637722048)
--(axis cs:1,0.12979095151455)
--(axis cs:2,0.019448620058463)
--(axis cs:3,0.0191252300548563)
--(axis cs:4,0.0191252300548563)
--(axis cs:4,0.0109747699451448)
--(axis cs:4,0.0109747699451448)
--(axis cs:3,0.0109747699451448)
--(axis cs:2,0.0109713799415382)
--(axis cs:1,0.0353090484854514)
--(axis cs:0,0.787846362277952)
--cycle;

\path [fill=color2, fill opacity=0.2]
(axis cs:0,0.889421234894988)
--(axis cs:0,0.909178765105012)
--(axis cs:1,0.330964145885855)
--(axis cs:2,0.0220224719791383)
--(axis cs:3,0.0216867213927003)
--(axis cs:4,0.0216805592491328)
--(axis cs:4,0.0127394407508686)
--(axis cs:4,0.0127394407508686)
--(axis cs:3,0.0127532786073012)
--(axis cs:2,0.0128775280208627)
--(axis cs:1,0.079095854114144)
--(axis cs:0,0.889421234894988)
--cycle;

\path [fill=color3, fill opacity=0.2]
(axis cs:0,0.942697945494589)
--(axis cs:0,0.957302054505412)
--(axis cs:1,0.571712266145574)
--(axis cs:2,0.023588428904688)
--(axis cs:3,0.0232078596470657)
--(axis cs:4,0.0232078596470657)
--(axis cs:4,0.0136321403529345)
--(axis cs:4,0.0136321403529345)
--(axis cs:3,0.0136321403529345)
--(axis cs:2,0.0138515710953123)
--(axis cs:1,0.170367733854426)
--(axis cs:0,0.942697945494589)
--cycle;

\path [fill=color4, fill opacity=0.2]
(axis cs:0,0.988383503602709)
--(axis cs:0,0.994812758079534)
--(axis cs:1,1.03737345764937)
--(axis cs:2,0.168379553784015)
--(axis cs:3,0.0239444115246396)
--(axis cs:4,0.0239176291979964)
--(axis cs:4,0.0156898474375186)
--(axis cs:4,0.0156898474375186)
--(axis cs:3,0.0156817566996607)
--(axis cs:2,-0.0918561892980325)
--(axis cs:1,0.669205981602968)
--(axis cs:0,0.988383503602709)
--cycle;

\addplot [semithick, color0, mark=*, mark size=3, mark options={solid}]
table {%
0 0.500550031661987
1 0.0182199478149414
2 0.00908994674682617
3 0.00908005237579346
4 0.00908005237579346
};
\addlegendentry{50\% casting}
\addplot [semithick, color1, mark=*, mark size=3, mark options={solid}]
table {%
0 0.799710035324097
1 0.082550048828125
2 0.0152100324630737
3 0.0150500535964966
4 0.0150500535964966
};
\addlegendentry{20\% casting}
\addplot [semithick, color2, mark=*, mark size=3, mark options={solid}]
table {%
0 0.8992999792099
1 0.205029964447021
2 0.0174499750137329
3 0.0172200202941895
4 0.0172100067138672
};
\addlegendentry{10\% casting}
\addplot [semithick, color3, mark=*, mark size=3, mark options={solid}]
table {%
0 0.950000047683716
1 0.371039986610413
2 0.0187200307846069
3 0.0184199810028076
4 0.0184199810028076
};
\addlegendentry{5\% casting}
\addplot [semithick, color4, mark=*, mark size=3, mark options={solid}]
table {%
0 0.991598129272461
1 0.853289723396301
2 0.0382616519927979
3 0.019813060760498
4 0.0198037624359131
};
\addlegendentry{1\% casting}
\end{axis}

\end{tikzpicture}
    \caption{Evolution of the proportion of isolated voters depending on the percentage of casting voter and the maximum outdegree, for the prominence-based model  ($n=1000$, $\Delta = 4$, $\beta = 2$). Averaged over $100$ instances.}
    \label{fig:app_participation_pop}
\end{figure}
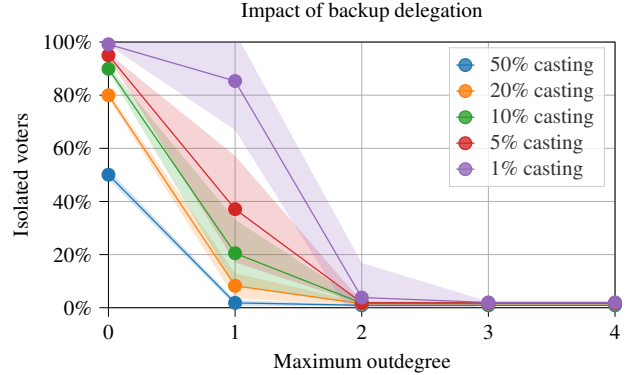

Figure~\ref{fig:app_participation_pop} shows the results for the prominence-based method with synthetic data. Unlike the other methods, the prominence-based method leads to very few cycle, as most voters delegate to the same ''prominent`` agents. Consequently, the fraction of isolated voters when each voter delegates only once is already quite low, except for very small proportion of casting voter. With 2 delegations per voter, the proportion of isolated voters almost drops to almost $0\%$.

\paragraph{Weight-based method}

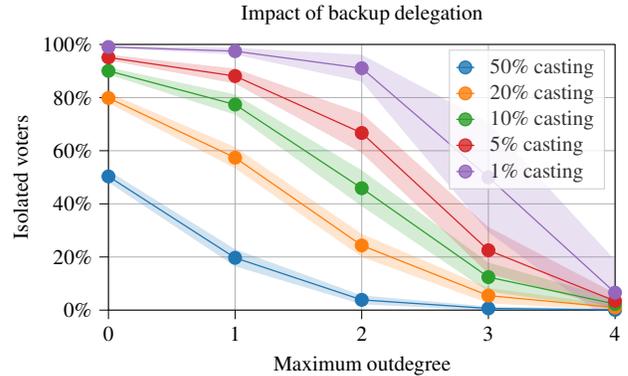
\begin{figure}[!ht]
    \centering
    % This file was created by tikzplotlib v0.9.8.
\begin{tikzpicture}[scale=.8]

\definecolor{color0}{rgb}{0.12156862745098,0.466666666666667,0.705882352941177}
\definecolor{color1}{rgb}{1,0.498039215686275,0.0549019607843137}
\definecolor{color2}{rgb}{0.172549019607843,0.627450980392157,0.172549019607843}
\definecolor{color3}{rgb}{0.83921568627451,0.152941176470588,0.156862745098039}
\definecolor{color4}{rgb}{0.580392156862745,0.403921568627451,0.741176470588235}

\begin{axis}[
height=6cm,
legend cell align={left},
legend style={fill opacity=0.8, draw opacity=1, text opacity=1, draw=white!80!black},
tick align=outside,
tick pos=left,
title={Impact of backup delegation},
width=10cm,
x grid style={white!69.0196078431373!black},
xlabel={Maximum outdegree},
xmajorgrids,
xmin=0, xmax=4,
xtick={0,1,2,3,4},
xticklabels={0,1,2,3,4},
xtick style={color=black},
y grid style={white!69.0196078431373!black},
ylabel={Isolated voters},
ymajorgrids,
ymin=0, ymax=1,
ytick style={color=black},
ytick={0,0.2,0.4,0.6,0.8,1},
yticklabels={0\%, 20\%, 40\%, 60\%, 80\%, 100\%}
]
\path [fill=color0, fill opacity=0.2]
(axis cs:0,0.480916446380971)
--(axis cs:0,0.52564093066821)
--(axis cs:1,0.229071909442441)
--(axis cs:2,0.0554091316450565)
--(axis cs:3,0.0138825250249424)
--(axis cs:4,0.00469151698790138)
--(axis cs:4,-0.00285545141413102)
--(axis cs:4,-0.00285545141413102)
--(axis cs:3,-0.00122678732002424)
--(axis cs:2,0.0218367699942879)
--(axis cs:1,0.164633008590346)
--(axis cs:0,0.480916446380971)
--cycle;

\path [fill=color1, fill opacity=0.2]
(axis cs:0,0.77893163984135)
--(axis cs:0,0.818127183688061)
--(axis cs:1,0.613572374980545)
--(axis cs:2,0.286898800471361)
--(axis cs:3,0.0833162884488704)
--(axis cs:4,0.0204860610864761)
--(axis cs:4,-0.000662531674711309)
--(axis cs:4,-0.000662531674711309)
--(axis cs:3,0.0258601821393655)
--(axis cs:2,0.20010119952864)
--(axis cs:1,0.533957036784161)
--(axis cs:0,0.77893163984135)
--cycle;

\path [fill=color2, fill opacity=0.2]
(axis cs:0,0.886709094720025)
--(axis cs:0,0.914173258221151)
--(axis cs:1,0.81195498412181)
--(axis cs:2,0.526636624879786)
--(axis cs:3,0.178908436015112)
--(axis cs:4,0.0430689986267965)
--(axis cs:4,0.00293100137320379)
--(axis cs:4,0.00293100137320379)
--(axis cs:3,0.0695033286907712)
--(axis cs:2,0.391010433943743)
--(axis cs:1,0.735633251172307)
--(axis cs:0,0.886709094720025)
--cycle;

\path [fill=color3, fill opacity=0.2]
(axis cs:0,0.940177419596578)
--(axis cs:0,0.961795913736755)
--(axis cs:1,0.909171529428243)
--(axis cs:2,0.74160276348177)
--(axis cs:3,0.31274202992191)
--(axis cs:4,0.0634242582217259)
--(axis cs:4,0.00548240844494075)
--(axis cs:4,0.00548240844494075)
--(axis cs:3,0.137604636744756)
--(axis cs:2,0.59183723651823)
--(axis cs:1,0.85280180390509)
--(axis cs:0,0.940177419596578)
--cycle;

\path [fill=color4, fill opacity=0.2]
(axis cs:0,0.986767666569016)
--(axis cs:0,0.994032333430984)
--(axis cs:1,0.987356224229901)
--(axis cs:2,0.961284997142654)
--(axis cs:3,0.706910162593855)
--(axis cs:4,0.188155111113169)
--(axis cs:4,-0.0565858803439387)
--(axis cs:4,-0.0565858803439387)
--(axis cs:3,0.293089837406145)
--(axis cs:2,0.8597611567035)
--(axis cs:1,0.961874545000868)
--(axis cs:0,0.986767666569016)
--cycle;

\addplot [semithick, color0, mark=*, mark size=3, mark options={solid}]
table {%
0 0.503278732299805
1 0.196852445602417
2 0.0386229753494263
3 0.00632786750793457
4 0.000918030738830566
};
\addlegendentry{50\% casting}
\addplot [semithick, color1, mark=*, mark size=3, mark options={solid}]
table {%
0 0.798529386520386
1 0.573764681816101
2 0.243499994277954
3 0.0545881986618042
4 0.00991177558898926
};
\addlegendentry{20\% casting}
\addplot [semithick, color2, mark=*, mark size=3, mark options={solid}]
table {%
0 0.90044116973877
1 0.773794174194336
2 0.458823561668396
3 0.124205827713013
4 0.0230000019073486
};
\addlegendentry{10\% casting}
\addplot [semithick, color3, mark=*, mark size=3, mark options={solid}]
table {%
0 0.950986623764038
1 0.88098669052124
2 0.666719913482666
3 0.225173354148865
4 0.0344533920288086
};
\addlegendentry{5\% casting}
\addplot [semithick, color4, mark=*, mark size=3, mark options={solid}]
table {%
0 0.990400075912476
1 0.974615335464478
2 0.910523056983948
3 0.5
4 0.0657845735549927
};
\addlegendentry{1\% casting}
\end{axis}

\end{tikzpicture}
    \caption{Evolution of the proportion of isolated voters depending on the percentage of casting voter and the maximum outdegree, for the distance-based model ($n=500$, $\Delta = 5$). Averaged over $100$ instances.}
    \label{fig:app_participation_space}
\end{figure}

Figure \ref{fig:app_participation_space} shows the results for the weight-based method with synthetic data. In this method, agents are associated to a position in a 2D space and delegate to their closest neighbour, in increasing distance order. Consequently, it will create a lot of isolated clusters because if $u$ is a very close neighbour of $v$, then $v$ is a very close neighbour of $u$. This is why the proportion of isolated voters remains low even with $2$ delegation per voter, and we need at least $4$ delegations per voter to reach a proportion of isolated voters close to $0 \%$.
\par 
As a conclusion, it is clear that the impact of liquid democracy itself on the proportion of isolated voters in the network is already very high, as it enables people that cannot vote to take part in the election. However, this could be largely improved with backup delegation, and $2$ backup delegation are often sufficient to ensure an proportion of isolated voters close to $0 \%$.

\subsection{Comparison of Delegation Rules}
\paragraph{General comments}
As we already said in the main paper, our experiments show no large differences in the relative performance of the rules for the different methods that generate delegation graphs. For each method, we can separate rules that lead to a more uniform weight distribution (such as \bfd) and rules that ensure low ranks in delegation paths (such as \dfd). Here is a more detailed description of rules performances:
\begin{itemize}
    \item \bfd is by definition the best rule to minimize the length of paths in the $C$-branching. In particular, on every dataset the average lengths of path does not exceed $2$. It also creates $C$-branchings with an almoust uniform weight distribution among casting voters. Apart from these two metrics, \bfd is less convincing: The average rank of a selected edge is between $2$ and $3$, which means that \bfd uses almost always backup delegation. Moreover, \bfd leads to $C$-branching with a large unpopularity margin, making up to $50\%$ of the voters.
    \item \dfd behaves very differently than \bfd. In particular, it leads to a concentrated weight distribution and very long delegation paths, but ranks on delegation paths are usually very low. As \dfd is not a confluent rule, \minsumarb appears to be an interesting confluent alternative to \dfd.
    \minsumarb has similar or better performances on every metric except the weight distribution, in which \minsumarb performs poorly, with one casting voter having up to $30\%$ of all the network weight. Notably, \minsumarb leads to $C$-branchings with very low unpopularity margin, in most cases $0$. This also shows that in most of our generated instances there exists a popular branching (which is not true in general).
    \item \diffusion and \lexrank appear to be a good compromise between providing $C$-branchings  with with low ranks and low unpopularity while selecting shorter paths than \dfd and \minsumarb. \lexrank appears to slightly dominate \diffusion on all metrics. 
    \item Finally, \minsumseq achieves a really good trade-off between path lengths and ranks. In most cases, the average rank does not exceed $2$ and the average length is also close to $2$. However, this rule is close to \bfd, so it leads to $C$-branchings with high unpopularity margin.
\end{itemize}

Note also that some differences between the different networks are due to hyperparameters. For instance, a network with more edges will lead to higher rank for \bfd and longer paths for \dfd. Details are explained in Section \ref{subsec:param}.

\paragraph{Friendship-based method}

Figures \ref{fig:table_friendship_method}-\ref{fig:table_facebook} show the results for the friendship-based method. With this method, the structure of the network already ensure a good weight distribution among casting voters. In particular, the most prominent voter has $1\%$ of the total weight in the Facebook dataset. 

\paragraph{Prominence-based method}

Figures \ref{fig:table_popularity}-\ref{fig:table_twitter} show the results for the prominence-based method. Here, a lot of voters delegate to the same prominent agents. This leads to less cycle, and ''easier`` cases for the rules. In particular, the $C$-branchings returned by \lexrank and \diffusion have low unpopularity margin, and the one returned by \minsumarb have unpopularity of almost $0$. However, this also leads to very concentrated weight distribution, as prominent agents get all the weights.

\paragraph{Weight-based method}

Results of the weight-based method are different for the real networks (Figure \ref{fig:table_btca}-\ref{fig:table_btcotc}) and the synthetic networks (Figure \ref{fig:table_trust}). Indeed, the weights in real networks are based on trust levels between users of bitcoin exchange platforms. Therefore, the results are similar to the prominence-based method as there are prominent users that are very trusted. The weights in the synthetic networks are based on distance between the nodes, which leads to many cycles. For this reason, this lead to similar results than the friendship-based model.

\subsection{Varying the parameters} \label{subsec:param}
\par
We also considered different values for the various parameters of synthetic networks. For every method, changing the parameters did not affect the relative differences between the rules. However, the parameters can still change the absolute values a lot. Here are some interesting observations:

\begin{figure}[!p]
    \centering
    \input{fig_friends2}
    \caption{Friendship-based method with synthetic data, averaged over $1000$ instances ($n=1000$, $p_c = 0.2$, $\Delta = 4$, $\alpha = 2$)}
    \label{fig:table_friendship_method}
\end{figure}

\begin{figure}[!p]
    \centering
    \input{fig_Facebook-reduced}
    \caption{Friendship-based method with Facebook network, averaged over $10$ instances ($p_c = 0.2$, $\alpha = 1$)}
    \label{fig:table_facebook}
\end{figure}

\begin{figure}[!p]
    \centering
    \input{fig_pop1000}
    \caption{Prominence-based method with Synthetic data, averaged over $10$ instances ($n=1000$, $\Delta=4$, $p_c = 0.2$, $\beta = 1$)}
    \label{fig:table_popularity}
\end{figure}

\begin{figure}[!p]
    \centering
    \input{fig_Slashdot}
    \caption{Prominence-based method with Slashdot network, averaged over $10$ instances ($p_c = 0.2$, $\beta = 1$)}
    \label{fig:table_slashdot}
\end{figure}

\begin{figure}[!p]
    \centering
    \input{fig_Higgs}
    \caption{Prominence-based method with Twitter network, averaged over $3$ instances ($p_c = 0.2$, $\beta = 1$)}
    \label{fig:table_twitter}
\end{figure}

\begin{figure}[!p]
    \centering
    \input{fig_space}
    \caption{Weight-based method with spatial network (synthetic data), averaged over $100$ instances ($n=500$, $p_c = 0.1$, $\Delta = 6$)}
    \label{fig:table_trust}
\end{figure}

\begin{figure}[!p]
    \centering
    \input{fig_BitcoinAlpha}
    \caption{Weight-based method with Bitcoin Alpha network, averaged over $10$ instances ($p_c = 0.2$)}
    \label{fig:table_btca}
\end{figure}

\begin{figure}[!p]
    \centering
    \input{fig_BitcoinOTC}
    \caption{Weight-based method with Bitcoin OTC network, averaged over $10$ instances ($p_c = 0.2$)}
    \label{fig:table_btcotc}
\end{figure}

\begin{itemize}
    \item Increasing the number of voters in the network leads to a better division of the weight among casting voters. 
    \item Decreasing the proportion of casting voters leads to a network in which delegating paths are naturally longer and contain higher rank values. This also increases the unpopularity of the $C$-branchings (except for \minsumarb). It is interesting to see that for the prominence-based model, the unpopularity of the $C$-branchings returned by \lexrank and \diffusion is close to $0$ when at least $25\%$ of voters are casting voters.
    \item Increasing the number of outgoing edges per voter $\Delta$ has a different effect for each rule. As it creates more path possibilities, \bfd will pick shorter paths, but with higher ranks, and \dfd longer paths, but with smaller ranks. Finally, as \minsumseq is an compromise between these two rules, the average length and the average rank are both remaining equal when we increase $\Delta$.
    \item Increasing $\beta$ in the prominence-based model leads to a more concentrated weight distribution, as prominent agents become even more prominent. However, $\beta$ has no impact on other metrics. 
    \item Similarly, increasing $\alpha$ in the friendship-based model will lead to slightly longer paths, as it create more cycles, but it has no impact on other metrics.
\end{itemize}

\fi

\end{document}